\newcommand{\be}{\begin{equation}}
\newcommand{\ee}{\end{equation}}
\newcommand{\ba}{\begin{array}}
\newcommand{\ea}{\end{array}}
\newcommand{\bea}{\begin{eqnarray}}
\newcommand{\eea}{\end{eqnarray}}
\newclass{\StoqMA}{StoqMA}
\newclass{\StoqLH}{StoqLH}
\newclass{\TIM}{TIM}
\newclass{\HCB}{HCB}
\newclass{\HCD}{HCD}
\newcommand{\calH}{{\cal H }}
\newcommand{\calL}{{\cal L }}
\newcommand{\calN}{{\cal N }}
\newcommand{\calB}{{\cal B }}
\newcommand{\calD}{{\cal D }}
\newcommand{\calG}{{\cal G }}
\newcommand{\calE}{{\cal E }}
\newcommand{\calC}{{\cal C }}
\newcommand{\calS}{{\cal S }}
\newcommand{\calT}{{\cal T }}
\newcommand{\ZZ}{\mathbb{Z}}
\newcommand{\CC}{\mathbb{C}}
\newcommand{\eff}{\mathrm{eff}}
\newcommand{\sm}{\mathrm{sim}}
\newcommand{\tgt}{\mathrm{target}}
\newcommand{\diag}{\mathrm{diag}}
\newcommand{\extra}{\mathrm{extra}}
\newcommand{\ring}{\mathrm{chain}}
\newcommand{\main}{\mathrm{main}}
\newcommand{\all}{\mathrm{all}}
\newcommand{\la}{\langle}
\newcommand{\image}{\mathrm{Im}}
\newtheorem{dfn}{Definition}
\newtheorem{lemma}{Lemma}
\newtheorem{fact}{Fact}
\newtheorem{corol}{Corollary}
\newtheorem{theorem}{Theorem}
\title{On complexity of the quantum Ising model }
\newcommand{\IBM}{IBM  T.J. Watson  Research Center, Yorktown Heights, NY 10598, USA}
\newcommand{\MSR}{Quantum Architectures and Computation Group, Microsoft Research, Redmond, WA 98052, USA}
\author{Sergey Bravyi\footnote{\IBM} \and  Matthew Hastings\footnote{\MSR}}
\date{}
\begin{document}
\maketitle

\begin{abstract}
We study  complexity of several problems related to the  Transverse field Ising Model (TIM).
First, we consider the problem of estimating the ground state energy known as the Local Hamiltonian Problem (LHP).
It is shown that the LHP for TIM on degree-$3$ graphs is equivalent modulo polynomial reductions to
the LHP for  general $k$-local `stoquastic' Hamiltonians with any constant $k\ge 2$.
This result implies that estimating the ground state energy of TIM
on degree-$3$ graphs  is a complete problem for the complexity
class  $\StoqMA$ --- an extension  of the classical class $\MA$.
As a corollary, we complete the complexity classification of $2$-local
Hamiltonians  with a fixed set of interactions proposed recently by Cubitt and Montanaro.
Secondly, we study quantum annealing algorithms
for finding ground states
of classical spin Hamiltonians associated with hard optimization problems.
We prove that the quantum annealing  with TIM Hamiltonians
is equivalent modulo polynomial reductions to  the quantum annealing 
with a certain subclass of $k$-local stoquastic Hamiltonians.
This subclass includes all Hamiltonians representable as a sum of a $k$-local diagonal Hamiltonian
and a $2$-local stoquastic Hamiltonian. 
\end{abstract}

\newpage

\tableofcontents

\newpage

\section{Introduction and summary of results}

Numerical simulation of quantum many-body systems is a notoriously hard  problem.
 A particularly strong form of hardness known as $\QMA$-completeness~\cite{KSV} has been recently  established
for many natural  problems in this category.  Among them is
the problem of estimating the ground state energy 
 for certain physically-motivated quantum models
such as Hamiltonians  with nearest-neighbor interactions
on the two-dimensional~\cite{OT08} and one-dimensional~\cite{Aharonov09,Hallgren13}  lattices,
the Hubbard model~\cite{Schuch09,Childs13}, and the Heisenberg model~\cite{Schuch09,CM13}.
In contrast, a broad class of Hamiltonians known as sign-free or {\em stoquastic}~\cite{BDOT08}
has been identified for which certain simulation tasks become more tractable. 
By definition, stoquastic Hamiltonians must have real matrix elements
with respect to some fixed basis and all off-diagonal matrix elements must be non-positive.
Ground states of stoquastic  Hamiltonians are known to have real non-negative amplitudes 
in the chosen basis. Thus, for many purposes, the ground state can be viewed 
as a classical probability distribution which often enables efficient
simulation by quantum Monte Carlo 
algorithms~\cite{suzuki1977,prokof1996exact,sandvik1991,Trivedi1989}.
%\cite{suzuki1977,prokof1996exact,sandvik1991,Trivedi1989,Buonaura1998,Wessel2004}.
A notable example of a model in this category is the transverse field Ising model (TIM).
It has a Hamiltonian 
\begin{equation}
\label{tim}
H=\sum_{1\le u\le n} h_u X_u +g_u Z_u + \sum_{1\le u<v\le n} g_{u,v} Z_u Z_v.
\end{equation}
Here $n$ denotes the number of qubits (spins), $h_u,g_u,g_{u,v}$ are real coefficients,
and $X_u,Z_u$ are the Pauli operators acting on a qubit $u$. 
Note that $H$ is a stoquastic Hamiltonian in the standard $Z$-basis iff $h_u\le 0$ for all $u$.
This can always be achieved  by conjugating $H$ with $Z_u$. 
It is known that the ground state energy and the free energy of the TIM 
can be approximated with an additive error $\epsilon$ in time $\poly(n,\epsilon^{-1})$ using
Monte Carlo algorithms~\cite{Bravyi14}  in the special case when the Ising interactions
are ferromagnetic, that is, $g_{u,v}\le 0$ for all $u,v$.
Another important special case is the TIM defined on the one-dimensional lattice with $g_u=0$. 
In this case  the Hamiltonian Eq.~(\ref{tim}) is exactly solvable by the Jordan-Wigner transformation
and its eigenvalues can be computed analytically~\cite{Pfeuty1970}.
%SBB: what about g_u \ne 0 ? 
The ground state and the thermal equilibrium properties of the TIM have been studied in 
many different contexts including quantum phase transitions~\cite{Sachdev2007},
quantum spin glasses~\cite{Kopec1989,Laumann2008cavity}
and quantum annealing algorithms~\cite{Farhi2000,Boixo2013quantum,Ronnow2014,Shin2014quantum}.
In the present paper we address two open questions related to the TIM.
First, we consider the  problem of estimating the ground state energy of the TIM
and fully characterize its hardness in terms of the known complexity classes. 
Secondly we study quantum annealing algorithms with TIM Hamiltonians and show
that such algorithms can efficiently simulate a much broader class
of quantum annealing algorithms associated with many  important classical optimization problems.

To state our main  results  let us define  two classes of stoquastic Hamiltonians. 
Let $\TIM(n,J)$ be the set of all $n$-qubit 
 transverse field Ising Hamiltonians defined in Eq.~(\ref{tim})
such that  the coefficients $h_u,g_u,g_{u,v}$ have  magnitude at most $J$
for all $u,v$. 
 A TIM Hamiltonian is said to have interactions of degree $d$
iff each qubit is coupled to at most $d$ other qubits with $ZZ$ interactions.
Such Hamiltonian can be embedded into a degree-$d$ graph  such that 
only nearest-neighbor qubits interact. 
We note that the terms of $H$ that are linear in $Z_u$ can be absorbed into the Ising 
interaction part by introducing one ancillary qubit $a$
and replacing $g_u Z_u$ by $g_u Z_u Z_a$ for each $u$. 
This transformation does not change the spectrum of $H$ except for
doubling the multiplicity of each eigenvalue, see Section~\ref{sect:models} for details.

Let $\StoqLH(n,J)$ be the set of  stoquastic  $2$-local Hamiltonians $H$
on $n$ qubits with the maximum interaction  strength $J$.
By definition,  $H\in \StoqLH(n,J)$ iff
\[
H=\sum_{1\le u<v\le n} H_{u,v}, 
\]
where  $H_{u,v}$ is a
hermitian operator acting on the qubits $u,v$ such that 
$\|H_{u,v}\|\le J$ and all off-diagonal matrix elements of $H_{u,v}$ in the standard basis are real and non-positive. 
One can choose different operators $H_{u,v}$ for each pair of qubits. 
We shall provide a more explicit characterization of $2$-local stoquastic Hamiltonians
in terms of their Pauli expansion  in Section~\ref{sect:stoqLH}, see Lemma~\ref{lemma:stoq}.

Our first theorem asserts that  any $2$-local stoquastic Hamiltonian
can appear as an effective low-energy theory emerging from 
the TIM on a degree-$3$ graph. 

\begin{theorem}
\label{thm:main}
Consider any Hamiltonian $H\in \StoqLH(n,J)$ 
and a precision parameter $\epsilon>0$. 
There exist $n'\le \poly(n)$, $J'\le \poly(n,J,\epsilon^{-1})$, 
and a Hamiltonian $H'\in \TIM(n',J')$ such that \\
(1)  The $i$-th smallest eigenvalues of $H$ and $H'$ differ at most by $\epsilon$
for all $1\le i\le 2^n$. \\
(2) One can compute $H'$  in time $\poly{(n)}$. \\
(3)  $H'$ has interactions of degree $3$.
\end{theorem}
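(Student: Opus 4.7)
The overall plan is a perturbative gadget construction: I will build $H' \in \TIM(n',J')$ whose low-energy subspace of dimension $2^n$ carries an effective Hamiltonian approximating $H$, with spectral agreement on the first $2^n$ eigenvalues following from standard bounds on Schrieffer--Wolff / block-diagonal perturbation theory applied uniformly across the low-energy block. The first step is to use the Pauli-basis characterization of $\StoqLH$ announced in Lemma~\ref{lemma:stoq} to decompose $H$ into a polynomial number of elementary 2-local stoquastic terms of three types: (a) diagonal Ising terms $Z_u, Z_u Z_v$ with arbitrary sign, (b) transverse fields $-\alpha X_u$ with $\alpha \ge 0$, and (c) truly off-diagonal two-qubit terms such as $-X_u X_v$, $+Y_u Y_v$, and the mixed $X_u Z_v$ family (combined with compensating local $X$ fields so the sum remains stoquastic). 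Types (a) and (b) are already in TIM form; only (c) requires gadgetry.

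The core difficulty is synthesising class-(c) terms using only $Z$, $X$ local fields and $ZZ$ couplings, since TIM has no direct off-diagonal two-qubit operator: any effective $XX$ or $YY$ coupling between two main qubits must arise via virtual excitations of ancillas at order $\ge 3$ of perturbation theory. The key gadget is a \emph{mediator chain} from $u$ to $v$ consisting of $k$ ancillas joined by strong ferromagnetic bonds $-\Delta Z_u Z_{a_1}, -\Delta Z_{a_i} Z_{a_{i+1}}, \ldots, -\Delta Z_{a_k} Z_v$ and small local $X$ fields $-h X$ on every site. The chain has a two-fold ferromagnetic ground subspace, and at order $k{+}2$ the perturbative $X$ fields coherently flip the entire chain to produce an effective $-X_u X_v$ coupling of magnitude $\sim h^{k+2}/\Delta^{k+1}$ on the main qubits. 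Signs of off-diagonal couplings are manipulated by conjugating chosen main qubits by $Z$, a stoquasticity-preserving transformation which flips the sign of every off-diagonal $X$-type contribution incident to that qubit; variants of this basic chain gadget thus produce $-X_u X_v$, $+Y_u Y_v$, and the mixed $X_u Z_v$ terms in class (c). A constant number of ancillas per elementary stoquastic term suffices, so $n' = \poly(n)$.

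After all such gadgets are inserted, the interaction graph may have large degree. A standard TIM-only subdivision gadget (in the style of Oliveira--Terhal) then reduces every vertex to degree three: a high-degree vertex is replaced by a short chain of qubits joined by strong ferromagnetic $ZZ$ bonds, onto which the incident edges are distributed one per chain site, pinning the chain into its ferromagnetic sector where all sites act as a single effective qubit. The principal technical obstacle is quantitative: the perturbative error scales as a positive power of $h/\Delta$, and these errors compound over $\poly(n)$ many gadgets and across the entire $2^n$-dimensional low-energy block, so to guarantee additive accuracy $\epsilon$ on all first $2^n$ eigenvalues simultaneously one must take $\Delta$ polynomial in $n,J,\epsilon^{-1}$ and arrange the gadgets so that they do not interfere at the leading perturbative order (typically by separating their energy scales or by a single global Schrieffer--Wolff analysis). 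Verifying that $J' \le \poly(n,J,\epsilon^{-1})$ with these compounded error bounds, and that the effective-Hamiltonian error is controlled in operator norm so that the spectral agreement extends to all low-lying eigenvalues, is where the bulk of the remaining work lies.
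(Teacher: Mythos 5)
Your decomposition via Lemma~\ref{lemma:stoq} and the overall ``perturbative gadgets $+$ Schrieffer--Wolff'' framing match the paper, but the core gadget you propose for the genuinely off-diagonal terms does not work, and the paper's actual route around this difficulty is structurally very different.

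The mediator-chain gadget is the problem. You couple $u$ and $v$ through a chain $u\!-\!a_1\!-\!\cdots\!-\!a_k\!-\!v$ with \emph{strong} ferromagnetic $ZZ$ bonds on every link of the chain, including the links to $u$ and $v$. The unperturbed ground subspace of that chain is two-dimensional, spanned by $|0^{k+2}\rangle$ and $|1^{k+2}\rangle$. In the low-energy sector $u$ and $v$ are therefore pinned to agree: they are \emph{not} two independent logical qubits, and the ``chain flip'' process gives a logical $\overline{X}$ on a single cat qubit rather than an $X_u X_v$ coupling between two free qubits. If you instead weaken the bonds to $u,v$ so that they remain free, you can check order by order that TIM perturbations (single-site $X$ fields plus diagonal $ZZ$) only generate effective terms of the form (diagonal)$\,\times\,X_u$ or (diagonal)$\,\times\,X_v$, never $X_u X_v$, because the only operator in $V$ that takes a state out of (or back into) the unperturbed ground sector is a single-qubit $X$ on an ancilla; the main-qubit $X_u,X_v$ never enter the off-block parts $V_{\pm\mp}$. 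Intuitively, TIM gadgets of the naive kind are ``closed'' under perturbative reductions: they only produce more TIM-type effective terms. Getting the class-(c) interactions requires an additional structural idea, and that missing idea is where the paper's proof actually lives: the chain of reductions from TIM to hard-core dimers to hard-core bosons (Sections~\ref{sect:TIM2HCD}--\ref{sect:stoqLH}). There, a second-order process in a fixed-particle-number dimer sector gives a hopping operator $W_{u,v}\sim X_uX_v$ with an \emph{exactly constant} energy denominator (Lemma~\ref{lemma:dimers} forces the broken-dimer intermediate state to sit at energy exactly $1$), and the $XX\pm YY$ interactions emerge in a dual-rail encoding where logical hopping $\propto XX+YY$ is literally a single-particle hop. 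Without something playing this role, the ``compound the errors and take $\Delta$ polynomial'' program in your last paragraph has nothing correct to bound.

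A secondary issue: your degree-reduction step (``strong ferromagnetic chain pinned into its sector'') also hides a real obstruction. A deeply ferromagnetic chain of length $\ell$ has a logical $\overline{X}$ whose matrix element is exponentially small in $\ell$, so you cannot implement the transverse field on the degree-reduced logical qubit with polynomial interaction strengths. The paper gets around this by working with the 1D TIM near its critical point (Section~\ref{sect:TIM3}), where the ground-state splitting is $\Omega(\poly(1/m))$ and the $\langle\psi_1|Z_j|\psi_0\rangle$ form factor is $\Omega(m^{-1/8})$; establishing those bounds (Appendix~A) is a nontrivial ingredient that a generic ``Oliveira--Terhal subdivision gadget'' does not supply.
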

Here the maximum degree of all polynomial  functions is some fixed constant that does not
depend on any parameters (although we expect this constant to be quite large). 
The theorem has  important implications for classifying  complexity of 
the Local Hamiltonian Problem (LHP)~\cite{KSV,KKR06}.
Recall that the LHP is a decision problem where one has to decide whether
the ground state energy $E_0$ of a given Hamiltonian $H$ acting on $n$ qubits is
sufficiently small, $E_0\le E_{yes}$,  or sufficiently large, $E_0\ge E_{no}$.
Here  $E_{yes}<E_{no}$ are some specified thresholds such that 
$E_{no}-E_{yes}\ge \poly(n^{-1})$. The Hamiltonian must 
be representable as a sum of hermitian operators acting on at most $k$ qubits 
each, where $k=O(1)$ is a small constant. Each $k$-qubit operator 
must have norm at most  $\poly(n)$. Such Hamiltonians are known as $k$-local. 
Theorem~\ref{thm:main} implies that the LHP for $2$-local stoquastic Hamiltonians
has the same complexity as the LHP for TIM. Indeed, consider an instance of the LHP
for some Hamiltonian
$H\in \StoqLH(n,J)$ where  $J\le \poly{(n)}$. Choose a precision $\epsilon=(E_{no}-E_{yes})/3$
and  let $H'$ be the TIM Hamiltonian constructed  in Theorem~\ref{thm:main}.
Note that $H'$ acts on $\poly(n)$ qubits and has the interaction strength $\poly(n)$. 
Let $E_0'$ be the ground state energy of $H'$. Then $E_0\le E_{yes}$ implies
$E_0'\le E_{yes}+\epsilon\equiv E_{yes}'$ and $E_0\ge E_{no}$ implies
$E_0'\ge E_{no}-\epsilon \equiv E_{no}'$. Since $E_{no}'-E_{yes}'=(E_{no}-E_{yes})/3 \ge \poly{(n^{-1})}$,
the LHP for a $2$-local stoquastic Hamiltonian has been reduced to the LHP for TIM.
The converse reduction is trivial since any TIM Hamiltonian can be made stoquastic
by a local change of basis.  Thus we obtain
\begin{corol}
\label{corol:1}
The LHP for $2$-local stoquastic Hamiltonians has the same complexity
as  the LHP for  TIM with interactions of degree $3$, modulo  polynomial reductions. 
\end{corol}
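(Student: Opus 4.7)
The plan is to establish the claimed equivalence as two polynomial-time many-one reductions, one in each direction, relying on Theorem~\ref{thm:main} for the hard direction and a local change of basis for the easy one.

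For the reduction from $\StoqLH$ to $\TIM$, I would start with an arbitrary instance of the LHP specified by $H\in\StoqLH(n,J)$ together with thresholds $E_{yes}<E_{no}$ satisfying $E_{no}-E_{yes}\ge 1/\poly(n)$ and $J\le\poly(n)$, as required by the general LHP definition recalled in the excerpt. Setting the precision parameter to $\epsilon:=(E_{no}-E_{yes})/3$ and invoking Theorem~\ref{thm:main} produces a Hamiltonian $H'\in\TIM(n',J')$ on a degree-$3$ graph with $n'\le\poly(n)$ and $J'\le\poly(n,J,\epsilon^{-1})$, computable in time $\poly(n)$, whose eigenvalues match those of $H$ up to an additive error $\epsilon$. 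Because both $J$ and $\epsilon^{-1}$ are polynomially bounded in $n$, so is $J'$, which keeps each local term of $H'$ of norm $\poly(n)$ as demanded of an LHP instance. I would then set $E_{yes}':=E_{yes}+\epsilon$ and $E_{no}':=E_{no}-\epsilon$; the eigenvalue-matching condition sends YES instances to YES instances and NO instances to NO instances, while the promise gap $E_{no}'-E_{yes}'=(E_{no}-E_{yes})/3$ remains inverse-polynomial.

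For the converse direction, every $H\in\TIM(n,J)$ is already $2$-local, and it becomes stoquastic after conjugation by $U:=\prod_{u:\,h_u>0}Z_u$, a local unitary that preserves the spectrum. Indeed $Z_u X_u Z_u=-X_u$, while $Z_u$ commutes with every $Z_v$ and every $Z_uZ_v$, so $UHU^\dagger$ has only non-positive coefficients in front of the $X$-terms and therefore belongs to $\StoqLH(n,J)$. The same thresholds $E_{yes},E_{no}$ work, giving the trivial polynomial-time reduction in this direction.

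The main obstacle in the above argument is nothing other than the content of Theorem~\ref{thm:main}, which is assumed; the corollary itself is a short bookkeeping check. The only subtle point worth double-checking is that the polynomial blowup $J'\le\poly(n,J,\epsilon^{-1})$ is compatible with the polynomial-norm requirement of the LHP, which is automatic once $J$ and $\epsilon^{-1}$ are themselves polynomially bounded, as they are in any valid LHP instance.
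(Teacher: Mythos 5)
Your proposal is correct and follows essentially the same route as the paper: choose $\epsilon=(E_{no}-E_{yes})/3$, apply Theorem~\ref{thm:main}, shift the thresholds by $\epsilon$, and observe that the promise gap remains inverse-polynomial; the converse is the trivial local change of basis $Z_u\mapsto -X_u Z_u X_u$ noted in Section~\ref{sect:models}. The only difference is that you spell out the conjugation $U=\prod_{u:\,h_u>0}Z_u$ explicitly where the paper calls it ``trivial.''
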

It is known that the LHPs for $2$-local and $k$-local stoquastic Hamiltonians have the same
complexity for any constant $k\ge 2$, modulo polynomial reductions~\cite{BDOT08}.
Thus estimating
the ground state energy of TIM on a degree-$3$ graph is  as hard as estimating the ground state
energy of a general $k$-local stoquastic Hamiltonian for $k=O(1)$. 
Furthermore, the LHP for $6$-local stoquastic Hamiltonians 
is known to be a complete problem for the complexity class $\StoqMA$~\cite{BDOT08,BBT06}.
This is an extension of the classical class $\MA$ where the verifier can accept quantum states
as a proof. To examine the proof the verifier is allowed to apply classical reversible gates in a coherent fashion
and, finally, measure some fixed qubit in the $X$-basis. The verifier accepts the proof
if the measurement outcome is `$+$'. Let $P_{acc}(x)$ be the  acceptance probability of the verifier
for a given problem instance $x$
maximized over all possible proofs. 
A decision problem belongs to $\StoqMA$ if there exist a polynomial-size verifier as above and
 threshold probabilities $P_{yes}\ge P_{no}+\poly(n^{-1})$
such that $P_{acc}(x)\ge P_{yes}$ for any yes-instance $x$ and $P_{acc}(x)\le P_{no}$ for any no-instance $x$.
Here $n$ is the length of the problem instance $x$, see~\cite{BBT06}
for a formal definition.  Combining these known results and Corollary~\ref{corol:1} we obtain
\begin{corol}
\label{corol:2}
The Local Hamiltonian Problem for TIM with interactions of degree $3$ is complete for the complexity class $\StoqMA$.
\end{corol}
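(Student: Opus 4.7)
The plan is to establish Corollary~\ref{corol:2} as a routine composition of three polynomial-time reductions together with Corollary~\ref{corol:1}, splitting the argument into containment in $\StoqMA$ and $\StoqMA$-hardness.

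For containment, I would observe that any TIM Hamiltonian $H'\in\TIM(n',J')$ with $J'\le\poly(n')$ can be rendered stoquastic by conjugating with single-qubit $Z$ operators on the sites where $h_u>0$, as noted after Eq.~(\ref{tim}). The resulting operator is manifestly a $2$-local stoquastic Hamiltonian in the sense of $\StoqLH(n',\poly(n'))$, since each term $h_u X_u$, $g_u Z_u$, and $g_{u,v}Z_u Z_v$ is supported on at most two qubits, has bounded norm, and has non-positive off-diagonal matrix elements in the standard basis after the sign fix. Because the LHP for $2$-local stoquastic Hamiltonians is known to lie in $\StoqMA$ by~\cite{BDOT08,BBT06}, the same holds for the LHP restricted to TIM of degree $3$.

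For $\StoqMA$-hardness, I would chain the three reductions in the order: arbitrary $\StoqMA$ problem $\longrightarrow$ $6$-local stoquastic LHP $\longrightarrow$ $2$-local stoquastic LHP $\longrightarrow$ TIM LHP of degree $3$. The first arrow is the $\StoqMA$-completeness of the $6$-local stoquastic LHP established in~\cite{BDOT08,BBT06}; the second arrow is the locality reduction from $k$-local to $2$-local stoquastic Hamiltonians also from~\cite{BDOT08}; the third arrow is supplied by Corollary~\ref{corol:1}, whose only nontrivial content is Theorem~\ref{thm:main}. Composing these polynomial-time many-one reductions yields a polynomial-time reduction of every $\StoqMA$ problem to the LHP for TIM on a degree-$3$ graph.

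All the genuine work has already been absorbed into Theorem~\ref{thm:main} and its corollary, so there is no real obstacle; the only care needed is bookkeeping. Specifically, one must check that along the composition the number of qubits remains polynomial, the interaction strength stays bounded by $\poly(n)$, and the promise gap $E_{no}-E_{yes}$ shrinks only by a constant factor at each step, so that after the three reductions it remains at least $\poly(n^{-1})$. Each of the three reductions satisfies these conditions individually — Corollary~\ref{corol:1} explicitly preserves them by choosing $\epsilon=(E_{no}-E_{yes})/3$ in Theorem~\ref{thm:main}, and the earlier reductions from~\cite{BDOT08,BBT06} are known to do the same — which completes the argument.
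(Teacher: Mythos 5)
Your argument is correct and matches the paper's: both establish containment by observing that a TIM Hamiltonian is (after conjugating by $Z$'s) a special case of a $2$-local stoquastic Hamiltonian whose LHP lies in $\StoqMA$, and both establish hardness by chaining the $\StoqMA$-completeness of the $6$-local stoquastic LHP, the locality reduction of~\cite{BDOT08}, and Corollary~\ref{corol:1}. The paper simply states this composition more tersely; the substance is identical.
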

\noindent

Finally, Theorem~\ref{thm:main}  completes the complexity classification of $2$-local Hamiltonians 
with a fixed set of interactions proposed recently by Cubitt and Montanaro~\cite{CM13}.
The problem studied in~\cite{CM13} is defined as follows. 
Let $\calS$ be a fixed set of two-qubit hermitian operators. 
Consider a special case of 
the $2$-local LHP such that 
Hamiltonians are required to have a form $H=\sum_a x_a V_a$, where
$x_a$ is a  real coefficient and $V_a$ is an operator from $\calS$ applied to some
pair of qubits. For brevity, let us call the above problem  $\calS$-LHP.
The main result of Ref.~\cite{CM13} is that depending on the choice of $\calS$, the problem $\calS$-LHP is  either complete
for one of the complexity classes $\NP$, $\QMA$, or can be solved
in polynomial time on a classical computer, or can be reduced in polynomial time to the LHP for TIM.
In addition, one can efficiently determine which case is realized for a given choice of $\calS$. 
Combining this result and Corollary~\ref{corol:2} one obtains 
\begin{corol}
\label{corol:3}
Let $\calS$ be any fixed set of two-qubit hermitian operators. Then depending on $\calS$,
the problem $\calS$-LHP is  either complete
for one of the complexity classes $\NP$, $\StoqMA$, $\QMA$, or can be solved
in polynomial time on a classical computer.
\end{corol}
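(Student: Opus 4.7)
The plan is to combine the classification theorem of Cubitt and Montanaro with Corollary~\ref{corol:2} in a plug-and-play manner. First I would invoke the main dichotomy of~\cite{CM13}: for any fixed set $\calS$ of two-qubit Hermitian operators, one can efficiently decide from $\calS$ which of four cases the problem $\calS$-LHP falls into: (a) $\NP$-complete; (b) $\QMA$-complete; (c) polynomial-time solvable on a classical computer; (d) polynomially equivalent to the LHP for TIM. Cases (a)--(c) already match the statement of Corollary~\ref{corol:3} verbatim, so no further work is required for them.

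In case (d), I would substitute Corollary~\ref{corol:2} into the polynomial equivalence: since TIM-LHP on degree-$3$ graphs is $\StoqMA$-complete and $\StoqMA$-completeness is preserved under polynomial-time equivalence, $\calS$-LHP is also $\StoqMA$-complete. This covers the only remaining class in the statement and exhausts all possibilities for $\calS$, yielding the claimed four-way classification.

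The one point requiring care is the direction of reduction in case (d). Cubitt and Montanaro's construction provides a polynomial-time reduction from $\calS$-LHP to TIM-LHP, which together with Corollary~\ref{corol:2} places $\calS$-LHP in $\StoqMA$; but for $\StoqMA$-\emph{completeness} one also needs the reverse reduction from TIM-LHP to $\calS$-LHP. This reverse direction is supplied by the hardness half of~\cite{CM13}'s dichotomy, which shows that for every $\calS$ not falling into cases (a)--(c), the interactions in $\calS$ can simulate an arbitrary TIM Hamiltonian via (perturbative) gadgets. The main obstacle historically was therefore not this bookkeeping step but rather the ingredient now supplied by Theorem~\ref{thm:main}: prior to identifying a concrete complexity class for TIM-LHP, case (d) of~\cite{CM13} was left unresolved. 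With Corollary~\ref{corol:2} in hand, the remaining argument is simply transitivity of polynomial-time reductions, and Corollary~\ref{corol:3} follows immediately.
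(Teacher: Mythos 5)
Your proof takes essentially the same route as the paper: invoke the four-way dichotomy of~\cite{CM13} and substitute Corollary~\ref{corol:2} into the TIM case. You also correctly flag a subtlety the paper's prose glosses over — it only says $\calS$-LHP ``can be reduced in polynomial time to the LHP for TIM,'' while $\StoqMA$-completeness requires the reverse reduction as well; your reading that~\cite{CM13} actually establishes a two-way polynomial equivalence (``TIM-completeness'') is the right one and makes the argument airtight.
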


We also prove an analogue of  Theorem~\ref{thm:main} which
gives new insights on the  power of   quantum annealing (QA) algorithms~\cite{Farhi2000,Farhi2001quantum}
with TIM Hamiltonians which received a significant attention 
recently~\cite{Boixo2013quantum,Ronnow2014,Shin2014quantum}.
Recall that quantum annealing (QA)~\cite{Farhi2000,Farhi2001quantum}
attempts to find a global minimum of a real-valued function $f(x_1,\ldots,x_n)$ that depends on $n$
binary variables by encoding $f$ into a diagonal problem Hamiltonian $H_P=\sum_x f(x)|x\rangle\langle x|$
acting on $n$ qubits. 
To find the ground state of $H_P$ one chooses an adiabatic path 
$H(\tau)=(1-\tau)H(0)+\tau H_P$, $0\le \tau\le 1$, where $H(0)$ is some  simple Hamiltonian  usually chosen as 
the transverse magnetic field, $H(0)=-\sum_{u=1}^n X_u$. 
Initializing the system in the ground state of $H(0)$ and 
traversing the adiabatic path slowly enough 
one can approximately prepare the ground state of $H_P$.
The running time of QA algorithms scales as $\poly(n,\delta^{-1})$, where $\delta$ is the minimum
spectral gap of $H(\tau)$, see~\cite{Farhi2000,Farhi2001quantum,JRS07}.
We focus on the special case of QA such that the objective function $f(x_1,\ldots,x_n)$ is
a sum of terms that  depend on at most $k$ variables each. Here $k=O(1)$ is some small constant. 
This includes well-known  optimization problems such as $k$-SAT,  MAX-$k$-SAT
and many  variations thereof.  We show that any quantum annealing algorithm as above
can be efficiently simulated by the quantum annealing with TIM Hamiltonians.
The simulation has a  slowdown at most $\poly(n,\delta^{-1})$.

Fix some integer $k\ge 2$. We will say that $H$  is a 
$(2,k)$-local stoquastic Hamiltonian  iff $H$
is a sum of a $2$-local stoquastic Hamiltonian and a $k$-local
diagonal Hamiltonian. Let $\StoqLH^*(n,J)$ be the set of all $(2,k)$-local stoquastic Hamiltonians 
on $n$-qubits with the maximum  interaction strength $J$. 
\begin{theorem}
\label{thm:QA}
Consider any Hamiltonian $H\in \StoqLH^*(n,J)$ with a non-degenerate ground state $|g\rangle$ and a spectral
gap $\delta$. 
There exist $n'\le \poly(n)$, $J'\le \poly(n,J,\delta^{-1})$, 
a Hamiltonian $H'\in \TIM(n',J')$, and an isometry 
 $\calE\, : \, (\CC^2)^{\otimes n} \to (\CC^2)^{\otimes n'}$ such that \\
(1) $H'$ has a non-degenerate ground state $|g'\rangle$ and a  spectral gap at least $\delta/3$.\\
(2) $\| |g'\rangle - \calE |g\rangle \| \le 1/100$. \\
(3) The isometry $\calE$ maps basis vectors to basis vectors.\\
(4) One can compute $H'$ and the action of $\calE,\calE^\dag$ on any basis vector  in time $poly(n)$. \\
\end{theorem}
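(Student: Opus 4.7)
The plan is to reduce $H$ to a $2$-local stoquastic Hamiltonian $\tilde H$ on an extended Hilbert space, together with a basis-to-basis isometry $\calE$ relating the ground states, and then invoke Theorem~\ref{thm:main} to convert $\tilde H$ into a TIM Hamiltonian $H'$ whose low-lying spectrum approximates that of $\tilde H$ to within $\delta/6$.

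First, write $H = H_2 + H_D$ with $H_2 \in \StoqLH(n,J)$ and $H_D = \sum_{|S|\le k} c_S \prod_{i\in S} Z_i$ containing at most $\mathrm{poly}(n)$ monomials.  I would then perform a classical Rosenberg-style quadratization on $H_D$: for each monomial of degree $\ge 3$, introduce an ancilla encoding a Boolean subproduct $x_{i_1} x_{i_2}$ of two of its variables, together with a $2$-local Ising penalty that vanishes iff the ancilla is correct, and iterate.  After $\mathrm{poly}(n)$ steps this yields a $2$-local diagonal Hamiltonian $\hat H_D$ and a classical penalty $V\ge 0$ on $n+m$ qubits ($m=\mathrm{poly}(n)$) whose zero-energy subspace is spanned by the basis vectors $\{|x\rangle|a(x)\rangle\}$, with $a(x)$ the correct ancilla assignment, and on which $\hat H_D$ acts as $H_D$.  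Define $\calE:\,|x\rangle\mapsto|x\rangle|a(x)\rangle$, which is manifestly basis-to-basis and efficiently computable, and form $\tilde H = H_2\otimes I_{\mathrm{anc}} + \hat H_D + M V$ with $M = \mathrm{poly}(n,J,\delta^{-1})$ to be tuned.  Since $\tilde H$ is $2$-local stoquastic, feeding it and precision $\delta/6$ into Theorem~\ref{thm:main} yields $H'\in\TIM(n',J')$ whose full spectrum is within $\delta/6$ of $\tilde H$'s; hence $H'$ has a non-degenerate ground state close to $\tilde H$'s and a spectral gap at least $\delta/3$, and a triangle inequality closes the argument provided $\tilde H$'s own ground state is within $1/200$ of $\calE|g\rangle$ and has gap at least $2\delta/3$.

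The hard step will be establishing this last claim.  Because $V$ is diagonal, any off-diagonal term of $H_2$ that flips an original qubit $x_i$ carries a correct-ancilla basis vector into the incorrect-ancilla subspace (every ancilla $a_S$ with $i \in S$ becomes wrong), so the first-order projection of $H_2$ onto the ground space of $M V$ kills all off-diagonal action and the naive effective theory becomes purely classical, bearing no resemblance to $H$.  To repair this, the construction must be supplemented by additional $2$-local off-diagonal ``correlated-flip'' gadget terms so that at second order in perturbation theory each single-qubit flip is accompanied, via intermediate virtual states with wrong ancillas, by the simultaneous flips of its dependent ancillas; this restores an effective Hamiltonian that reproduces $H_2 + H_D$ on the correct-ancilla subspace up to an overall shift and $O(1/M)$ errors, while keeping every coupling inside $\StoqLH$.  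Designing these gadgets, checking that all couplings remain stoquastic, and turning the formal perturbative expansion into a quantitative bound on $\||g'\rangle - \calE|g\rangle\|$ and on the gap of $\tilde H$ is where the bulk of the technical work will lie.  An alternative route is to skip the intermediate effective Hamiltonian altogether and run the same estimates through a direct Schur-complement / matrix-perturbation analysis, but either way the most delicate point is to show that the original gap $\delta$ survives the reduction up to the constant factor $1/3$.
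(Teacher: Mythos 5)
Your proposed fix for the core difficulty does not work. In Rosenberg quadratization, each ancilla $a_S$ encodes a product of two of the original Boolean variables; since a single variable $x_i$ can appear in $\mathrm{poly}(n)$ of the monomials of a general $k$-local diagonal $H_D$, a single $X_i$-type flip coming from an off-diagonal term of $H_2$ invalidates $\mathrm{poly}(n)$ of the $a_S$'s simultaneously. Restoring the ancilla constraints after such a flip is therefore a process of order $1+\mathrm{poly}(n)$ in the penalty $MV$, not second order: any $2$-local ``correlated-flip'' gadget can repair at most one ancilla per power of $V$, so the second-order effective Hamiltonian still projects to zero on the off-diagonal part of $H_2$, and a perturbative series truncated at any constant order cannot reproduce $H_2$. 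There is also a second, independent gap: Theorem~\ref{thm:main} as stated only bounds eigenvalue differences, and its proof routes through the degree-$3$ reduction of Section~\ref{sect:TIM3} whose encoding (ground-space pairs of a critical Ising chain) does \emph{not} map basis vectors to basis vectors, so invoking it cannot produce the isometry required by condition~(3), nor directly the ground-state and gap claims of conditions~(1)--(2).

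The paper avoids both problems by never quadratizing $H_D$ at the qubit level. Instead, Theorem~\ref{thm:QA} is proved by running the reduction chain of Sections~\ref{sect:TIM2HCD}--\ref{sect:stoqLH} (omitting the degree-$3$ step), in which the $n$ target qubits are encoded as $n$ hard-core bosons in a dual-rail representation. Under that encoding a $k$-local diagonal projector $|x\rangle\langle x|_M$ becomes $\prod_{u\in S}(I-n_u)$, and Section~\ref{sect:multi} shows how to generate precisely such multi-particle interactions at second order using one extra \emph{permanently occupied} node $a(\alpha)$ per term, whose attempted hop to a neighbour $b(\alpha)$ is blocked by the range-$2$ constraint whenever any node of $S$ is occupied. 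Because the gadget ancilla is a fixed particle whose state does not depend on the target configuration, hopping of the dual-rail particles never invalidates it, so the correlated-flip problem never arises. All encodings along this chain map basis vectors to basis vectors, and the quantitative ground-state/gap statements come from the simulation machinery of Section~\ref{sect:sim} (Lemmas~\ref{lemma:sim1}--\ref{lemma:sim3}), not from Theorem~\ref{thm:main}.
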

Here the maximum degree of all polynomial functions depends only on the locality
parameter $k$. We note that one can replace the constant $1/100$ in condition~(2) by an arbitrary
precision parameter $\eta>0$. Then the same theorem holds with a scaling $J'\le \poly(n,J,\delta^{-1},\eta^{-1})$. 
One can also impose a restriction that the Hamiltonian $H'$ has interactions of degree-$3$.
Then a similar theorem holds, but the isometry $\calE$ has slightly more complicated 
properties, see Section~\ref{sect:proof} for details. 

Let us discuss implications of the theorem.
Suppose $H(\tau)\in  \StoqLH^*(n,J)$ is an adiabatic path such that
$H(1)=H_P$ is the problem Hamiltonian and $H(0)=-\sum_{u=1}^n X_u$. 
We assume that $H(\tau)$ has a non-degenerate ground state $|g(\tau)\rangle$
and a spectral gap at least $\delta$ for all $\tau$. 
Also we assume that 
$J\le \poly(n)$. Since $X_u$ can be adiabatically rotated to  $Z_u$
without closing the gap, we can modify the path such that 
$H(0)=-\sum_{u=1}^n Z_u$. Then the initial ground state is $|g(0)\rangle=|0^{\otimes n}\rangle$.
Applying Theorem~\ref{thm:QA} to each Hamiltonian $H(\tau)$
one obtains a family of TIM Hamiltonians $H'(\tau)$ such that $H'(\tau)$ has a non-degenerate ground state $|g'(\tau)\rangle\approx \calE |g(\tau)\rangle$,
the spectral gap at least $\delta/3$, and the  interaction strength at most $\poly(n,\delta^{-1})$. 
We will show that the map $H\to H'$ is sufficiently smooth, so that the family
$H'(\tau)$, $0\le \tau\le 1$, defines an adiabatic path
and  the time it takes
to traverse the paths $H(\tau)$ and $H'(\tau)$ differ at most by a factor $\poly(n,\delta^{-1})$,
see Section~\ref{sect:proof} for details.  
Therefore one can (approximately) prepare the final state $|g'(1)\rangle$ by 
initializing the system in the basis state $\calE |0^{\otimes n}\rangle\approx |g'(0)\rangle$ and 
traversing the path $H'(\tau)$.
Measuring every qubit of the final state $|g'(1)\rangle$ in the $Z$-basis one obtains a  string
of outcomes $x\in \{0,1\}^{n'}$ such that $\calE |g(1)\rangle\approx |x\rangle$. 
Then $|g(1)\rangle\approx \calE^\dag |x\rangle$, that is, 
the ground state of $H_P$ can be efficiently computed from  $x$. 
Thus we obtain 
\begin{corol}
\label{cor:4}
Any quantum annealing  algorithm with $(2,k)$-local stoquastic Hamiltonians
can be simulated by a quantum annealing  algorithm with TIM Hamiltonians.
The simulation has overhead at most $\poly(n,\delta^{-1})$, where $n$ is the number of qubits
and $\delta$ is the minimum spectral gap of the adiabatic path. 
\end{corol}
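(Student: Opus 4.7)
The plan is to apply Theorem~\ref{thm:QA} pointwise along the adiabatic path $H(\tau)\in\StoqLH^*(n,J)$, argue that the resulting family $H'(\tau)$ of TIM Hamiltonians forms a genuine gapped adiabatic path, and then decode the final measurement outcome using the isometry $\calE$. First I would normalize the input: since $X_u$ can be rotated to $Z_u$ by a single-qubit Hadamard, and this rotation can be realized by a short gapped adiabatic segment with $O(1)$ overhead, one may assume $H(0)=-\sum_{u=1}^n Z_u$ with unique ground state $|0^{\otimes n}\rangle$ and spectral gap $\geq\delta$ along the whole path. Applying Theorem~\ref{thm:QA} at each $\tau$ yields $H'(\tau)\in\TIM(n',J')$ with $n'\le \poly(n)$, $J'\le \poly(n,J,\delta^{-1})$, non-degenerate ground states $|g'(\tau)\rangle$ obeying $\||g'(\tau)\rangle-\calE|g(\tau)\rangle\|\le 1/100$, and spectral gap at least $\delta/3$.

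Next I would need to check that $\tau\mapsto H'(\tau)$ is smooth enough to support the adiabatic theorem on the primed system. This is the main obstacle, since Theorem~\ref{thm:QA} only supplies pointwise guarantees. The way to handle it is to inspect the perturbation-gadget construction that produces $H'$: the qubit count $n'$, the interaction graph of $H'$, and the identity of the isometry $\calE$ should be determined once from $(n,k,J,\delta)$ alone, while only the scalar coupling coefficients of $H'(\tau)$ vary with $\tau$, and they do so as fixed rational functions of the coefficients of $H(\tau)$. Under this structural guarantee, $\|\partial_\tau H'(\tau)\|$ is controlled by $\|\partial_\tau H(\tau)\|$ times a factor $\poly(n,J,\delta^{-1})$, and the standard gapped adiabatic theorem applied to $H'(\tau)$ with gap $\delta/3$ produces a runtime that is at most a $\poly(n,\delta^{-1})$ blow-up over the runtime needed to traverse $H(\tau)$.

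Finally I would assemble the simulation. Initialize the $n'$-qubit register in the basis state $\calE|0^{\otimes n}\rangle$; this is possible since condition~(3) of Theorem~\ref{thm:QA} states that $\calE$ maps basis vectors to basis vectors and condition~(4) lets us compute $\calE$ on a basis vector in polynomial time. Because $\||g'(0)\rangle-\calE|0^{\otimes n}\rangle\|\le 1/100$, the initial state has overlap at least $99/100$ with $|g'(0)\rangle$, and traversing $H'(\tau)$ adiabatically prepares a state $|\psi\rangle$ close to $|g'(1)\rangle\approx\calE|g(1)\rangle$. Measuring every qubit of $|\psi\rangle$ in the $Z$-basis yields, with probability bounded below by a constant (boostable by repetition), a string $x\in\{0,1\}^{n'}$ satisfying $|x\rangle\approx\calE|g(1)\rangle$. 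Applying $\calE^\dag$ classically to $|x\rangle$, again using condition~(4), recovers a description of $|g(1)\rangle$, so the entire simulation incurs overhead $\poly(n,\delta^{-1})$ as claimed.
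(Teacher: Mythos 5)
Your overall strategy---pointwise application of Theorem~\ref{thm:QA} at each $\tau$, a smoothness argument for the map $\tau\mapsto H'(\tau)$, and decoding via the basis-vector isometry $\calE$---is the same as the paper's. The genuine gap is in the smoothness step, which you treat as a structural assertion rather than something that needs to be arranged. You claim that the coupling coefficients of $H'(\tau)$ are ``fixed rational functions'' of the coefficients of $H(\tau)$ and that the interaction graph and $\calE$ are determined once. Neither holds for the actual gadgets. In the reduction of Section~\ref{sect:stoqLH} the simulator coefficients contain $p_\alpha^{1/3}$ and $p_\alpha^{2/3}$ (and elsewhere one finds $\sqrt{p_\alpha}$, $\sqrt{t_{u,v}}$, $\sqrt{t/t_{u,v}}$, etc.); these are not rational, and their $\tau$-derivatives diverge when $p_\alpha(\tau)\to 0$, so the bound $\|\partial_\tau H'\|\le\poly(n,J,\delta^{-1})\,\|\partial_\tau H\|$ simply fails whenever a coupling passes through zero along the path. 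Moreover, the number $m'$ of ancillary nodes in the simulator graph (and hence $n'$ and the isometry $\calE$) is literally defined as the number of interactions $H_\alpha$ of type (3) or (4) \emph{present} in $H_\tgt$; if the set of nonzero $p_\alpha$ changes with $\tau$, the gadget topology and the encoding change too, contradicting your ``determined once'' premise.

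The paper resolves both problems with an explicit regularization: replace each $p_\alpha$ by $\tilde{p}_\alpha=\sqrt{p_\alpha^2+p_{\min}^2}$ for a small cutoff $p_{\min}$, chosen so that the induced perturbation of $H_\tgt$ is much smaller than the allowed simulation error $\epsilon_t$; the analogous cutoff is introduced in the other reductions. With the cutoff every relevant coefficient is bounded below by $p_{\min}>0$, so the graph, $n'$, and $\calE$ are genuinely $\tau$-independent, the fractional powers are smooth with derivatives bounded by $\poly(n,p_{\min}^{-1})=\poly(n,\delta^{-1})$, and the Jansen--Ruskai--Seiler adiabatic theorem~\cite{JRS07} gives the $\poly(n,\delta^{-1})$ runtime. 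You identified the right question to ask about smoothness, but the ``structural guarantee'' you invoke is precisely what must be engineered, not observed, and your argument is incomplete without the cutoff.
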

In the rest of this section we informally sketch the proof of the main theorems,
discuss several open problems, and outline organization of the paper. 

{\bf Sketch of the proof.}
The proof of Theorems~\ref{thm:main},\ref{thm:QA} relies on
perturbative reductions~\cite{KKR06,OT08} 
and the Schrieffer-Wolff transformation~\cite{SW66,BDLT08,BDL11}.
At each step of the proof we work with two  quantum models: 
a simulator Hamiltonian $H_\sm$ acting on some Hilbert space $\calH$ 
and a  target Hamiltonian $H_\tgt$ acting on a certain subspace\footnote{More precisely, we identify $H_\tgt$ with a Hamiltonian acting on the subspace $\calH_-$ using a suitable encoding.}
$\calH_-\subseteq \calH$.
We represent $\calH_-$ as the  low-energy subspace of a suitable Hamiltonian $H_0$
which has a large energy gap $\Delta$ for all eigenvectors orthogonal to $\calH_-$. 
We choose $H_\sm=H_0+V$, where $V$ is a weak perturbation such that  $\|V\|\ll \Delta$. 
We show that $H_\tgt$ can be obtained from $H_\sm$ as an effective low-energy Hamiltonian
calculated using a few lowest orders of  the perturbation theory. 
More precisely,  $H_\tgt\approx P_- U (H_0+V) U^\dag P_-$, where
$P_-$ is the projector onto $\calH_-$ and $U$ is a unitary operator on $\calH$
known as the Schrieffer-Wolff transformation. The latter brings $H_0+V$ into a block-diagonal form
such that $U (H_0+V) U^\dag$ preserves  the subspace $\calH_-$. 
We show that the low-lying eigenvalues and eigenvectors of $H_\sm$ approximate the respective eigenvalues 
and eigenvectors of $H_\tgt$
with an error that can made arbitrarily small by choosing large enough $\Delta$.
\begin{table}[h]
\begin{center}
\begin{tabular}{|c|}
\hline
TIM, degree-$3$ graph \\
\hline
 TIM, general graph\\
\hline
 Hard-core dimers,  triangle-free graph \\
\hline
 Hard-core bosons, range-$2$  \\
\hline
Hard-core bosons, range-$1$  \\
\hline
 Hard-core bosons, range-$1$, controlled hopping  \\
\hline
$2$-local stoquastic Hamiltonians  \\
\hline
\end{tabular}
\end{center}
\caption{Perturbative reductions used in the proof of Theorem~\ref{thm:main}. Each model  is obtained as an
effective low-energy Hamiltonian for the model located one row above. 
The hard-core bosons (HCB) model describes a multi-particle quantum walk on a graph.
The Hamiltonian consists of a hopping term, on-site chemical potential, and arbitrary two-particle interactions.
Different particles must be separated from each other by a certain minimum distance
that we call a range of the model. HCB is closely related to the Bose-Hubbard model. 
 The hard-core dimers model is analogous to HCB except that admissible particle configuration
must consist of nearest-neighbor pairs of particles that we call dimers. 
Different dimers must be separated from each other by a certain minimum distance. 
A rigorous definition of the models is given in Section~\ref{sect:models}.
\label{tab:models}}
\end{table}
We apply the above step recursively several times such that 
the target Hamiltonian at the $t$-th step becomes the simulator Hamiltonian at the
$(t+1)$-th step.  The recursion starts from the TIM with interactions of degree-$3$ at the highest energy scale,
goes through several intermediate models listed in Table~\ref{tab:models}, and  arrives at 
a given   $2$-local or $(2,k)$-stoquastic Hamiltonian at the lowest energy scale.
Overall, the proof  requires nine different reductions\footnote{Some of our reductions
are `trivial' in the sense that they simply restrict a Hamiltonian to a certain subspace. 
The proof contains only six `non-trivial' reductions that actually change the Hamiltonian.}.
To simplify the analysis of recursive reductions we introduce a 
general  definition of a simulation that quantifies how close are two different models in terms
of their low-lying eigenvalues and eigenvectors. 
Our definition is shown to be stable under the composition of simulations.

For almost all of our reductions the  Hamiltonian $H_0$ is diagonal in the standard basis,
so that all eigenvalues and eigenvectors of $H_0$ can be easily computed. 
The only exception is the reduction from TIM with interactions of degree-$3$ to a general TIM.
For this reduction we encode each qubit of the target model into the approximately two-fold degenerate
ground subspace of the one-dimensional TIM on a chain of a suitable length.
Accordingly, the Hamiltonian $H_0$ describes a collection of one-dimensional TIMs. 
We simulate the logical Ising interaction $Z_uZ_v$ between some pair of logical qubits $u,v$ 
by applying the physical interaction
$Z_i Z_j$ to a properly chosen pair of qubits $i\in L_u$ and $j\in L_v$, where $L_u$ is the chain
encoding a logical qubit $u$.  The logical transverse field $X_u$ is automatically generated due to the energy splitting
between the ground states of $L_u$. 
The analysis of this reduction  exploits recent exact results on the form-factors of the one-dimensional TIM~\cite{formfactor}.

We emphasize that the word ``reduction" is used in two distinct senses. In the present paper  we speak of a perturbative reduction {\it from} a Hamiltonian $H_\sm$ {\it to} a Hamiltonian $H_\tgt$ when $H_\tgt$ is the effective low-energy Hamiltonian
derived from $H_\sm$, following terminology in physics.  However, if $H_\tgt$ belongs to some particular class 
of Hamiltonians $\calT$ and $H_\sm$ belongs to  some subclass $\calS\subseteq \calT$, this is a reduction {\it from} the class $\calT$ {\it to} the class $\calS$, according to  terminology in computer science. 

{\bf Open problems.} Our work raises several questions. 
First, we expect that Theorems~\ref{thm:main},\ref{thm:QA} can be extended in a number of ways.
For example,  one may ask whether the analogue of Theorem~\ref{thm:main}
holds for TIM Hamiltonians restricted to particular families of graphs,
such as planar graphs or regular lattices. We note that a simple modification of our degree reduction method
based on the one-dimensional TIM produces a simulator Hamiltonian which can be
embedded into   the 3D lattice of dimensions $n\times n \times 2$ with periodic boundary conditions.
 We expect that applying additional 
perturbative reductions such as those described in Ref.~\cite{OT08}
can further simplify the lattice.  Likewise, we expect that Theorem~\ref{thm:QA} can be extended
to the case when $H$ is a general $k$-local stoquastic Hamiltonian by applying perturbative reductions
of Ref.~\cite{BDOT08}. 

A challenging open question is whether  TIM  Hamiltonians defined on a 2D lattice 
can realize the topological quantum order.
 It has been recently shown that  the hard-core bosons 
model defined on the kagome lattice has a topologically ordered ground state
for a certain range of parameters~\cite{Isakov11,Isakov12}. A preliminary analysis shows that 
the chain of reductions from TIM to hard-core bosons described  in the present paper can
be modified such that all intermediate Hamiltonians have geometrically local interactions.
Assuming that the unphysical polynomial scaling of interactions in the simulator
Hamiltonian can be avoided~\cite{BDLT08,Cao2014perturbative}, this points towards existence of topologically ordered
phases described by TIM Hamiltonians.

Finally,  a big open question is whether QA algorithms with TIM  Hamiltonians
can be efficiently simulated classically. It has been recently shown that the general purpose quantum
Monte Carlo algorithms fail to simulate certain  instances  of the QA with TIM  efficiently~\cite{Hastings13},
even though these instances have a non-negligible minimum spectral gap. 
This leaves a possibility that  some more specialized algorithms taking advantage of the special structure of TIM Hamiltonians
can succeed even though the general purpose algorithm fail.  
 Our results demonstrate that this is unlikely, since simulating the QA
 with TIM is as hard as simulating the QA with much more general $(2,k)$-local stoquastic Hamiltonians.

The paper is organized as follows. 
Section~\ref{sect:models} contains a rigorous definition of the  models
listed in Table~\ref{tab:models}. Our main technical tools are introduced in Sections~\ref{sect:sim},\ref{sect:PT}
which present a general definition of a simulation, 
describe perturbative reductions based on  the Schrieffer-Wolff transformation,  and prove
several   technical lemmas used in the rest of the paper.
Section~\ref{sect:TIM3} shows how to simulate a general TIM Hamiltonian using
a special case of TIM with interactions of degree-$3$. 
Sections~\ref{sect:TIM2HCD}-\ref{sect:stoqLH} describe a chain of perturbative reductions 
between the models listed in Table~\ref{tab:models}.
These reductions are combined together in Section~\ref{sect:proof} which
contains the proof of Theorems~\ref{thm:main},\ref{thm:QA}.
Finally, Appendix~A proves certain  bounds on  eigenvalues and form-factors
of the one-dimensional TIM which are used in Section~\ref{sect:TIM3}.

\section{Hard-core bosons and dimers}
\label{sect:models}

Consider a graph $G=(U,E)$ with a set of $n$ nodes
$U$ and a set of edges $E$.  Define a Hilbert space  $\calB\cong (\CC^2)^{\otimes n}$ 
with an orthonormal basis  $\{|S\rangle\, : \, S\subseteq U\}$ such that basis
vectors are  labeled by subsets of nodes $S$.
We shall  identify subsets of nodes with configurations of particles that live at nodes of the graph.
Each node can be either empty or occupied by a single particle. 
For any node $u\in U$ define a  particle number operator  $n_u$ 
such that $n_u |S\rangle= |S\rangle$ if $u\in S$ and $n_u|S\rangle=0$ otherwise. 
We shall often consider diagonal Hamiltonians of the following form:
\begin{equation}
\label{Hdiag}
H_\diag= \sum_{u\in U} \mu_u n_u  + \sum_{\{u,v\}\subseteq  U} \omega_{u,v} n_u n_v.
\end{equation}
Here the second sum runs over all two-node subsets   (not only nearest neighbors).
The coefficients $\mu_u$ and $\omega_{u,v}$ can be viewed as a
chemical potential and a two-particle interaction potential respectively. 

Let us now define a hopping operator $W_{u,v}$. Here $u,v\in U$
are arbitrary nodes such that $u\ne v$.
By definition, $W_{u,v}$ annihilates any state $|S\rangle$  in which both nodes $u,v$ are occupied or
both nodes are empty.  If one of the nodes $u,v$ is occupied and the other node is empty,
$W_{u,v}$ transfers a particle from $u$ to $v$ or vice verse. 
Matrix elements of $W_{u,v}$ in the chosen basis are 
\begin{equation}
\label{hop}
\langle S'|W_{u,v}|S\rangle=\left\{ \ba{rcl}
1 & \mbox{if} & \mbox{$u\in S$, $v\notin S$, and $S'=(S\setminus u)\cup v$}.\\ 
1 & \mbox{if} & \mbox{$v\in S$, $u\notin S$, and $S'=(S\setminus v)\cup u$}, \\
0 && \mbox{otherwise}. \\
\ea
\right.
\end{equation}
Let $m,r\ge 1$ be fixed integers. Define a subspace  $\calB_m\subset \calB$ 
spanned by all subsets $S\subseteq U$ with exactly  $m$ nodes.
We shall refer to $\calB_m$ as an {\em $m$-particle sector}.
Obviously, the operators $W_{u,v}$ and $n_u$ preserve $\calB_m$.
A subset of nodes $S$ is said to be {\em $r$-sparse} 
iff the graph distance between any distinct pair of nodes $u,v\in S$ is at least $r$.
Define a subspace $\calB_{m,r}\subseteq \calB_m$ 
spanned by all $r$-sparse subsets $S\subseteq U$ with exactly $m$ nodes. 
By definition, any subset of nodes is $1$-sparse, so that $\calB_{m,1}= \calB_m$.
Note that the operators $W_{u,v}$ generally do not preserve $\calB_{m,r}$.
Below we consider  hopping operators $W_{u,v}$ projected onto the subspace $\calB_{m,r}$.
Matrix elements of a projected hopping operator are defined by 
Eq.~(\ref{hop}), where $S$ and $S'$ run over all $r$-sparse subsets of $m$ nodes.

Our first model is called  {\em hard-core bosons} (HCB). It is defined on the
Hilbert space $\calB_{m,r}$, where $m$ and $r$ are fixed  parameters.
We shall refer to  $r$ as the {\em range} of the model.
The Hamiltonian is 
\begin{equation}
\label{HCB}
H=-\sum_{(u,v)\in E} t_{u,v} W_{u,v}+ H_\diag.
\end{equation} 
Here $H_\diag$ is defined by Eq.~(\ref{Hdiag}) and all 
operators  are projected onto the subspace $\calB_{m,r}$. 
 Thus $W_{u,v}$ moves a particle  only if this does not
 violate the $r$-sparsity condition. Otherwise $W_{u,v}$ annihilates a state. 
 The coefficients $t_{u,v}$ are hopping amplitudes. 
We shall always assume that
\[
t_{u,v}\ge 0
\]
for all $u,v$. The coefficients  $\mu_u$ and $\omega_{u,v}$ in $H_\diag$ may have arbitrary signs. 
Note that $H$ is a stoquastic Hamiltonian. 
Let $\HCB_r(n,m,J)$ be the set of  Hamiltonians 
describing the $m$-particle sector of range-$r$ hard-core bosons on a graph 
with $n$ nodes   such that all the coefficients $\mu_u,\omega_{u,v},t_{u,v}$   have magnitude at most $J$.
Here we take the union over all graphs $G$ with $n$ nodes. 
Our proof will only use HCB models with the range $r=1,2$.
Later on we shall define certain enhanced versions
of the HCB which have multi-particle interactions, see Section~\ref{sect:multi},
and/or  controlled hopping terms, see Section~\ref{sect:HCB*}.
We note that the HCB model with non-positive hopping amplitudes $t_{u,v}\le 0$ has been recently studied 
by Childs, Gosset, and Webb~\cite{Childs13} who showed that the corresponding
LHP is $\QMA$-complete.

Our second model is called {\em hard-core dimers}.
This model also depends on a graph $G=(U,E)$. 
We shall only consider triangle-free graphs $G$. 
Let $m\ge 1$ be a fixed integer parameter. 
A subset of nodes $S\subseteq U$ is said to be a {\em dimer} iff
$S=\{u,v\}$ for some pair of nodes $u\ne v$  such that $(u,v)\in E$. 
Define an {\em $m$-dimer} as a subset of nodes  $S\subseteq U$ 
that can be represented as a disjoint union of $m$ dimers $S_1,\ldots,S_m$
such that the graph distance between $S_i$ and $S_j$ is at least 
three for all $i\ne j$. This particular choice of the distance 
guarantees that $m$-dimers can be represented as ground states
of a suitable Ising Hamiltonian, see Lemma~\ref{lemma:dimers}
in Section~\ref{sect:TIM2HCD}.
Examples of $2$-dimers are shown on Fig.~\ref{fig:2dimer}.

\begin{figure}[h]
\centerline{\includegraphics[height=3cm]{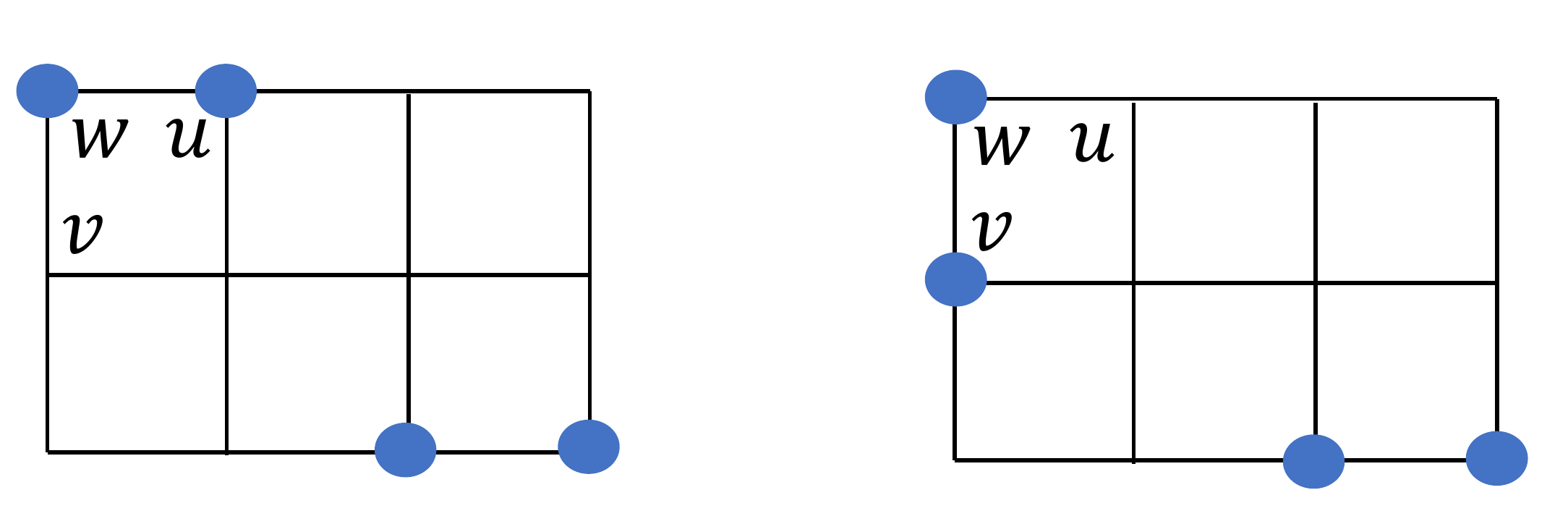}}
\caption{Examples of $2$-dimers  $S,S'$ on the square grid 
such that $|S'\rangle=W_{u,v}|S\rangle$. \label{fig:2dimer}
}
\end{figure}

Let $\calD_m\subseteq \calB_{2m}$ be the subspace spanned by all
basis vectors $|S\rangle$ such that $S\subseteq U$ is an $m$-dimer.
Note that the operators $W_{u,v}$ generally do not preserve $\calD_m$.
Below we consider  hopping operators $W_{u,v}$ projected onto the subspace $\calD_m$.
Matrix elements of a projected hopping operator are defined by 
Eq.~(\ref{hop}), where $S$ and $S'$ run over all $m$-dimers.
The hard-core dimers (HCD) model has a Hilbert space $\calD_m$ and a Hamiltonian
\begin{equation}
\label{HCD}
H=-t\sum_{\{u,v\}\subseteq U}  W_{u,v}
+H_{\diag}
\end{equation} 
where $H_{\diag}$ is defined by Eq.~(\ref{Hdiag}) and all operators are projected onto the subspace $\calD_m$.
The sum in Eq.~(\ref{HCD}) runs over all  pairs of nodes (not only nearest neighbors).
Although the Hamiltonian  does not explicitly depend on the graph structure,
the underlying Hilbert space $\calD_m$  does depends
on the graph since the latter  determines which subsets of nodes are $m$-dimers.  
A hopping process induced by $W_{u,v}$ can change a dimer $\{w,u\}$ to some other dimer $\{w,v\}$
with $u\ne v$, see  Fig.~\ref{fig:2dimer}. The coefficient $t$ is a hopping amplitude. We shall assume that  $t\ge 0$.
Then $H$ is a stoquastic Hamiltonian. Let $\HCD(n,m,J)$ be the set of Hamiltonians 
describing  the $m$-dimer sector of hard-core dimers model 
on a graph 
with $n$ nodes   such that
all coefficients in $H$ have magnitude at most $J$.
Here we take the union over all triangle-free graphs with $n$ nodes. 

Some perturbative reductions described  below will alter the underlying graph $G$. Whenever the choice of $G$
is not clear from the context, we shall use more detailed notations $\calB_m(G)$, $\calB_{m,r}(G)$, and $\calD_m(G)$
instead of $\calB_m$, $\calB_{m,r}$, and $\calD_m$.
Our notations for various classes of Hamiltonians are summarized in Table~2. 
\begin{table}[h]
\begin{center}
\begin{tabular}{c|c}
\hline
$\TIM(n,J)$ & Transverse field Ising Model \\
\hline
$\HCD(n,m,J)$ & $m$-dimer sector of Hard-Core Dimers model.\\
\hline
$\HCB_r(n,m,J)$ &  $m$-particle sector of Hard-Core Bosons with range $r$.\\
\hline
$\HCB(n,m,J)$ & same as $\HCB_1(n,m,J)$. \\
\hline
$\HCB^*(n,m,J)$ &  $\HCB(n,m,J)$ with controlled hopping terms. \\
\hline
$\StoqLH(n,J)$ &  Stoquastic  $2$-Local Hamiltonians. \\
\hline
\end{tabular}
\caption{Classes of Hamiltonians used in the proof of Theorem~\ref{thm:main}.
 Here $J$ denotes the maximum interaction strength
and $n$ denotes the number of nodes in the graph (the number of qubits).
 For those models that depend on a graph,
the corresponding class is defined by taking the union over all graphs with a fixed number
of nodes $n$ (all triangle-free graphs in the case of $\HCD$).
The class $\HCB^*$ is formally defined in Section~\ref{sect:HCB*}.} 
\label{tab:classes}
\end{center}
\end{table}

Finally, consider  a TIM Hamiltonian $H\in \TIM(n,J)$  defined in Eq.~(\ref{tim}).
Let us add an ancillary qubit labeled by `$a$' and consider a modified Hamiltonian
$H'\in \TIM(n+1,J)$ defined as 
\begin{equation}
\label{TIMext1}
H'=\sum_{u=1}^n h_u X_u +g_u Z_u Z_a + \sum_{1\le u<v\le n} g_{u,v} Z_u Z_v.
\end{equation}
Let $X_\all=X^{\otimes (n+1)}$ be the global spin flip operator.
Note that  $H$ commutes with $Z_a$ and $X_\all$, whereas $Z_a$
and $X_\all$ anti-commute. This implies that 
the restriction of $H'$ onto the sectors $Z_a=1$ and $Z_a=-1$ have exactly the same
spectrum as the original Hamiltonian $H$. Hence the full spectrum of $H'$
is obtained from the one of $H$ by doubling the multiplicity of each eigenvalue. 
In particular,  the  LHPs for Hamiltonians Eqs.~(\ref{tim},\ref{TIMext1}) have the same complexity.
Finally, 
substituting $Z_u=I-2n_u$ into Eq.~(\ref{tim})  one gets
 \begin{equation}
\label{TIMdef}
H= \sum_{u\in U} \mu_u n_u  + \sum_{\{u,v\} \subseteq  U} \omega_{u,v} n_u n_v  + h_u X_u
=H_\diag + \sum_{u\in U} h_u X_u,
\end{equation}
where $U\equiv \{1,\ldots,n\}$, $\omega_{u,v}=4g_{u,v}$ and $\mu_u=-2g_u - 2\sum_{v\ne u} g_{u,v}$.
Here we ignore the overall energy shift. 
Clearly, the coefficients $\omega_{u,v}$ and $\mu_u$ have magnitude at most $O(nJ)$.
Below we shall work with TIM Hamiltonians as defined in Eq.~(\ref{TIMdef}).

\section{Simulation of eigenvalues and eigenvectors}
\label{sect:sim}

In this section we give a formal definition of a simulation.
It  quantifies how close are two different models in terms of their
low-energy properties such as the low-lying eigenvalues and  
eigenvectors. 
We consider  a {\em target} model described
by a Hamiltonian $H$ acting on some $N$-dimensional Hilbert space $\calH$
and a {\em simulator} model described by a Hamiltonian $H_\sm$
acting on some Hilbert space $\calH_\sm$ of dimension at least $N$. 
Our definition of a simulation depends on a particular {\em encoding transformation} $\calE\, : \, \calH\to \calH_\sm$
that  embeds $\calH$ into some $N$-dimensional subspace  of $\calH_\sm$.
We assume that $\calE$ is an isometry, that is, $\calE^\dag \calE=I$. 
The encoding enables a  comparison between  eigenvectors of the two models.
We envision a situation when the spectrum of $H_\sm$ consists of two well-separated
groups of eigenvalues such that the $N$ smallest eigenvalues of $H_\sm$
are separated from the rest of its spectrum by a large gap. 
Let $\calL_N(H_\sm) \subseteq \calH_\sm$ be  the {\em low-energy subspace}
spanned by  the eigenvectors of $H_\sm$ associated with its $N$ smallest eigenvalues.
\begin{dfn}
\label{dfn:sim}
Let $H$ be a Hamiltonian acting on a Hilbert space $\calH$ of dimension $N$. 
A Hamiltonian $H_\sm$  and an isometry (encoding)  $\calE\, : \, \calH\to \calH_\sm$ are said to simulate $H$ with 
an error $(\eta,\epsilon)$  if there exists an isometry $\tilde{\calE}\, : \, \calH \to \calH_\sm$ such that 
\begin{enumerate}
\item[S1.] The image of $\tilde{\calE}$ coincides with the low-energy subspace $\calL_N(H_\sm)$.
\item[S2.]  $\| H - \tilde{\calE}^\dag H_\sm \tilde{\calE} \|\le \epsilon$.
\item[S3.]  $\| \calE-\tilde{\calE}\|\le \eta$.
\end{enumerate}
\end{dfn}
Although we do not impose any restrictions on the encoding, in practice it must be sufficiently simple. 
For all our reductions (except for the one of Section~\ref{sect:TIM3}) the encoding $\calE$ maps basis vectors to basis vectors. 
Whenever  the choice of $\calE$ is clear from the context, we shall  just say that 
$H_\sm$ simulates $H$ with an error $(\eta,\epsilon)$. 
If one is interested only in reproducing eigenvalues of the target Hamiltonian,
the encoding  and condition~(S3) can be ignored. 

In the case of a zero error, $\epsilon=\eta=0$, the target Hamiltonian $H$ coincides with the
restriction of $H_\sm$ onto the low-energy subspace of $H_\sm$, up to a change of basis described by $\calE$.
Clearly, any Hamiltonian simulates itself with a zero error since one can choose $\calE=\tilde{\calE}=I$.
We shall always assume that $\epsilon\le \|H\|$ since otherwise the definition is meaningless
(one can choose $H_\sm=0$ regardless of $H$).
Note that $\epsilon$ has the dimension of energy while $\eta$ is dimensionless.  
Loosely speaking, $\epsilon$ and $\eta$ quantify simulation error for eigenvalues and eigenvectors
respectively. 
Let us establish some basic properties of  simulations. 
\begin{lemma}[\bf Eigenvalue simulation]
\label{lemma:sim1}
Suppose $(H_\sm, \calE)$ simulates $H$ with an error  $(\eta,\epsilon)$.
Then the $i$-th smallest eigenvalues of $H_\sm$ and $H$ differ at most by $\epsilon$
for all $1\le i\le N$. 
\end{lemma}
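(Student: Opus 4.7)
The plan is to reduce the claim to Weyl's eigenvalue perturbation inequality, using the isometry $\tilde{\calE}$ guaranteed by conditions (S1) and (S2) to produce an effective Hamiltonian on $\calH$ whose spectrum is literally the bottom $N$ eigenvalues of $H_\sm$. Note that $\eta$ and the encoding $\calE$ play no role here; only (S1) and (S2) are needed.

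First I would define the effective operator $H' := \tilde{\calE}^\dag H_\sm \tilde{\calE}$, which is a hermitian operator on the $N$-dimensional space $\calH$. The key observation is that, because $\tilde{\calE}$ is an isometry whose image is exactly the invariant subspace $\calL_N(H_\sm)$ of $H_\sm$ (condition S1), the operator $\tilde{\calE}$ provides a unitary identification between $\calH$ and $\calL_N(H_\sm)$. Under this identification, $H'$ is just the restriction of $H_\sm$ to its invariant subspace $\calL_N(H_\sm)$. Consequently, the eigenvalues of $H'$ (with multiplicity) are precisely the $N$ smallest eigenvalues of $H_\sm$. I would spell this out by picking an orthonormal eigenbasis $\{|\psi_i\rangle\}_{i=1}^{N}$ of $H_\sm$ inside $\calL_N(H_\sm)$ with eigenvalues $\lambda_1\le \lambda_2\le \cdots \le \lambda_N$, noting that $\{\tilde{\calE}^\dag|\psi_i\rangle\}$ forms an orthonormal eigenbasis of $H'$ with the same eigenvalues.

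Second, I would invoke Weyl's inequality for hermitian matrices: if $A,B$ are hermitian on an $N$-dimensional space and $\mu_i(A), \mu_i(B)$ denote their $i$-th smallest eigenvalues, then $|\mu_i(A)-\mu_i(B)|\le \|A-B\|$ for all $i$. Applying this with $A=H$ and $B=H'$, condition (S2) gives $\|H-H'\|\le \epsilon$, hence $|\mu_i(H)-\mu_i(H')|\le \epsilon$ for every $1\le i\le N$. Combined with the previous paragraph, where $\mu_i(H')$ equals the $i$-th smallest eigenvalue of $H_\sm$, this yields the desired bound.

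There is no real obstacle in this lemma; the only subtle point is to justify that the spectrum of the compressed operator $\tilde{\calE}^\dag H_\sm \tilde{\calE}$ equals, in full multiplicity, the bottom $N$ eigenvalues of $H_\sm$. This is where condition (S1) is used critically: it guarantees that $\mathrm{Im}(\tilde{\calE})$ is not just any $N$-dimensional subspace (which would only give a variational upper bound via Cauchy interlacing) but is genuinely an $H_\sm$-invariant subspace spanned by its lowest eigenvectors. Once that identification is in place, the rest is a direct application of Weyl's inequality.
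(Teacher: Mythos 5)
Your proof is correct and takes exactly the same route as the paper: define $H' = \tilde{\calE}^\dag H_\sm \tilde{\calE}$, use (S1) to see its spectrum is precisely the $N$ smallest eigenvalues of $H_\sm$, then apply Weyl's inequality with (S2). You have simply spelled out in more detail the identification of $\mathrm{Im}(\tilde{\calE})$ with $\calL_N(H_\sm)$, which the paper states in one sentence.
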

\begin{proof}
Property~(S1) implies that the spectrum of  $\tilde{\calE}^\dag H_\sm \tilde{\calE}$ coincides with $N$ smallest
eigenvalues of $H_\sm$. The lemma now follows from (S2) and the standard Weyl's inequality.
\end{proof}
\begin{lemma}[\bf Ground state simulation]
\label{lemma:sim2}
Suppose $H$ has a non-degenerate ground state $|g\rangle$ separated from excited states
by a spectral gap $\delta$. Suppose $(H_\sm, \calE)$ simulates $H$ with an
error $(\eta,\epsilon)$ such that  $2\epsilon<\delta$. Then $H_\sm$ has a non-degenerate ground state
$|g_\sm\rangle$ and 
\begin{equation}
\label{sim2}
\| \calE |g\rangle - |g_\sm\rangle \| \le \eta+ O(\delta^{-1} \epsilon).
\end{equation}
\end{lemma}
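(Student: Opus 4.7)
The plan is to pull $H_\sm$ back to $\calH$ via the ``real'' simulating isometry $\tilde{\calE}$ of Definition~\ref{dfn:sim}, apply standard eigenvalue/eigenvector perturbation theory on $\calH$, and then transfer the bound to $|g_\sm\rangle$ on $\calH_\sm$ using the triangle inequality together with property~(S3). Concretely, let $\tilde{\calE}$ be the isometry guaranteed by the simulation hypothesis and set $\tilde H:=\tilde{\calE}^\dag H_\sm \tilde{\calE}$, a Hamiltonian on $\calH$. Property~(S2) gives $\|H-\tilde H\|\le\epsilon$.

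First I would handle existence of a non-degenerate $|g_\sm\rangle$. By Weyl's inequality applied to $H$ and $\tilde H$, the $i$-th smallest eigenvalues of $\tilde H$ and $H$ differ by at most $\epsilon$; combined with the gap $\delta$ of $H$ and the assumption $2\epsilon<\delta$, the ground eigenvalue of $\tilde H$ is separated from the rest of its spectrum by at least $\delta-2\epsilon>0$. Hence $\tilde H$ has a unique (up to phase) ground state $|\tilde g\rangle$. By property~(S1), the image of $\tilde{\calE}$ equals $\calL_N(H_\sm)$, so the $N$ smallest eigenvalues of $H_\sm$ coincide with the spectrum of $\tilde H$; in particular the smallest eigenvalue of $H_\sm$ is simple and realized by $|g_\sm\rangle := \tilde{\calE}|\tilde g\rangle$ (up to an overall phase, which we are free to choose).

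Next I would bound $\||g\rangle-|\tilde g\rangle\|$. This is a standard ground-state perturbation estimate: letting $P$ and $\tilde P$ be the rank-one projectors onto $|g\rangle$ and $|\tilde g\rangle$, the Davis--Kahan $\sin\theta$ theorem gives $\|P-\tilde P\|\le \|H-\tilde H\|/(\delta-\epsilon)= O(\epsilon/\delta)$, from which (after fixing the phase of $|\tilde g\rangle$ to maximize $\mathrm{Re}\,\langle g|\tilde g\rangle$) one concludes $\||g\rangle-|\tilde g\rangle\|\le O(\epsilon/\delta)$ in the usual way, via $\||g\rangle-|\tilde g\rangle\|^2=2-2|\langle g|\tilde g\rangle|$ and $\|P-\tilde P\|^2=1-|\langle g|\tilde g\rangle|^2$.

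Finally, I would combine everything by a two-term triangle inequality, using that $\tilde{\calE}$ is an isometry so that $\|\tilde{\calE}(|g\rangle-|\tilde g\rangle)\|=\||g\rangle-|\tilde g\rangle\|$, and using property~(S3):
\[
\|\calE|g\rangle-|g_\sm\rangle\|\le \|(\calE-\tilde{\calE})|g\rangle\|+\|\tilde{\calE}(|g\rangle-|\tilde g\rangle)\|\le \eta+O(\delta^{-1}\epsilon),
\]
which is the desired estimate~(\ref{sim2}). The only place requiring any care is step three, where one must properly fix the phase ambiguity of $|\tilde g\rangle$ (and hence of $|g_\sm\rangle$) so that the Davis--Kahan projector bound translates into a vector bound with the same order of magnitude; everything else is a direct consequence of Definition~\ref{dfn:sim} and Weyl's inequality.
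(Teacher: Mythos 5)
Your proposal is correct and follows essentially the same route as the paper: pull $H_\sm$ back via $\tilde{\calE}$, bound $\||g\rangle-\tilde{\calE}^\dag|g_\sm\rangle\|$ by first-order eigenvector perturbation theory, and combine with property~(S3) via the triangle inequality. The only superficial difference is that the paper simply invokes ``first-order perturbation theory for eigenvectors'' where you explicitly cite the Davis--Kahan $\sin\theta$ theorem and spell out the phase-fixing step; these are the same estimate.
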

\begin{proof}
Let $|g_\sm\rangle$ be the ground state of $H_\sm$. Note that $|g_\sm\rangle$ is non-degenerate
due to Lemma~\ref{lemma:sim1} and the assumption $2\epsilon<\delta$.
Consider an unperturbed Hamiltonian $H$ and a perturbation $V=\tilde{\calE}^\dag H_\sm \tilde{\calE} -H$.
The perturbed Hamiltonian $H+V=\tilde{\calE}^\dag H_\sm \tilde{\calE}$ has a non-degenerate
ground state $\tilde{\calE}^\dag|g_\sm\rangle$. Using the first-order perturbation theory for eigenvectors one gets
$\||g\rangle - \tilde{\calE}^\dag|g_\sm\rangle\|\le O(\delta^{-1} \epsilon)$
and thus $\|\tilde{\calE}|g\rangle - |g_\sm\rangle\| = \| \tilde{\calE}|g\rangle-\tilde{\calE}\tilde{\calE}^\dag |g_\sm\rangle\| \le O(\delta^{-1}\epsilon)$. Here we used the fact that $\tilde{\calE}$ is an isometry and
 $|g_\sm\rangle\in \calL_N(H_\sm)=\image{(\tilde{\calE})}$.
Property~(S3) then  leads to Eq.~(\ref{sim2}).
\end{proof}
Importantly,  our definition of a simulation is stable under 
compositions:   if one is given some Hamiltonians $H,H_1,H_2$ such that $H_1$ simulates $H$ with a small
error and $H_2$ simulates $H_1$ with a small error, this implies that $H_2$ simulates $H$ with a small error. 
\begin{lemma}[\bf Composition]
\label{lemma:sim3}
Suppose $(H_1, \calE_1)$ simulates $H$ with an error 
$(\eta_1,\epsilon_1)$ and   $(H_2,\calE_2)$ simulates $H_1$
with an error  $(\eta_2,\epsilon_2)$.
Let  $\Delta_1$ be the spectral gap separating
$N$ smallest eigenvalues of $H_1$ from the rest of the spectrum. 
Suppose $2\epsilon_2<\Delta_1$ and $\epsilon_1,\epsilon_2\le \|H\|$.
Then  $(H_2,\calE_2\calE_1)$ simulates $H$ with an error
$(\eta,\epsilon)$, where 
\begin{equation}
\label{compo}
\eta=\eta_1+\eta_2+ O( \epsilon_2 \Delta_1^{-1})
\quad \mbox{and} \quad
\epsilon =\epsilon_1+\epsilon_2 + O(\epsilon_2 \Delta_1^{-1} \|H\|).
\end{equation}
\end{lemma}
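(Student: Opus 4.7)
The plan is to construct the witness isometry for the composed simulation by polar decomposition on the low-energy subspaces. Let $\tilde{\calE}_1:\calH\to\calH_1$ and $\tilde{\calE}_2:\calH_1\to\calH_2$ be the witnesses for the two hypothesized simulations, let $P_1$ be the projector onto $\calL_N(H_1)=\image(\tilde{\calE}_1)$, and define the pulled-back Hamiltonian $H_1':=\tilde{\calE}_2^\dag H_2\tilde{\calE}_2$ on $\calH_1$, which satisfies $\|H_1'-H_1\|\le\epsilon_2$. Since $2\epsilon_2<\Delta_1$, the $N$ smallest eigenvalues of $H_1'$ are separated from the rest by a gap $\ge\Delta_1-2\epsilon_2$, and I let $P_1'$ denote the projector onto that $N$-dimensional low-energy subspace; the Davis--Kahan $\sin\Theta$ theorem gives $\|P_1-P_1'\|\le O(\epsilon_2/\Delta_1)$. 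An easy preliminary identity is that $\tilde{\calE}_2 P_1'\tilde{\calE}_2^\dag$ equals the projector onto $\calL_N(H_2)$, because $\tilde{\calE}_2$ is an isometry whose image contains $\calL_N(H_2)$ and $P_1'$ is the matching low-energy projector on the $\calH_1$ side.

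I construct the composed witness by a polar decomposition. Put $F:=P_1'\tilde{\calE}_1$; using $\tilde{\calE}_1=P_1\tilde{\calE}_1$ one checks $\|F^\dag F-I\|\le\|(I-P_1')P_1\|^2=O((\epsilon_2/\Delta_1)^2)$, so $(F^\dag F)^{-1/2}$ exists and $U:=F(F^\dag F)^{-1/2}$ is an isometry from $\calH$ onto $P_1'\calH_1$ satisfying $\|U-\tilde{\calE}_1\|\le O(\epsilon_2/\Delta_1)$. Set $\tilde{\calE}:=\tilde{\calE}_2 U$; its image equals $\tilde{\calE}_2 P_1'\calH_1=\calL_N(H_2)$, verifying (S1), while the three-term triangle inequality $\|\calE_2\calE_1-\tilde{\calE}_2 U\|\le\|\calE_1-\tilde{\calE}_1\|+\|\calE_2-\tilde{\calE}_2\|+\|U-\tilde{\calE}_1\|$ yields (S3) with $\eta=\eta_1+\eta_2+O(\epsilon_2/\Delta_1)$.

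For (S2) I compute $\tilde{\calE}^\dag H_2\tilde{\calE}=U^\dag H_1'U$ and telescope through $\tilde{\calE}_1^\dag H_1\tilde{\calE}_1$, obtaining
\[
\|H-U^\dag H_1'U\|\le\epsilon_1+\epsilon_2+\|(U-\tilde{\calE}_1)^\dag H_1\tilde{\calE}_1\|+\|U^\dag H_1(U-\tilde{\calE}_1)\|.
\]
The crucial point is that $H_1$ acts on near-spectral vectors in both cross terms: $H_1\tilde{\calE}_1=P_1 H_1 P_1\tilde{\calE}_1$ is confined to the exact low-energy block of $H_1$, so $\|H_1\tilde{\calE}_1\|$ is at most the $N$-th smallest eigenvalue of $H_1$, which is $\le\|H\|+\epsilon_1$ by (S2) for the first simulation; and $H_1 U=H_1 P_1' U$ with $\|H_1 P_1'\|\le\|H_1'P_1'\|+\|H_1-H_1'\|$ bounded by the $N$-th smallest eigenvalue of $H_1'$ plus $\epsilon_2$, which is $\le\|H\|+\epsilon_1+2\epsilon_2$. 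Using $\epsilon_1,\epsilon_2\le\|H\|$ these are $O(\|H\|)$, so the two cross terms contribute $O(\|H\|\epsilon_2/\Delta_1)$, giving the announced $\epsilon=\epsilon_1+\epsilon_2+O(\epsilon_2\Delta_1^{-1}\|H\|)$.

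The main obstacle is precisely avoiding the naive bound $\|U^\dag H_1 U-\tilde{\calE}_1^\dag H_1\tilde{\calE}_1\|\le 2\|H_1\|\|U-\tilde{\calE}_1\|$, which would contaminate the error with $\|H_1\|$ rather than $\|H\|$. The perturbative reductions later in the paper produce simulators whose operator norm scales with the large unperturbed gap $\Delta_1$, so such contamination would destroy stability under composition. The resolution is exactly the polar-decomposition construction above: arranging $\image(U)=P_1'\calH_1$ forces $H_1$ to be applied only to vectors in the spectral subspace of the nearby Hamiltonian $H_1'$, and the $O(\|H\|)$ bound on the top eigenvalue there replaces the vacuous $O(\|H_1\|)$ bound on the full operator norm.
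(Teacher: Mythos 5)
Your proof is correct and executes the same core strategy as the paper: observe that $\image(\tilde{\calE}_2\tilde{\calE}_1)$ is $O(\epsilon_2/\Delta_1)$-close to $\calL_N(H_2)$ via a projector perturbation bound (your Davis--Kahan citation and the paper's appeal to Lemma~3.1 of the cited BDL11 reference are the same estimate), construct an explicit isometry closing the gap, and verify (S2) by confining operators to low-energy spectral blocks so the correction is controlled by $\|H\|$ rather than the potentially huge $\|H_1\|$. The implementations differ in a modest but clean way: the paper builds a small-rotation unitary $U$ via Jordan's lemma acting on the $\calH_2$ side, rotating $\tilde{\calE}_2 P_N(H_1)\tilde{\calE}_2^\dag$ to $P_N(H_2)$, and sets $\tilde{\calE}=U\tilde{\calE}_2\tilde{\calE}_1$; you instead take the polar decomposition of $F=P_1'\tilde{\calE}_1$ on the $\calH_1$ side and set $\tilde{\calE}=\tilde{\calE}_2 U$. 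Your placement has the nice byproduct that $\tilde{\calE}^\dag H_2\tilde{\calE}=U^\dag H_1' U$ collapses the (S2) computation entirely onto $\calH_1$, streamlining the telescoping compared to the paper, which carries both $P_N(H_1)$ and $P_N(H_2)$ through the calculation. Your final bookkeeping --- bounding $\|H_1\tilde{\calE}_1\|$ by the $N$-th eigenvalue of $H_1$ and $\|H_1 P_1'\|$ by the $N$-th eigenvalue of $H_1'$ plus $\|H_1-H_1'\|$, then invoking $\epsilon_1,\epsilon_2\le\|H\|$ --- matches the paper's and correctly produces the $O(\|H\|\epsilon_2/\Delta_1)$ cross-term. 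One step you leave implicit that is worth spelling out: the assertion that $U$ maps \emph{onto} $P_1'\calH_1$ (needed for (S1)) requires not just that $(F^\dag F)^{-1/2}$ exists (injectivity) but also the dimension count $\dim(P_1'\calH_1)=N$, which holds by the gap condition on $H_1'$.
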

\noindent
 We shall always choose the simulator such that $\Delta_1\gg \|H\|$ in which case $\epsilon\approx \epsilon_1+\epsilon_2$.
\begin{proof}
Suppose $H,H_1,H_2$ act on Hilbert spaces $\calH,\calH_1,\calH_2$ respectively.
Let $N=\dim{(\calH)}$ and $N_1=\dim{(\calH_1)}$.
By Lemma~\ref{lemma:sim1}, the $N$ smallest eigenvalues of $H_2$
are separated from the rest of the spectrum by a spectral gap
at least $\Delta_1-2\epsilon_2>0$. Thus the low-energy subspace
$\calL_N(H_2)$ is well defined. 
Let $\tilde{\calE}_2\, : \, \calH_1\to \calH_2$ be an isometry satisfying 
properties~(S1-S3) for a simulator $(H_2,\calE_2)$ and a target Hamiltonian $H_1$
with an error $(\eta_2,\epsilon_2)$. 
By definition, $\tilde{\calE}_2$ maps $\calH_1$ to the low-energy subspace $\calL_{N_1}(H_2)$.
 First, let us show that $\tilde{\calE}_2$ approximately maps $\calL_N(H_1)$ 
to $\calL_N(H_2)$. More precisely, we claim that there exists
a unitary operator $U\, : \, \calL_{N_1}(\calH_2)\to \calL_{N_1}(\calH_2)$ such that
\begin{equation}
\label{compo1}
\calL_N(H_2)=U \tilde{\calE}_2 \cdot \calL_N(H_1) \quad \mbox{and} \quad \|U-I\|\le 2\sqrt{2} \Delta_1^{-1}\epsilon_2.
\end{equation}
Indeed,  let $P_N(H_i)$ be the projector onto 
the low-energy subspace 
$\calL_N(H_i)$, where $i=1,2$. 
Consider  a pertubation
$V=\tilde{\calE}_2^\dag H_2 \tilde{\calE}_2 -H_1$.
Note that $\calL_N(H_1+V)=\tilde{\calE}_2^\dag  \cdot \calL_N(H_2)$. 
Applying Lemma~3.1 of Ref.~\cite{BDL11}
with an unperturbed Hamiltonian $H_1$ and a perturbation $V$
one gets
\begin{equation}
\label{eq}
\| P_N(H_1) - \tilde{\calE}_2^\dag  P_N(H_2) \tilde{\calE}_2 \|\le \frac{2\| \tilde{\calE}_2^\dag H_2 \tilde{\calE}_2 - H_1 \|}{\Delta_1}
\le \frac{2\epsilon_2}{\Delta_1}.
\end{equation}
Taking into account that $\calL_N(H_2)\subseteq \calL_{N_1}(H_2)=\image{(\tilde{\calE}_2)}$
one gets
\[
\| \tilde{\calE}_2 P_N(H_1) \tilde{\calE}_2^\dag - P_N(H_2)\|
= \| P_N(H_1) - \tilde{\calE}_2^\dag  P_N(H_2) \tilde{\calE}_2 \|
\]
and thus
\begin{equation}
\label{eq}
\| \tilde{\calE}_2 P_N(H_1) \tilde{\calE}_2^\dag - P_N(H_2)\| \le \frac{2\epsilon_2}{\Delta_1}.
\end{equation}
For brevity denote 
\[
P\equiv \tilde{\calE}_2 P_N(H_1) \tilde{\calE}_2^\dag
\quad \mbox{and} \quad 
Q\equiv P_N(H_2).
\]
By Jordan's lemma, there exists an orthonormal basis  such that
the projectors $P$ and $Q$
are block-diagonal in this basis with all blocks being either $1\times 1$ or $2\times 2$
projectors.  Assuming that $2\Delta_1^{-1} \epsilon_2<1$ one has
$\| P-Q\|<1$ which implies that 
all $1\times 1$ blocks of $P$ and $Q$   are the same. Consider some $2\times 2$ block.
Without loss of generality, the restrictions of $P$ and $Q$ onto this block have a form 
\[
P= \left[ \ba{cc} 1 & 0 \\ 0 & 0 \\ \ea \right]
 \quad \mbox{and} \quad  Q = \left[ \ba{cc} c^2 & cs \\ cs & s^2 \\ \ea \right]
 \]
for some $0\le c,s\le 1$ such that $c^2+s^2=1$.
Then $P-Q=s^2 Z-csX$ and thus $\|P-Q\|=s$. We conclude that 
$s\le 2\Delta_1^{-1} \epsilon_2$ for any $2\times 2$ block. 
Define a unitary 
\[
U=\left[ \ba{cc} c & -s \\ s & c\\ \ea \right] =cI-isY
\]
such that $UPU^\dag =Q$. Note that $\| U-I\|= |c-1+is| \le \sqrt{2} s$.
Extending $U$ to the full space $\calL_{N_1}(\calH_2)$ we obtain 
$\image{(Q)}=U \cdot \image(P)$ and $\|U-I\|\le \sqrt{2} s$ which is equivalent to Eq.~(\ref{compo1}).

Now we are ready to prove that $(H_2,\calE_2 \calE_1)$  simulates $H$ with a small error. 
Define an isometry
\[
\tilde{\calE}=U\tilde{\calE}_2\tilde{\calE}_1.
\]
Using the first part of Eq.~(\ref{compo1}) and the fact that $\tilde{\calE}_1$ maps
$\calH$ to the low-energy subspace $\calL_N(H_1)$ 
we conclude that  $\tilde{\calE}$ maps $\calH$ to the low-energy subspace $\calL_N(H_2)$. 
Thus $\tilde{\calE}$ obeys property~(S1) for the target Hamiltonian $H$
and the simulator $(H_2,\calE_2 \calE_1)$. Furthermore, the second part of Eq.~(\ref{compo1})
implies that
\[
\eta\equiv \| \tilde{\calE}-\calE_2 \calE_1\| \le \| U-I \| + \|\tilde{\calE}_2 -\calE_2\| + \| \tilde{\calE}_1 - \calE_1\| 
\le 2\sqrt{2} \Delta_1^{-1}\epsilon_2 + \eta_2+\eta_1.
\]
Finally, let $H_i^{(N)}=H_i P_N(H_i)$ be the restriction of $H_i$
onto the low-energy subspace $\calL_N(H_i)$. 
Note that $H_1\tilde{\calE}_1 = H_1^{(N)} \tilde{\calE}_1$ and  $H_2 U\tilde{\calE}_2\tilde{\calE}_1 = H_2^{(N)} U\tilde{\calE}_2\tilde{\calE}_1$.
 Thus
 \[
\| H - \tilde{\calE}^\dag H_2 \tilde{\calE}\|  \le   \| H-\tilde{\calE}_1^\dag H_1 \tilde{\calE}_1\| + 
\| \tilde{\calE}_1^\dag (H_1^{(N)}- \tilde{\calE}_2^\dag U^\dag H_2^{(N)} U \tilde{\calE}_2)\tilde{\calE}_1 \|.
\]
The first term is  upper bounded by $\epsilon_1$. Thus 
\[
\| H - \tilde{\calE}^\dag H_2 \tilde{\calE}\|  \le \epsilon_1+ \| U \tilde{\calE}_2 H_1^{(N)} -  H_2^{(N)} U \tilde{\calE}_2\|.
\]
To bound the  second term we write $U=I+M$ and note that 
\begin{eqnarray}
U \tilde{\calE}_2 H_1^{(N)} -  H_2^{(N)} U \tilde{\calE}_2 
&=& P_N(H_2)( U\tilde{\calE}_2 H_1 - H_2 U \tilde{\calE}_2 )P_N(H_1) \nonumber \\
&=& P_N(H_2) (\tilde{\calE}_2 H_1 - H_2 \tilde{\calE}_2 ) P_N(H_1) + P_N(H_2) M \tilde{\calE}_2 H_1^{(N)} 
- H_2^{(N)} M \tilde{\calE}_2 P_N(H_1). \nonumber
\end{eqnarray}
The norm of the first term  is upper bounded by 
\[
\|\tilde{\calE}_2 H_1 - H_2 \tilde{\calE}_2\| \le \| H_1-\tilde{\calE}_2^\dag H_2 \tilde{\calE}_2\|\le \epsilon_2.
\]
Thus 
\[
\| H - \tilde{\calE}^\dag H_2 \tilde{\calE}\|  \le \epsilon_1+\epsilon_2+ \| U-I\|\cdot \| H_1^{(N)}\| 
+ \| U-I\| \cdot \| H_2^{(N)}\|.
\]
Lemma~\ref{lemma:sim1} implies that $\|H_2^{(N)}\|\le \|H_1^{(N)}\| +\epsilon_2$
and $\|H_1^{(N)}\|\le \|H\|+\epsilon_1$. Combining this and the second part of Eq.~(\ref{compo1})
one arrives at
\[
\| H - \tilde{\calE}^\dag H_2 \tilde{\calE}\|  \le \epsilon_1+\epsilon_2+ O( \Delta_1^{-1}\epsilon_2 \|H\|)+
O(\Delta_1^{-1} (\epsilon_2^2+\epsilon_1\epsilon_2)).
\]
Since we assumed that $\epsilon_1,\epsilon_2\le \|H\|$, the last term 
is at most $O(\Delta_1^{-1} \epsilon_2\|H\|)$.
\end{proof}

\section{Schrieffer-Wolff transformation and perturbative reductions}
\label{sect:PT}

Let $H$ be a target Hamiltonian chosen from some particular class of Hamiltonians $\calC$.
Suppose our goal is to simulate $H$ with a small error according to Definition~\ref{dfn:sim}
using a simulator Hamiltonian $H_\sm$ which is required  to be a member of some
smaller class $\calC'\subset \calC$. Perturbative reductions~\cite{KKR06,OT08} provide a general method
of accomplishing such simulation. 
Here we describe perturbative reductions based on the Schrieffer-Wolff transformation~\cite{SW66}, 
see for instance~\cite{BDL11}  and the references therein. 
Also we provide sufficient conditions under which a $k$-th order reduction
achieves the desired simulation error for $k=1,2,3$, see Lemmas~\ref{lemma:1st}-\ref{lemma:1st+}.

Consider a finite-dimensional Hilbert space $\calH_\sm$ decomposed into a direct sum
\begin{equation}
\label{H-+}
\calH_\sm=\calH_-\oplus \calH_+.
\end{equation}
Let $N_\pm=\dim{(\calH_{\pm})}$ and $P_\pm$ be the projector
onto $\calH_\pm$ such that $P_-+P_+=I$. 
Let $O$ be any linear operator on $\calH_\sm$. We shall write
\[
O_{--}=P_-OP_-, \quad O_{-+}=P_-OP_+, \quad O_{+-}=P_+ OP_-, \quad O_{++}=P_+ O P_+.
\]
The operator  is said to be block diagonal if $O_{-+}=0$ and $O_{+-}=0$.
The  operator  is said to be block off-diagonal if $O_{--}=0$ and $O_{++}=0$.

Let $H_0$ and $V$ be hermitian operators  on $\calH_\sm$ 
such that $H_0$ is block-diagonal, $(H_0)_{--}=0$, and such that $(H_0)_{++}$ has all eigenvalues greater or equal to one.  
Consider a perturbed Hamiltonian
\begin{equation}
\label{Hsim}
H_\sm=\Delta H_0+V, 
\end{equation}
where $\Delta$ is a large parameter. We shall always assume that 
\begin{equation}
\label{large_gap}
\|V\|< \Delta/2.
\end{equation}
The Schrieffer-Wolff transformation is a unitary operator on $\calH_\sm$
defined as $e^S$, where $S$ is an anti-hermitian operator   satisfying 
\begin{equation}
\label{SWconditions}
(e^S H_\sm e^{-S} )_{-+}=0, \quad (e^S H_\sm e^{-S} )_{+-}=0, \quad  S_{--}=0, \quad S_{++}=0,  \quad \|S\|<\pi/2.
\end{equation}
In other words, we require that  the transformed Hamiltonian $e^S H_\sm  e^{-S}$ is block diagonal
whereas $S$ itself is block off-diagonal. 
It is known that Eq.~(\ref{SWconditions}) has a unique solution  $S$,
see Lemma~2.3 and Lemma~3.1 in~\cite{BDL11}.
In particular, $S=0$ if $V=0$. 
The  effective low-energy Hamiltonian $H_{\eff}$
is a hermitian operator acting on $\calH_-$  defined as
\[
H_{\eff}=(e^S H_\sm e^{-S})_{--}.
\]
Note that $H_\eff=0$ if $V=0$. 
Since the operator $e^S$ is unitary and the transformed Hamiltonian $e^SH_\sm e^{-s}$
is block-diagonal, each eigenvalue of $H_\eff$ must be an eigenvalue of $H_\sm$. 
It is known that the $i$-th smallest eigenvalues of $H_\eff$ and $H_\sm$ coincide
for all $1\le i\le N_-$,  see~\cite{BDL11} for details.

Consider now   the Taylor series 
$S=\sum_{j=1}^\infty S_j$ and 
$H_{\eff}=\sum_{j=1}^\infty H_{\eff,j}$, where
$S_j$  and $H_{\eff,j}$ are the Taylor coefficients proportional to the $j$-th power  of $V$.
We shall only need the Taylor coefficients
\begin{equation}
\label{PT1}
H_{\eff,1}= V_{--},
\end{equation}
\begin{equation}
\label{PT2}
H_{\eff,2}= -\Delta^{-1} \,  V_{-+}H_0^{-1} V_{+-}
\end{equation}
and
\begin{equation}
\label{PT3}
H_{\eff,3}= \Delta^{-2} \, V_{-+} H_0^{-1} V_{++} H_0^{-1} V_{+-} -\frac{\Delta^{-2}}2\left( V_{-+} H_0^{-2} V_{+-} V_{--} + \mathrm{h.c.}\right),
\end{equation}
see Section~3.2 of~\cite{BDL11} for the derivation.
Note that the restriction of $H_0^{-1}$ to the subspace $\calH_+$ is well-defined
since all eigenvalues of $(H_0)_{++}$ are at least one. 
It is  known that the series for   $S$ and $H_{\eff}$ converge absolutely for $\|V\|<\Delta/16$
and the $j$-th Taylor coefficients are bounded as 
\begin{equation}
\label{Taylor}
 \|S_j\| \le (b\Delta)^{-j} \|V\|^j \quad \mbox{and} \quad \|H_{\eff,j}\|\le (c\Delta)^{1-j}\|V\|^j 
\end{equation}
for some constant coefficients $b,c>0$, see  Lemma~3.4 in~\cite{BDL11}. 
Define the $k$-th order effective Hamiltonian 
as the truncated  series
%\footnote{Although the letter $k$ has been used in the introduction 
%to denote the locality of Hamiltonians, from now on we reserve $k$ to denote  the order
%of perturbation theory. }
\begin{equation}
\label{Heff(k)}
H_{\eff}(k)=\sum_{j=1}^k H_{\eff,j}.
\end{equation}
The above implies that for any $k=O(1)$ and $\|V\|<\Delta\cdot \min{\{ b,c\}}$ one has
\begin{equation}
\label{Strunc}
\|S\| \le \sum_{j=1}^\infty (b\Delta)^{-j} \|V\|^j =O(\Delta^{-1}\|V\|)
\end{equation}
and
\begin{equation}
\label{Htrunc}
\| H_{\eff}-H_{\eff}(k)\| \le \sum_{j=k+1}^\infty (c\Delta)^{1-j}\|V\|^j = O(\Delta^{-k} \|V\|^{k+1}).
\end{equation}
Suppose now that $H_\tgt$ is a fixed  target Hamiltonian  acting on 
some Hibert space $\calH_\tgt$ and $\calE\, : \, \calH_\tgt\to \calH_\sm$
is some fixed isometry (encoding) such that $\image{(\calE)}=\calH_-$.
Define a  {\em logical target Hamiltonian}  acting on $\calH_-$  as
\begin{equation}
\label{target_logical}
\overline{H}_\tgt=\calE H_\tgt \calE^\dag.
\end{equation}
The goal of perturbative reductions is to approximate $\overline{H}_\tgt$ by the 
effective low-energy Hamiltonian $H_\eff(k)$  emerging from 
the simulator Hamiltonian  $H_\sm$ defined in Eq.~(\ref{Hsim}), where the parameter $\Delta$
controls the approximation error.  Below we outline a general strategy for constructing the simulator Hamiltonian
 proposed by Oliveira and Terhal~\cite{OT08}.
The strategy depends on the order $k$ of a reduction. 

For first-order reductions one just needs to choose $V$ such that 
$\overline{H}_\tgt=(V)_{--}$, see Eq.~(\ref{PT1}).
For second-order reductions the perturbation $V$ will be chosen as
\begin{equation}
\label{V2a}
V=\Delta^{1/2} \, V_\main+V_\extra,
\end{equation}
where $(V_\main)_{--}=0$ and $V_\extra$ is block-diagonal. Both operators $V_\main$ and $V_\extra$ are independent of $\Delta$. Substituting $V$ into Eqs.~(\ref{PT1},\ref{PT2}) gives
\begin{equation}
\label{V2b}
H_{\eff,1}=(V_{\extra})_{--} \quad \mbox{and}\quad H_{\eff,2}=  - (V_\main)_{-+} H_0^{-1} (V_\main)_{+-}.
\end{equation}
We shall choose $V_\main$ such that $H_{\eff,2}$ generates the desired logical target Hamiltonian
and, may be, some unwanted terms.
The purpose of $V_\extra$
is to cancel the unwanted terms.  
In addition, $V_\extra$ may include all the terms of the target Hamiltonian
that are members of the simulator class, such as  two-qubit diagonal  interactions. Note that the latter
 belong to all the classes listed in Table~\ref{tab:classes}.
Most of the  second-order reductions described below will achieve an exact equality $H_{\eff}(2)=\overline{H}_\tgt$.

For  third-order reductions the perturbation $V$ will be chosen as
\begin{equation}
\label{V3a}
V=\Delta^{2/3} V_\main + \Delta^{1/3} \tilde{V}_\extra +V_\extra, 
\end{equation}
where $(V_\main)_{--}=0$, and $V_\extra$, $\tilde{V}_\extra$ are block-diagonal. 
All operators $V_\main$, $V_\extra$, and $\tilde{V}_{\extra}$ are independent of $\Delta$.
Substituting $V$ into Eqs.~(\ref{PT1}-\ref{PT3}) gives
\begin{equation}
\label{V3b}
H_{\eff,1}=\Delta^{1/3} (\tilde{V}_\extra)_{--} + (V_\extra)_{--}, \quad  \quad
H_{\eff,2}=-\Delta^{1/3} (V_\main)_{-+}H_0^{-1} (V_\main)_{+-},
\end{equation}
and
\begin{equation}
\label{V3c}
H_{\eff,3}=(V_\main)_{-+} H_0^{-1} (V_\main)_{++} H_0^{-1}(V_\main)_{+-} + 
O(\Delta^{-1/3} \|\tilde{V}_{\extra}\|\cdot \|V_\main\|^2). 
\end{equation}
We shall  choose $\Delta$ large enough so that the last term in Eq.~(\ref{V3c}) can be neglected.
We shall chose $V_\main$ such that $H_{\eff,3}$ generates the desired logical target Hamiltonian and, may be,
some unwanted terms. The purpose of $V_\extra$ is to cancel the
unwanted terms. In addition, $V_\extra$ may include all the terms of the target Hamiltonian
that are members of the simulator class.
 Finally, the purpose of 
$\tilde{V}_{\extra}$ is to cancel the second-order term $H_{\eff,2}$.  Accordingly, we shall always choose $\tilde{V}_{\extra}$ such that 
\begin{equation}
\label{V3cc}
(\tilde{V}_\extra)_{--}=(V_\main)_{-+}H_0^{-1} (V_\main)_{+-}.
\end{equation}
Our proof will only use reductions of the order $k=1,2,3$.
The following lemmas provide sufficient conditions under which a $k$-th order
reduction achieves a desired simulation error.
Recall that our definition of a simulation depends on a particular encoding $\calE$,  see Definition~\ref{dfn:sim}.
The lemmas stated below apply to any fixed encoding $\calE$ and the 
 logical target Hamiltonian defined by Eq.~(\ref{target_logical}).
In Lemmas~\ref{lemma:1st}-\ref{lemma:3rd} we  assume
that $H_0$ is block-diagonal, $(H_0)_{--}=0$, and $(H_0)_{++}$ has all eigenvalues greater or equal to $1$.
\begin{lemma}[\bf First-order reduction]
\label{lemma:1st}
Suppose one can choose $H_0,V$ such that 
\begin{equation}
\label{1st}
\| \overline{H}_\tgt- (V)_{--} \| \le \epsilon/2.
\end{equation}
Then $H_\sm=\Delta H_0+V$  simulates
$H_\tgt$ with an error $(\eta,\epsilon)$, provided that $\Delta\ge O(\epsilon^{-1} \|V\|^2 + \eta^{-1}\|V\|)$.
\end{lemma}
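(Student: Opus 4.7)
The plan is to use the Schrieffer-Wolff transformation to define a candidate encoding and then verify the three clauses of Definition~\ref{dfn:sim} more or less by direct computation. Specifically, let $S$ be the anti-hermitian operator satisfying Eq.~(\ref{SWconditions}) for $H_\sm = \Delta H_0 + V$, and set $\tilde{\calE} = e^{-S}\calE$. Since $\|V\| < \Delta/2$ (which follows from the hypothesis on $\Delta$ once $\epsilon,\eta$ are small enough, or can be arranged by absorbing constants), the series for $S$ and $H_\eff$ converge and the bounds Eqs.~(\ref{Strunc},\ref{Htrunc}) apply.

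For clause (S1), the key observation is that $e^S H_\sm e^{-S}$ is block diagonal with respect to $\calH_-\oplus \calH_+$ by the defining property of $S$. Hence $e^{-S}\calH_-$ is an $N$-dimensional invariant subspace of $H_\sm$, and $\tilde{\calE}$ isometrically maps $\calH_\tgt$ onto it. To identify this subspace with $\calL_N(H_\sm)$, I would bound the spectrum on the two blocks: eigenvalues on $e^{-S}\calH_-$ equal eigenvalues of $H_\eff$, which are bounded in magnitude by $\|V\| + O(\Delta^{-1}\|V\|^2)$, whereas eigenvalues on $e^{-S}\calH_+$ equal eigenvalues of $(e^S H_\sm e^{-S})_{++}$, which are at least $\Delta - \|V\| - O(\Delta^{-1}\|V\|^2)$ since $(H_0)_{++}\ge I$. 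For $\Delta$ large enough the two ranges are disjoint and $e^{-S}\calH_-$ is the low-energy subspace.

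For clause (S2), since $\calE$ has image $\calH_-$ one has $P_-\calE = \calE$, so
\[
\tilde{\calE}^\dag H_\sm \tilde{\calE} = \calE^\dag (e^S H_\sm e^{-S}) \calE = \calE^\dag (e^S H_\sm e^{-S})_{--} \calE = \calE^\dag H_\eff \calE.
\]
Using $H_\tgt = \calE^\dag \overline{H}_\tgt \calE$ and the fact that $\calE$ is an isometry,
\[
\|H_\tgt - \tilde{\calE}^\dag H_\sm \tilde{\calE}\| \le \|\overline{H}_\tgt - H_\eff\| \le \|\overline{H}_\tgt - (V)_{--}\| + \|H_\eff - H_{\eff,1}\| \le \epsilon/2 + O(\Delta^{-1}\|V\|^2),
\]
by the hypothesis and Eq.~(\ref{Htrunc}) with $k=1$. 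Choosing $\Delta \ge c\,\epsilon^{-1}\|V\|^2$ for a suitable constant $c$ makes this at most $\epsilon$. For clause (S3),
\[
\|\calE - \tilde{\calE}\| = \|(I - e^{-S})\calE\| \le \|I - e^{-S}\| \le \|S\| + O(\|S\|^2) = O(\Delta^{-1}\|V\|),
\]
using Eq.~(\ref{Strunc}); choosing $\Delta \ge c'\eta^{-1}\|V\|$ bounds this by $\eta$. Taking the maximum of the two lower bounds on $\Delta$ gives the stated condition.

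No step looks genuinely hard; the one place requiring a bit of care is clause (S1), where I must rule out accidental level crossings between the two blocks — but this is forced by the gap $\Delta - O(\|V\|)$ between the spectra of $(e^S H_\sm e^{-S})_{--}$ and $(e^S H_\sm e^{-S})_{++}$ already established while setting up the Schrieffer-Wolff transformation in~\cite{BDL11}, so I would simply cite that fact rather than redo the argument.
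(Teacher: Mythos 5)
Your proof is correct and follows essentially the same route as the paper: define $\tilde{\calE}=e^{-S}\calE$ via the Schrieffer--Wolff transformation, obtain (S1) from the block-diagonalizing property of $e^S$ (the paper likewise just invokes the standard fact that $e^{-S}$ maps $\calH_-$ to the low-energy subspace), derive (S2) from the truncation bound $\|H_\eff - H_{\eff,1}\| = O(\Delta^{-1}\|V\|^2)$ together with Eq.~(\ref{1st}), and derive (S3) from $\|S\|=O(\Delta^{-1}\|V\|)$. The extra detail you sketch for (S1) is a correct unfolding of the cited fact but not a different argument.
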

\begin{proof}
Let $\calE\, : \, \calH_\tgt \to \calH_\sm$ be the chosen encoding. Recall that   $\image{(\calE)}=\calH_-$.
Let us check that $\tilde{\calE}=e^{-S}\calE$ satisfies conditions (S1-S3)
of Definition~\ref{dfn:sim}. By definition of the Schrieffer-Wolff transformation,
$e^{-S}$ maps $\calH_-$ to the low-energy subspace of $H_\sm$.
Thus $\tilde{\calE}$ maps $\calH_\tgt$ to the low-energy subspace of $H_\sm$
which proves (S1). From Eq.~(\ref{Htrunc}) 
one infers that $\|H_\eff-H_\eff(1)\|\le O(\Delta^{-1} \|V\|^2) \le \epsilon/2$.
Combining this and Eq.~(\ref{1st}) gives $\| \overline{H}_\tgt- H_\eff\|\le \epsilon$
and thus $\| H_\tgt-\calE^\dag H_\eff \calE\|\le \epsilon$.
Substituting $H_\eff=(e^S H_\sm e^{-S})_{--}$ one gets 
$\|H_\tgt-\tilde{\calE}^\dag H_\sm \tilde{\calE} \|\le \epsilon$ which proves condition~(S2). 
Finally, Eq.~(\ref{Strunc}) leads to 
$\|\calE-\tilde{\calE}\| =\| I-e^{-S}\| =O(\|S\|)=O(\Delta^{-1} \|V\|)\le \eta$. This proves condition~(S3). 
\end{proof} 
\begin{lemma}[\bf Second-order reduction]
\label{lemma:2nd}
Suppose one can choose $H_0,V_\main,V_\extra$ such that 
$V_\extra$ is block-diagonal, 
$(V_\main)_{--}=0$,   and 
\begin{equation}
\label{2nd}
\|  \overline{H}_\tgt- (V_\extra)_{--}  +  (V_\main)_{-+} H_0^{-1} (V_\main)_{+-}\| \le \epsilon/2.
\end{equation}
Suppose the norm of $V_\main, V_\extra$ is at most $\Lambda$.
Then $H_\sm=\Delta H_0+\Delta^{1/2} V_\main+V_\extra$  simulates
$H_\tgt$ with an error $(\eta,\epsilon)$ provided that $\Delta\ge O(\epsilon^{-2} \Lambda^6 + \eta^{-2} \Lambda^2)$.
\end{lemma}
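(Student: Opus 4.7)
The plan is to mimic the proof of Lemma~\ref{lemma:1st}, using the Schrieffer-Wolff machinery with the encoding $\tilde{\calE}=e^{-S}\calE$, but now chasing the perturbation theory one order further. Set $H_\sm=\Delta H_0+V$ with $V=\Delta^{1/2}V_\main+V_\extra$. The key preliminary observation is that $\|V\|\le \Delta^{1/2}\Lambda+\Lambda\le 2\Delta^{1/2}\Lambda$ once $\Delta\ge 1$, so as soon as $\Delta\ge 64\Lambda^2$ we have $\|V\|<\Delta/16$, which places us in the convergence regime where the Taylor bounds Eq.~(\ref{Taylor}) and the gap condition Eq.~(\ref{large_gap}) apply. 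The ultimate choice of $\Delta$ will dominate this threshold.

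Next I would compute $H_{\eff}(2)$ explicitly using the structural assumptions. Since $V_\extra$ is block-diagonal, $(V_\extra)_{-+}=(V_\extra)_{+-}=0$, and since $(V_\main)_{--}=0$, the first-order piece collapses to
\begin{equation}
H_{\eff,1}=(V)_{--}=(V_\extra)_{--}.
\end{equation}
The second-order piece picks up the $\Delta^{1/2}$ factors in both copies of $V$, cancelling the $\Delta^{-1}$ prefactor in Eq.~(\ref{PT2}):
\begin{equation}
H_{\eff,2}=-\Delta^{-1}\,V_{-+}H_0^{-1}V_{+-}=-(V_\main)_{-+}H_0^{-1}(V_\main)_{+-}.
\end{equation}
So $H_{\eff}(2)=(V_\extra)_{--}-(V_\main)_{-+}H_0^{-1}(V_\main)_{+-}$, and the hypothesis Eq.~(\ref{2nd}) says exactly $\|\overline{H}_\tgt-H_{\eff}(2)\|\le\epsilon/2$.

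The residual truncation error is handled by Eq.~(\ref{Htrunc}) with $k=2$: $\|H_\eff-H_\eff(2)\|\le O(\Delta^{-2}\|V\|^3)\le O(\Delta^{-2}\cdot\Delta^{3/2}\Lambda^3)=O(\Delta^{-1/2}\Lambda^3)$. Requiring this to be at most $\epsilon/2$ gives the first term of the stated scaling, $\Delta\ge O(\epsilon^{-2}\Lambda^6)$. Combining with the triangle inequality yields $\|\overline{H}_\tgt-H_\eff\|\le\epsilon$, and then because $e^SH_\sm e^{-S}$ is block-diagonal we get $\tilde{\calE}^\dag H_\sm\tilde{\calE}=\calE^\dag H_\eff\calE$, delivering condition~(S2). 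Condition~(S1) is automatic from the defining property of $e^{-S}$, which maps $\calH_-$ onto the $N_-$-dimensional low-energy subspace of $H_\sm$. For (S3), Eq.~(\ref{Strunc}) gives $\|I-e^{-S}\|=O(\|S\|)=O(\Delta^{-1}\|V\|)=O(\Delta^{-1/2}\Lambda)$, so $\eta$ is achieved whenever $\Delta\ge O(\eta^{-2}\Lambda^2)$, producing the second term of the stated scaling.

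There is no real obstacle here beyond bookkeeping: the only subtlety is that $\|V\|$ itself now grows like $\Delta^{1/2}$, so one must be careful that the $\Delta^{-k}\|V\|^{k+1}$ remainder in Eq.~(\ref{Htrunc}) with $k=2$ scales as $\Delta^{-1/2}$ (the chosen power of $\Delta^{1/2}$ on $V_\main$ is precisely what makes the surviving second-order term $\Delta$-independent while sending the third-order remainder to zero). One also has to verify that the threshold $\Delta\ge 64\Lambda^2$ needed for Taylor convergence is automatically dominated by the bound $\Delta\ge O(\epsilon^{-2}\Lambda^6+\eta^{-2}\Lambda^2)$, since otherwise $\epsilon$ or $\eta$ would exceed the trivial bound $\|H_\tgt\|\le\|V_\extra\|+\|V_\main\|^2=O(\Lambda^2)$ and the statement is vacuous up to absorbing the threshold into the hidden constant.
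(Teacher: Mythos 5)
Your proof follows the same route as the paper's: substitute $V=\Delta^{1/2}V_\main+V_\extra$ so that $\|V\|=O(\Delta^{1/2}\Lambda)$, identify $H_\eff(2)=(V_\extra)_{--}-(V_\main)_{-+}H_0^{-1}(V_\main)_{+-}$ from Eq.~(\ref{V2b}), bound the truncation error $\|H_\eff-H_\eff(2)\|=O(\Delta^{-1/2}\Lambda^3)$ via Eq.~(\ref{Htrunc}), and bound $\|I-e^{-S}\|=O(\Delta^{-1/2}\Lambda)$ via Eq.~(\ref{Strunc}), with (S1)--(S3) then handled exactly as in Lemma~\ref{lemma:1st}. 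The extra observations you add about the Taylor convergence threshold $\Delta\gtrsim\Lambda^2$ being subsumed by the stated bound are sound and simply make explicit what the paper leaves implicit.
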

\begin{proof}
Let $V=\Delta^{1/2} V_\main+V_\extra$. 
By assumption, $V$ has norm $O(\Delta^{1/2} \Lambda)$. 
Substituting this into Eq.~(\ref{Htrunc})  gives
$\|H_\eff-H_\eff(2)\|\le O(\Delta^{-2} \|V\|^3) = O(\Delta^{-1/2} \Lambda^3)\le \epsilon/2$. 
From Eqs.~(\ref{V2b},\ref{2nd}) one gets $\| \overline{H}_\tgt -H_\eff(2)\|\le \epsilon/2$
which gives  $\|H_\eff-  \overline{H}_\tgt\|\le \epsilon$.
Finally, Eq.~(\ref{Strunc}) leads to 
$\|\calE-\tilde{\calE}\| =\| I-e^{-S}\| =O(\|S\|)=O(\Delta^{-1} \|V\|)=O(\Delta^{-1/2} \Lambda) \le \eta$. 
The rest of the proof is identical to the one of Lemma~\ref{lemma:1st}.
\end{proof}

\begin{lemma}[\bf Third-order reduction]
\label{lemma:3rd}
Suppose one can choose $H_0,V_\main,V_\extra,\tilde{V}_\extra$ such that 
$V_\extra$, $\tilde{V}_\extra$ are block-diagonal, 
$(V_\main)_{--}=0$, 
\begin{equation}
\label{3rdA}
\| \overline{H}_\tgt- (V_\extra)_{--}  -  (V_\main)_{-+} H_0^{-1}  (V_\main)_{++}H_0^{-1}  (V_\main)_{+-}\| \le \epsilon/2,
\end{equation}
and
\begin{equation}
\label{3rdB}
(\tilde{V}_\extra)_{--}=(V_\main)_{-+}H_0^{-1} (V_\main)_{+-}.
\end{equation}
Suppose the norm of $V_\main, V_\extra,\tilde{V}_\extra$ is at most $\Lambda$.
Then $H_\sm=\Delta H_0+\Delta^{2/3}V_\main+\Delta^{1/3}\tilde{V}_\extra+V_\extra$   simulates
$H_\tgt$ with an error $(\eta,\epsilon)$ provided that $\Delta\ge O(\epsilon^{-3} \Lambda^{12}+\eta^{-3} \Lambda^3)$.
\end{lemma}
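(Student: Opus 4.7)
The plan is to follow the template of Lemmas~\ref{lemma:1st} and~\ref{lemma:2nd}: define $\tilde{\calE} = e^{-S}\calE$ where $S$ is the Schrieffer--Wolff generator for $H_\sm$, verify properties (S1)--(S3) using that $e^{-S}$ maps $\calH_-$ onto $\calL_{N_-}(H_\sm)$, and control everything by bounding $\|\overline{H}_\tgt - H_\eff\|$ and $\|S\|$. The main work is to show that $H_\eff(3)$ is close to $\overline{H}_\tgt$, with the specific scaling $\Delta \ge O(\epsilon^{-3}\Lambda^{12})$ arising from the third-order truncation.

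First I would verify the prerequisites for the SW series. Since $\tilde{V}_\extra$ and $V_\extra$ are block-diagonal, the full perturbation $V = \Delta^{2/3}V_\main + \Delta^{1/3}\tilde{V}_\extra + V_\extra$ has $\|V\| \le 3\Delta^{2/3}\Lambda$. The hypothesis on $\Delta$ forces $\Delta \ge \Lambda^3$, so $\|V\| < \Delta\cdot\min\{b,c\}$ and the bounds (\ref{Taylor}) apply. Next, only $V_\main$ contributes to the off-diagonal blocks, so $V_{-+} = \Delta^{2/3}(V_\main)_{-+}$, $V_{+-} = \Delta^{2/3}(V_\main)_{+-}$, while $V_{++} = \Delta^{2/3}(V_\main)_{++} + \Delta^{1/3}(\tilde{V}_\extra)_{++} + (V_\extra)_{++}$ and similarly for $V_{--}$ (but with $(V_\main)_{--}=0$). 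Plugging into (\ref{PT1})--(\ref{PT3}) gives
\begin{equation*}
H_{\eff,1} = \Delta^{1/3}(\tilde{V}_\extra)_{--} + (V_\extra)_{--}, \qquad H_{\eff,2} = -\Delta^{1/3}(V_\main)_{-+}H_0^{-1}(V_\main)_{+-},
\end{equation*}
\begin{equation*}
H_{\eff,3} = (V_\main)_{-+}H_0^{-1}(V_\main)_{++}H_0^{-1}(V_\main)_{+-} + R,
\end{equation*}
where $R$ collects all cross-terms, each of which picks up at least one factor of $\Delta^{1/3}\tilde{V}_\extra$ or $V_\extra$ in place of a $\Delta^{2/3}V_\main$ and is therefore bounded by $O(\Delta^{-1/3}\Lambda^3)$. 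Hypothesis~(\ref{3rdB}) is exactly the identity that makes the $\Delta^{1/3}$ pieces of $H_{\eff,1}$ and $H_{\eff,2}$ cancel, so
\begin{equation*}
H_\eff(3) = (V_\extra)_{--} + (V_\main)_{-+}H_0^{-1}(V_\main)_{++}H_0^{-1}(V_\main)_{+-} + O(\Delta^{-1/3}\Lambda^3),
\end{equation*}
and combining with hypothesis~(\ref{3rdA}) yields $\|\overline{H}_\tgt - H_\eff(3)\| \le \epsilon/2 + O(\Delta^{-1/3}\Lambda^3)$.

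The tail of the Taylor series is controlled by (\ref{Htrunc}) applied to the full $V$: $\|H_\eff - H_\eff(3)\| \le O(\Delta^{-3}\|V\|^4) = O(\Delta^{-1/3}\Lambda^4)$. Adding the two contributions, $\|\overline{H}_\tgt - H_\eff\| \le \epsilon$ once $\Delta \ge O(\epsilon^{-3}\Lambda^{12})$. Having secured $\|H_\tgt - \calE^\dagger H_\eff\calE\| \le \epsilon$, I would take $\tilde{\calE} = e^{-S}\calE$: property~(S1) is automatic because $e^{-S}$ maps $\calH_-$ to $\calL_{N_-}(H_\sm)$; property~(S2) follows from $\tilde{\calE}^\dagger H_\sm \tilde{\calE} = \calE^\dagger H_\eff \calE$; and for (S3), Eq.~(\ref{Strunc}) gives $\|\calE - \tilde{\calE}\| = \|I - e^{-S}\| = O(\|S\|) = O(\Delta^{-1}\|V\|) = O(\Delta^{-1/3}\Lambda) \le \eta$ once $\Delta \ge O(\eta^{-3}\Lambda^3)$. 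Taking the maximum of the two lower bounds on $\Delta$ yields the stated hypothesis.

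The main obstacle, and essentially the whole content of the lemma, is the bookkeeping of the third-order expansion: one must verify that every cross-term in (\ref{PT3}) produced by mixing $V_\main$ with $\tilde{V}_\extra$ or $V_\extra$ carries at least one negative power $\Delta^{-1/3}$, and that the explicit divergent $\Delta^{1/3}$ contribution to $H_{\eff,1}+H_{\eff,2}$ is cancelled by (\ref{3rdB}). Without this cancellation the reduction would be useless, as the simulator's low-energy spectrum would deviate from $\overline{H}_\tgt$ by a term of order $\Delta^{1/3}\Lambda \gg \|H_\tgt\|$; with it, the leading surviving term is precisely the desired third-order effective interaction.
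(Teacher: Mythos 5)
Your proof is correct and follows essentially the same route as the paper's: truncate the Schrieffer--Wolff expansion at third order using Eqs.~(\ref{V3b},\ref{V3c}), invoke (\ref{3rdB}) for the cancellation of the $\Delta^{1/3}$ divergence, bound the tail by (\ref{Htrunc}) and $\|\calE-\tilde{\calE}\|$ by (\ref{Strunc}), and read off the required scaling of $\Delta$. The only difference is that you spell out the cross-term bookkeeping in $H_{\eff,3}$ slightly more explicitly than the paper (which folds it into (\ref{V3c})), and you note that the hypothesis on $\Delta$ guarantees SW convergence; both are fine and the arithmetic matches.
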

\begin{proof}
Let $V=\Delta^{2/3}V_\main+\Delta^{1/3}\tilde{V}_\extra+V_\extra$.
By assumption, $V$ has norm $O(\Delta^{2/3} \Lambda)$. 
Substituting this into Eq.~(\ref{Htrunc})  gives
$\|H_\eff-H_\eff(3)\|\le O(\Delta^{-3} \|V\|^4) = O(\Delta^{-1/3} \Lambda^4)\le \epsilon/4$. 
Combining Eqs.~(\ref{V3b},\ref{V3c}) and Eqs.~(\ref{3rdA},\ref{3rdB})  one gets 
\[
\| \overline{H}_\tgt-H_\eff(3)\| \le \epsilon/4 + O(\Delta^{-1/3} \|\tilde{V}_{\extra}\|\cdot \|V_\main\|^2)
=\epsilon/4 + O(\Delta^{-1/3}\Lambda^3)\le \epsilon/2.
\]
This gives $\|H_\eff- \overline{H}_\tgt\|\le \epsilon$.
Finally, Eq.~(\ref{Strunc}) leads to 
$\|\calE-\tilde{\calE}\| =\| I-e^{-S}\| =O(\|S\|)=O(\Delta^{-1} \|V\|)=O(\Delta^{-1/3} \Lambda) \le \eta$. 
The rest of the proof is identical to the one of Lemma~\ref{lemma:1st}.
\end{proof}

The above framework covers all our reductions except for the one
described in Section~\ref{sect:TIM3}, namely, the reduction from TIM on degree-$3$ graphs to TIM on general graphs.
The latter is a first-order reduction where the Hamiltonian $H_0$ is chosen as the
the one-dimensional TIM. In this case $H_0$ has only approximately degenerate ground subspace, $(H_0)_{--}\ne 0$,
and the rules Eq.~(\ref{PT1}-\ref{PT3}) for computing the effective Hamiltonian
no longer apply.
The following is a simple generalization of Lemma~\ref{lemma:1st}.
\begin{lemma}[\bf Generalized first-order reduction]
\label{lemma:1st+}
Suppose one can choose $H_0,V$ such that 
$H_0$ is block-diagonal,  $(H_0)_{++}$ has all eigenvalues at 
least $\Delta$, and $(H_0)_{--}$ has all eigenvalues in the interval 
$[-\Delta/2,\Delta/2]$. Suppose also that
\begin{equation}
\label{1st+}
\| \overline{H}_\tgt- (H_0)_{--} - (V)_{--} \| \le \epsilon/2.
\end{equation}
Then $H_\sm=H_0+V$  simulates
$H_\tgt$ with an error $(\eta,\epsilon)$ provided that $\Delta\ge O(\epsilon^{-1} \|V\|^2+\eta^{-1}\|V\|)$.
\end{lemma}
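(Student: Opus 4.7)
The plan is to adapt the Schrieffer--Wolff machinery of Lemma~\ref{lemma:1st} to the case where $(H_0)_{--}$ is nonzero but its operator norm is bounded by $\Delta/2$. The crucial point is that the spectrum of $H_0$ still has a gap: eigenvalues of $(H_0)_{--}$ lie in $[-\Delta/2,\Delta/2]$, while those of $(H_0)_{++}$ are at least $\Delta$, so the blocks are separated by at least $\Delta/2$. For $\|V\|$ sufficiently small compared to this gap, the block-diagonalization of $H_\sm = H_0 + V$ by a unitary $e^S$ with $S$ anti-hermitian, block off-diagonal, and $\|S\|<\pi/2$ is well-defined. This is precisely the general form of the Schrieffer--Wolff theorem in~\cite{BDL11}; the normalization used in Section~\ref{sect:PT} with $(H_0)_{--}=0$ is just a convenient special case.

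The next step is to control the effective low-energy Hamiltonian $H_{\eff} = (e^S H_\sm e^{-S})_{--}$. I would invoke the general SW bounds: with gap $\Delta/2$ between the two blocks of $H_0$, one gets $\|S\| = O(\|V\|/\Delta)$, and the Taylor expansion of $H_\eff$ in $V$ (which now includes the leading term $(H_0)_{--}$ since that is already block-diagonal) reads
\[
H_\eff \;=\; (H_0)_{--} + V_{--} + R, \qquad \|R\| = O(\|V\|^2/\Delta).
\]
As in the standard setup, the $N_-$ smallest eigenvalues of $H_\sm$ coincide with those of $H_\eff$, so the low-energy subspace $\calL_{N_-}(H_\sm)$ is exactly $e^{-S}\calH_-$, where $N_-=\dim \calH_-$.

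With these bounds in hand, the verification of conditions (S1)--(S3) follows the pattern of Lemma~\ref{lemma:1st}. Set $\tilde{\calE} = e^{-S}\calE$. Since $\image(\calE)=\calH_-$ and $e^{-S}$ maps $\calH_-$ onto the low-energy subspace of $H_\sm$, condition (S1) holds. For (S2), combine the hypothesis $\|\overline{H}_\tgt - (H_0)_{--} - V_{--}\| \le \epsilon/2$ with $\|R\| \le \epsilon/2$ (which holds as soon as $\Delta \ge c\,\|V\|^2/\epsilon$ for a suitable constant $c$) to obtain $\|\overline{H}_\tgt - H_\eff\| \le \epsilon$; pre- and post-composing with $\calE$ and using $\tilde{\calE}^\dag H_\sm \tilde{\calE} = \calE^\dag H_\eff \calE$ converts this into $\|H_\tgt - \tilde{\calE}^\dag H_\sm \tilde{\calE}\| \le \epsilon$. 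For (S3), $\|\calE - \tilde{\calE}\| = \|(I-e^{-S})\calE\| \le \|I-e^{-S}\| = O(\|S\|) = O(\|V\|/\Delta) \le \eta$ for $\Delta \ge c'\|V\|/\eta$. Taking both conditions on $\Delta$ together yields the claimed bound $\Delta \ge O(\epsilon^{-1}\|V\|^2 + \eta^{-1}\|V\|)$.

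The main obstacle is justifying the SW expansion when $(H_0)_{--}\neq 0$, since the explicit formulas~(\ref{PT1})--(\ref{PT3}) and the bounds Eq.~(\ref{Taylor}) stated in Section~\ref{sect:PT} assume the normalized setup. I expect this to cause only minor technical bookkeeping: the general SW theorem of~\cite{BDL11} requires only that $\|V\|$ be small compared to the inter-block spectral gap of $H_0$, and the perturbative expansion still converges with $\|S\|=O(\|V\|/\mathrm{gap})$ and an analogous first-order identity $H_\eff = (H_0 + V)_{--} + O(\|V\|^2/\mathrm{gap})$, where in our case $\mathrm{gap} \ge \Delta/2$. Everything else in the proof is identical to the first-order case.
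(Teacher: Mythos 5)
Your proof is correct and follows essentially the same route as the paper's: both recognize that the inter-block gap of $H_0$ is at least $\Delta/2$, both appeal to the general Schrieffer--Wolff bounds of~\cite{BDL11} (the paper cites its Lemma~3.4) to get $\|S\|=O(\Delta^{-1}\|V\|)$ and $\|H_\eff - (H_0)_{--}-V_{--}\|=O(\Delta^{-1}\|V\|^2)$, and both then repeat the (S1)--(S3) verification of Lemma~\ref{lemma:1st} with the extra zeroth-order term $(H_0)_{--}$ absorbed into $H_\eff(1)$.
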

\begin{proof}
We can use the same arguments as in the proof of Lemma~\ref{lemma:1st}
except that now $H_{\eff}(1)=H_{\eff,0}+H_{\eff,1}$, where
$H_{\eff,0}=(H_0)_{--}$ and $H_{\eff,1}=(V)_{--}$.
By assumption, the unperturbed Hamiltonian $H_0$ has an energy gap at least $\Delta/2$ 
separating $\calH_-$ and $\calH_+$. 
Lemma~3.4 of Ref.~\cite{BDL11} implies that 
the series for $S$ and   $H_{\eff}$ converges absolutely for $\|V\|<\Delta/32$
and the Taylor coefficients $S_j$ and $H_{\eff,j}$ are bounded as in Eq.~(\ref{Strunc},\ref{Htrunc}).
Therefore $\|H_\eff-H_\eff(1)\| \le O(\Delta^{-1} \|V\|^2)$ and $\|S\|\le O(\Delta^{-1} \|V\|)$.
The rest of the proof is the same as in  Lemma~\ref{lemma:1st}.
\end{proof}

\section{Reduction from  degree-$3$ graphs to general graphs}
\label{sect:TIM3}

Consider a target Hamiltonian $H_\tgt$ describing the TIM on $n$ qubits. 
We assume that each qubit can be coupled to any other qubit with $ZZ$ interactions. 
Below we show how to simulate $H_\tgt$ using TIM with interactions of degree $3$.
Let us first informally sketch the main idea. 
We shall encode each qubit $u$ of the target model into the ground subspace of a TIM Hamiltonian 
on a  one-dimensional chain $L_u$  of some length $m$.   
The chain will be in the ferromagnetic phase such that the ground states $\psi_0$ and $\psi_1$
originating from the
two different $\ZZ_2$-symmetry sectors  are approximately degenerate 
forming one logical qubit. The basis states of the logical qubit will be defined as 
$|\overline{0}\rangle\sim |\psi_0\rangle+|\psi_1\rangle$
and $|\overline{1}\rangle\sim |\psi_0\rangle-|\psi_1\rangle$.
Important parameters of the logical qubit are the energy  splitting $\delta$ between $\psi_0$ and $\psi_1$
and the energy gap $\Delta$  separating $\psi_0,\psi_1$  from excited states. 
We shall work in the regime $\delta\ge \poly(1/m)$ and $\Delta/\delta \ge \poly(m)$ which
can be achieved if the chain is sufficiently close to the quantum phase transition point.
The logical Pauli operator $\overline{X}_u$ will be simulated by the energy splitting 
between  $\psi_0$ and $\psi_1$. The logical Pauli operator $\overline{Z}_u$ will be simulated by 
applying a magnetic field  $h Z_i$ to an arbitrarily chosen qubit  $i\in L_u$. 
The strength of the field $h$ will be  much smaller than the gap $\Delta$
to enable a perturbative analysis. We shall only need the first-order perturbation theory. 
 Since the one-dimensional TIM is exactly solvable,
all parameters of the logical qubit will be efficiently computable. 
Assuming that the target model has $n$ qubits, the simulator model
will be composed of $n$ chains $L_1,\ldots,L_n$ of length $m$ each. We can simulate a logical
interaction $\overline{Z}_u \overline{Z}_v$ by choosing an arbitrary pair of qubits 
$i\in L_u$, $j\in L_v$ and applying the Ising interaction $Z_i Z_j$.
Since each logical qubit $L_u$ is coupled to at most $n-1$ other logical qubits,
choosing $m\ge n-1$ guarantees that each qubit of $L_u$ is coupled 
to at most one qubit from a different chain.
   In addition,
each qubit of $L_u$ must be  coupled to its left and right  neighbors in $L_u$.
Thus the simulator model has interactions of degree $3$. 
A logical transverse field $\overline{X}_u$ is automatically simulated due to the ground state 
energy splitting of $L_u$. Thereby, we shall be able to simulate any logical  TIM Hamiltonian.

Let us now describe the reduction formally. 
For the sake of clarity, 
we begin by constructing a single logical qubit.  
Consider  a  chain of $m$ qubits with periodic  boundary conditions.
Qubits will be labeled by elements of the cyclic group $j\in \ZZ_m$.
Consider a TIM Hamiltonian 
\begin{equation}
\label{Hring}
H_\ring=-g\sum_{j\in \ZZ_m} Z_j Z_{j+1} - \sum_{j\in \ZZ_m} X_j,
\end{equation}
where 
\begin{equation}
\label{g}
g=1+\frac{c\log{(m)}}m
\end{equation}
for some parameter $c\gg 1$ to be chosen later. 
The Hamiltonian $H_\ring$ can be diagonalized via the Jordan-Wigner transformation
and all its eigenvalues  have been explicitly computed~\cite{Pfeuty1970}.
Let the three smallest eigenvalues of $H_\ring$
be $E_0\le E_1\le E_2$.
We shall use the following well-known fact~\cite{formfactor,Fendley13}.
\begin{fact}
\label{fact:TIM1}
Let $\omega_m=e^{2\pi i/m}$ be the $m$-th root of unity. For any $g>1$ one has
\begin{equation}
\label{E012}
E_0=-\sum_{j\in \ZZ_m} |g-\omega_m^{j+1/2}|,  \quad 
E_1=-\sum_{j\in \ZZ_m} |g-\omega_m^j|,  \quad  E_2=E_0+4|g-\omega_m^{1/2}|.
\end{equation}
Furthermore, the eigenvalues $E_0$, $E_1$ have multiplicity one and the corresponding
eigenvectors $\psi_0,\psi_1$ satisfy 
 $X^{\otimes m} \psi_0=\psi_0$ and $X^{\otimes m} \psi_1=-\psi_1$.
\end{fact}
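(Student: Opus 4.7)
\textbf{Proof plan for Fact~\ref{fact:TIM1}.}

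The statement is the standard exact solution of the periodic transverse field Ising chain in the paramagnetic phase $g>1$. The plan is to reduce $H_\ring$ to a free fermion model via Jordan--Wigner (JW), diagonalise in each parity sector by Fourier and Bogoliubov transformations, and then read off $E_0,E_1,E_2$ together with their parity and multiplicity.

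First I would use that $X_\all\equiv X^{\otimes m}$ commutes with $H_\ring$, so the Hilbert space splits as $\calH_+\oplus \calH_-$ according to $X_\all=\pm 1$. Apply the JW map $X_j=1-2\cop_j^\dag \cop_j$ and $Z_j=(\cop_j+\cop_j^\dag)\prod_{l<j}(1-2\cop_l^\dag \cop_l)$. Under JW the bulk terms $Z_jZ_{j+1}$ for $0\le j\le m-2$ become quadratic in $\cop$, while the wrap-around term $Z_{m-1}Z_0$ picks up a factor $\prod_l(1-2\cop_l^\dag \cop_l)=X_\all=(-1)^{N_f}$ where $N_f=\sum_l \cop_l^\dag \cop_l$. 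Hence in the even-parity sector $\calH_+$ ($X_\all=+1$) the resulting quadratic fermion Hamiltonian obeys antiperiodic boundary conditions, and in $\calH_-$ it obeys periodic boundary conditions.

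Next I would Fourier transform on each sector with momenta $k\in K_+\equiv\{2\pi(j+1/2)/m\}_{j\in \ZZ_m}$ (antiperiodic) and $k\in K_-\equiv\{2\pi j/m\}_{j\in \ZZ_m}$ (periodic), and for each non-self-conjugate pair $\{k,-k\}$ perform a standard Bogoliubov rotation mixing $\cop_k$ and $\cop_{-k}^\dag$. This brings the Hamiltonian in each sector to the form $\sum_k \epsilon_k (b_k^\dag b_k-1/2)$ with dispersion $\epsilon_k=2\sqrt{1+g^2-2g\cos k}=2|g-e^{ik}|$. The only subtlety is the self-conjugate modes $k=0$ and (if $m$ even) $k=\pi$ in $K_-$: at these momenta pairing terms vanish, so they appear as bare fermion levels of energy $2(g-1)$ and $2(g+1)$ respectively, both positive for $g>1$. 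In each sector the Bogoliubov vacuum $|\mathrm{vac}_\pm\rangle$ has energy $-\tfrac12 \sum_{k\in K_\pm}\epsilon_k = -\sum_{k\in K_\pm} |g-e^{ik}|$, which reproduces the candidate formulas for $E_0$ (sector $+$) and $E_1$ (sector $-$) in Eq.~\eqref{E012}. The parity constraints are automatic: in $\calH_+$ the allowed states are Bogoliubov vacua plus even numbers of quasiparticles, and in $\calH_-$ odd numbers.

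It remains to show (i) $E_0\le E_1$ for all $g>1$, (ii) non-degeneracy of $E_0,E_1$, and (iii) that the first excited state inside $\calH_+$ lies at $E_0+4|g-\omega_m^{1/2}|$. Point (i) follows by a direct comparison of the two Riemann-like sums $\sum_j|g-\omega_m^{j+1/2}|$ and $\sum_j|g-\omega_m^j|$ for $g>1$: the function $k\mapsto|g-e^{ik}|$ is even and convex on $[-\pi,\pi]$ with minimum at $k=0$, so shifting the sample points by $\pi/m$ away from $0$ never decreases the sum; this is the main technical step and the one I expect to require the most care (one can also phrase it via the known parity selection: for $g>1$ the ground state of $H_\ring$ is $X_\all$-even). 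Point (ii) holds because in each sector the Bogoliubov vacuum is the unique state annihilated by all $b_k$, and all single-mode energies $\epsilon_k$ are strictly positive for $g>1$. For (iii), the lowest excitations inside $\calH_+$ must contain an even number of Bogoliubov quasiparticles; the smallest two-quasiparticle cost is $\epsilon_{\pi/m}+\epsilon_{-\pi/m}=4|g-e^{i\pi/m}|=4|g-\omega_m^{1/2}|$, achieved uniquely by the $\{k=\pi/m,\,-\pi/m\}$ pair (since $\epsilon_k$ is strictly monotone in $|k|$ on $[0,\pi]$ for $g>1$), giving $E_2=E_0+4|g-\omega_m^{1/2}|$. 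The parity statements $X_\all\psi_0=\psi_0$ and $X_\all \psi_1=-\psi_1$ follow directly from the sector in which each vacuum lives.
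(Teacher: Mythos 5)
The paper does not actually prove Fact~\ref{fact:TIM1}; it is cited as a known consequence of the free-fermion solution with references \cite{formfactor,Fendley13}, and Appendix~A proves only the bounds on $\delta$ and $\xi$ quoted in Section~\ref{sect:TIM3}. Your Jordan--Wigner / parity-sector / Bogoliubov outline is the standard route of those references and is right in its broad strokes, but there are concrete gaps. The main one is step (i): $k\mapsto|g-e^{ik}|$ is \emph{not} convex on $[-\pi,\pi]$. A short computation gives
\[
\frac{d^2}{dk^2}\left|g-e^{ik}\right| = \frac{g^2\,(g-\cos k)(\cos k-1/g)}{|g-e^{ik}|^3},
\]
so the function is convex only for $|k|<\arccos(1/g)$ and strictly concave near $k=\pi$. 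Moreover, even if global convexity held it would point the wrong way: the $K_+$ grid is exactly the set of midpoints of consecutive $K_-$ points, so convexity would give $\sum_{K_+}|g-e^{ik}|\le\sum_{K_-}|g-e^{ik}|$, i.e.\ $E_0\ge E_1$. The inequality $E_0\le E_1$ really is the non-trivial part, and the paper establishes the equivalent statement $\delta=(E_1-E_0)/2>0$ in Appendix~A by a wholly different method: the contour-integral representation Eq.~(\ref{Eint}), whose integrand is manifestly non-negative on $(0,g^{-1})$.

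Two further points. Your parity accounting in $K_-$ is internally inconsistent: for $g>1$ the unpaired $k=0$ mode has bare coefficient $2(1-g)<0$, so the positive-energy quasiparticle is $\eta_0=c_0^\dag$ and the PBC Bogoliubov vacuum you priced at $E_1$ already has \emph{odd} fermion parity. It therefore lies in $\calH_-$, and the physical states of $\calH_-$ are that vacuum plus \emph{even} numbers of quasiparticles, not odd; the rule you state would exclude the very state whose energy you declared to be $E_1$ (it is the $g<1$ rule). Finally, for the $E_2$ claim you compared only within $\calH_+$; you must also check that the cheapest excitation in $\calH_-$, of cost $\epsilon_0+\epsilon_{2\pi/m}$ with $\epsilon_0=2(g-1)$, does not undercut $E_0+4|g-\omega_m^{1/2}|$. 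That comparison actually fails for $m=2$: the eigenvalues there are $\pm 2\sqrt{g^2+1},\ \pm 2g$, so the third smallest is $2g$, not $E_0+4|g-\omega_2^{1/2}|=2\sqrt{g^2+1}$. The $E_2$ formula therefore implicitly requires $m$ not too small, which is the regime the paper uses ($m=n$).
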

Let $\calH=(\CC^2)^{\otimes m}$ be the full Hilbert space of $n$ qubits.
Define an encoding $\calE \, : \, \CC^2\to \calH$ as
\begin{equation}
\label{TIM3W}
\calE|0\rangle\equiv |\overline{0}\rangle =(|\psi_0\rangle + |\psi_1\rangle)/\sqrt{2}
\quad \mbox{and} \quad
\calE|1\rangle\equiv |\overline{1}\rangle =(|\psi_0\rangle - |\psi_1\rangle)/\sqrt{2}.
\end{equation}
Thus we  identify the states $\psi_0$ and $\psi_1$ with the states $|+\rangle$ and $|-\rangle$ 
of the logical qubit.  Decompose 
$\calH=\calH_-\oplus \calH_+$, 
where $\calH_-$ is the logical subspace spanned by $\psi_0,\psi_1$ and
and $\calH_+$ is the orthogonal complement of $\calH_-$.  Then
$H_\ring$ is block-diagonal.  Performing the overall energy shift
by $(E_0+E_1)/2$ we arrive at 
\begin{equation}
\label{Hring--}
(H_\ring)_{--}=-\delta \overline{X} \quad \mbox{and} \quad (H_\ring)_{++}\ge \Delta I,
\end{equation}
where $\overline{X}=\calE X\calE^\dag=|\psi_0\rangle\langle\psi_0|-|\psi_1\rangle\langle\psi_1|$ is the logical Pauli $X$ operator,
\[
\delta=(E_1 -E_0)/2 \quad \mbox{and} \quad \Delta=4|g-\omega_m^{1/2}|-\delta.
\]
Note that $\delta$ and $\Delta$ can be computed in time $\poly(m)$ using Eq.~(\ref{E012}).
In Appendix~A we prove that 
\begin{equation}
\label{delta_bound}
\Omega(m^{-c-3/2})\le \delta\le O(m^{-c-1/2}) 
\end{equation}
in the limit $m\to \infty$. Therefore
\begin{equation}
\label{Delta_bound}
\Delta\ge \frac{4c\log{(m)}}{m} - O(m^{-1})-O(m^{-c-1/2})\ge \Omega(m^{-1}).
\end{equation}
By choosing the  constant $c$ sufficiently large  we can make the ratio $\Delta/\delta$
bigger than any fixed polynomial of $m$ and, at the same time, keep $\delta$
at least polynomial in $1/m$.

Consider now a perturbation 
$V=hZ_j$, where $|h|\ll \Delta$ and $j\in \ZZ_m$ is an arbitrarily chosen qubit.
The first-order effective Hamiltonian acting
on $\calH_-$ is $h(Z_j)_{--}$.  To compute $(Z_j)_{--}$ 
we need to know matrix elements $\langle \psi_\alpha|Z_j|\psi_\beta\rangle$
for $\alpha,\beta=0,1$.
Note that $Z_j$ anti-commutes with $X^{\otimes m}$. Using Fact~\ref{fact:TIM1}
we infer that $\langle \psi_0|Z_j|\psi_0\rangle=0$ and $\langle \psi_1|Z_j|\psi_1\rangle=0$.
Therefore $(Z_j)_{--}$ must be a linear combination of the logical Pauli operators
$\overline{Z}$ and $\overline{Y}$.  Since $H_\ring$ has real matrix elements in the standard basis,
the same is true for the restrictions of $H_\ring$ onto the sectors $X^{\otimes m}=\pm 1$.
Therefore $\psi_0$ and $\psi_1$ must have real amplitudes in the standard basis. 
This shows that $\langle \psi_0|Z_j|\psi_1\rangle$ must be real and thus 
\begin{equation}
\label{Z--}
(Z_j)_{--}= \xi \overline{Z}, \quad \mbox{where} \quad \xi\equiv \langle \psi_1 |Z_j|\psi_0\rangle
\end{equation}
and $\overline{Z}=\calE Z\calE^\dag =|\psi_0\rangle\langle\psi_1| + |\psi_1\rangle\langle\psi_0|$ is the logical Pauli $Z$ operator.
It can be easily shown that $\xi$ does not depend on the choice of $j$. 
We shall need the following expression for $\xi$ computed in Ref.~\cite{formfactor},
see Eq.~(77) therein.
\begin{fact}
\label{fact:TIM2}
Suppose $g>1$. Let $\epsilon_p\equiv |g-\omega_m^p|$, where $p$ is either integer or half-integer.
Then  
\begin{equation}
\label{xi}
|\xi|=\frac{ (1-g^{-2})^{1/8} \prod_{p\in \ZZ_m} \prod_{q\in \ZZ_m+1/2} (\epsilon_p + \epsilon_q)^{1/4}}
{\prod_{p,p'\in \ZZ_m} (\epsilon_p+\epsilon_{p'})^{1/8} \prod_{q,q'\in \ZZ_m+1/2} (\epsilon_q+\epsilon_{q'})^{1/8}}
\end{equation}
\end{fact}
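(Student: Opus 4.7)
This statement is a classical form-factor identity for the one-dimensional TIM, and the cleanest route is the free-fermion approach via the Jordan--Wigner (JW) transformation, so my plan is to derive it that way.

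First I would perform the JW transformation $X_j = 1 - 2 c_j^\dagger c_j$ and $Z_j = (c_j + c_j^\dagger)\prod_{k<j}(1-2c_k^\dagger c_k)$, which maps $H_\ring$ onto a quadratic fermionic Hamiltonian. Because the product $X^{\otimes m}$ is the fermion parity, the Hilbert space splits into two sectors: the Neveu--Schwarz (NS) sector $X^{\otimes m}=+1$, where the JW fermions satisfy anti-periodic boundary conditions and their momenta lie in $\frac{2\pi}{m}(\ZZ_m+\tfrac12)$, and the Ramond (R) sector $X^{\otimes m}=-1$ with periodic boundary conditions and momenta in $\frac{2\pi}{m}\ZZ_m$. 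Fourier-transforming and applying a momentum-dependent Bogoliubov rotation diagonalizes the Hamiltonian in each sector, giving single-particle energies $\epsilon_p=|g-\omega_m^p|$. The ground states $\psi_0$ and $\psi_1$ of Fact~\ref{fact:TIM1} are then the Bogoliubov vacua in the NS and R sectors, respectively, and their energies $E_0,E_1$ come out as the sums of $-\epsilon_p$ over the corresponding momentum sets, matching Eq.~(\ref{E012}).

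Next I would compute the off-diagonal matrix element $\xi=\langle\psi_1|Z_j|\psi_0\rangle$. The key point is that, under JW, $Z_j$ is a \emph{disorder} operator: it is a semi-infinite string of fermion parities capped by a local Majorana, so it intertwines the NS and R sectors, which is exactly why $\xi$ can be nonzero. Translation invariance makes $|\xi|$ independent of $j$, so I can set $j=1$. The matrix element reduces to the overlap between the NS Bogoliubov vacuum and the state obtained by acting with a product of Majorana operators on the R Bogoliubov vacuum. Since both states are Gaussian, this overlap is a Pfaffian of a matrix whose entries are the two-point fermionic correlators expressed in terms of the NS and R Bogoliubov angles $\theta_p$, where $e^{2i\theta_p}=(g-\omega_m^p)/\epsilon_p$.

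The final step, which I expect to be the main obstacle, is the exact evaluation of this Pfaffian into the closed double-product form of Eq.~(\ref{xi}). The standard trick (going back to the work on Ising form factors that the cited reference~\cite{formfactor} follows) is to write the Pfaffian as a Cauchy-like determinant in the variables $e^{i\theta_p}$ and then invoke the Cauchy determinant identity to turn it into a ratio of products of the form $\prod (\epsilon_p+\epsilon_q)$. The cross-sector product in the numerator runs over one integer and one half-integer momentum, while the two denominator products run over pairs within each sector separately; the prefactor $(1-g^{-2})^{1/8}$ emerges as the zero-mode contribution and reproduces the thermodynamic spontaneous magnetization $\langle Z\rangle=(1-g^{-2})^{1/8}$ in the $m\to\infty$ limit. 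The bookkeeping of signs and of the half-integer shift between NS and R momenta is the delicate part, but everything else is a mechanical consequence of the Bogoliubov diagonalization together with the Cauchy identity.
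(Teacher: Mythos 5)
The paper does not prove Fact~\ref{fact:TIM2} at all: the sentence immediately preceding the statement says that the expression for $\xi$ is ``computed in Ref.~\cite{formfactor}, see Eq.~(77) therein,'' and the Fact is stated purely as a citation of that external result. There is therefore no in-paper proof to compare your derivation against; you are attempting to reprove, from first principles, a result the authors deliberately import as a black box.

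Your sketch is a reasonable outline of how such a proof goes, and the physical ingredients you list are the right ones: Jordan--Wigner, the NS/R (anti-periodic/periodic) sector split tied to the $X^{\otimes m}$ eigenvalue, Bogoliubov vacua as $\psi_0,\psi_1$, the fermion-parity-odd character of $Z_j$ that makes it intertwine the two sectors, and the reduction of $\langle\psi_1|Z_j|\psi_0\rangle$ to a Gaussian-state overlap. But the derivation has a genuine gap exactly where you say you ``expect the main obstacle'': the step from the Pfaffian of two-point correlators to the closed double-product form of Eq.~(\ref{xi}) is asserted, not carried out. This step is the entire content of the formula --- including the appearance of the cross-sector products $\prod_p\prod_q(\epsilon_p+\epsilon_q)^{1/4}$ versus the same-sector products in the denominator with exponent $1/8$, the correct handling of the zero-mode that produces the $(1-g^{-2})^{1/8}$ prefactor, and the overall sign. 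Calling this ``a mechanical consequence of the Bogoliubov diagonalization together with the Cauchy identity'' understates it substantially; the bulk of Ref.~\cite{formfactor} is devoted to precisely this bookkeeping, and it is not at all obvious a priori that the Cauchy determinant comes out in the stated form with the stated exponents. As written, then, you have a plausible proof \emph{strategy}, aligned with the literature the paper cites, but not a proof; the paper's own ``proof'' is simply the citation.
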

Furthermore, in Appendix~A we prove that $\xi$ is positive and 
\begin{equation}
\label{xi_bound}
\xi\ge (1-g^{-2})^{1/8}\ge \Omega(m^{-1/8})
\end{equation}
for all $m\ge 2$ and for all $g>1$. 
Note that $\xi$ can be computed in time $\poly(m)$ using Eq.~(\ref{xi}).
We can now simulate any  target Hamiltonian 
on a single qubit which has a form
\begin{equation}
\label{logical1a}
H_\tgt=-h^x X+ h^z Z, \quad \quad h^x\ge 0.
\end{equation}
Let $J=\max{(|h^z|,h^x)}$ be the  interaction strength of $H_\tgt$.
Choose the simulator Hamiltonian as
\begin{equation}
\label{logical1b}
H_\sm = H_0+V, \quad H_0=h^x \delta^{-1} H_\ring, \quad \quad V= h^z \xi^{-1} Z_j.
\end{equation}
Here $j\in \ZZ_m$ is an arbitrary qubit.
From Eqs.~(\ref{Hring--},\ref{Z--}) we infer that 
the first-order effective Hamiltonian  acting on $\calH_-$ is 
\[
H_{\eff}(1)=(H_\sm)_{--} = h^x \delta^{-1} (H_\ring)_{--} + h^z \xi^{-1} (Z_j)_{--} = -h^x \overline{X}+h^z \overline{Z} =
\overline{H}_\tgt.
\]
Note that $H_0$ has an energy gap $\Delta'=h^x\Delta \delta^{-1}$
separating $\calH_-$ and $\calH_+$. 
By Lemma~\ref{lemma:1st+}, the Hamiltonian $H_\sm$ and the encoding $\calE$
simulate $H_\tgt$ with an error $(\eta,\epsilon)$ 
provided that
$\Delta'\ge \poly(h^z\xi^{-1},\epsilon^{-1},\eta^{-1})$ for some constant degree polynomial.
We can assume without loss of generality  that $h^x\ge \epsilon/2$ since
we only need to approximate the target Hamiltonian with an error $\epsilon/2$, see
Lemma~\ref{lemma:1st+}. Then $\Delta'\ge \Omega(\epsilon m^{-1} \delta^{-1})$.
Here we used the bound $\Delta=\Omega(m^{-1})$, see Eq.~(\ref{Delta_bound}).
Since $|h^z|\le J$, we have to satisfy
$\delta^{-1}\ge \poly(m,J,\xi^{-1},\epsilon^{-1},\eta^{-1})$. 
Since $\xi^{-1}=O(m^{1/8})$, see Eq.~(\ref{xi_bound}), this is equivalent to
$\delta^{-1}\ge \poly(m,J,\epsilon^{-1},\eta^{-1})$.
This can always be achieved by choosing a large enough constant $c$
in Eq.~(\ref{g}) since $\delta^{-1}=\Omega( m^{c+3/2})$, see Eq.~(\ref{delta_bound}).
Finally, we express $\delta^{-1}$ in terms of the interaction strength $J'$ of the simulator Hamiltonian.
From Eq.~(\ref{logical1b}) one gets $J' =O(h^x \delta^{-1})$
and thus we can achieve a simulation error $(\eta,\epsilon)$ by choosing
$J'=\poly(m,J,\epsilon^{-1},\eta^{-1})$.

Consider now  a target Hamiltonian   
on $n$  qubits which has a form 
\begin{equation}
\label{TIMtarget}
H_\tgt=\sum_{0\le u<v\le n-1} \omega_{u,v} Z_u Z_v + \sum_{u=0}^{n-1}  h^z_u Z_u
- h^x_u X_u.
\end{equation}
Without loss of generality $h^x_u\ge 0$ (otherwise, conjugate the Hamiltonian by $Z_u$).
We shall encode each qubit $u$ into a chain $L_u$ of length $m=n$ as defined above. 
Let $H_\ring^{(u)}$ be the Hamiltonian Eq.~(\ref{Hring}) describing the chain $L_u$.
We shall arrange the  chains into a square grid of size $n\times n$ such that 
a cell $(u,i)$ of the grid represents the $i$-th qubit of the chain $L_u$. Here
$0\le u,i\le n-1$. 
All chains use the same parameter $g$. Choose the simulator Hamiltonian as 
\begin{equation}
\label{TIMsimulator}
H_\sm = H_0+V, \quad H_0=\delta^{-1} \sum_{u=0}^{n-1} h_u^x H_\ring^{(u)}, 
\end{equation}
\begin{equation}
\label{TIMsimulator1}
V=\xi^{-2} \sum_{0\le u<v\le n-1} \omega_{u,v}
Z_{(u,v)} Z_{(v,u)} + \xi^{-1} \sum_{u=0}^{n-1} h^z_u Z_{(u,0)}.
\end{equation}
Note that $H_\sm$ is a TIM Hamiltonian acting on $n^2$ qubits and such that
each qubit is coupled to at most three other qubits with $ZZ$ interactions. 
Namely, a qubit $(u,v)$ is coupled only to the qubits $(u,v\pm 1)$ and $(v,u)$. 
Let $\calH_-$ be the $n$-fold tensor product of the two-dimensional logical subspaces
describing each chain $L_u$. The above analysis for a single logical qubit shows that 
$(H_\ring^{(u)})_{--}=-\delta \overline{X}_u$ and $(Z_{(u,v)})_{--}=\xi\overline{Z}_u$
for any qubit $v$ in the chain $L_u$. Therefore 
the first-order effective Hamiltonian acting on $\calH_-$ is 
\begin{equation}
\label{logical_n}
H_\eff(1)=(H_\sm)_{--}
=\sum_{0\le u<v\le n-1} \omega_{u,v} \overline{Z}_u \overline{Z}_v + \sum_{u=0}^{n-1}  h^z_u \overline{Z}_u
- h^x_u \overline{X}_u=
\overline{H}_\tgt,
\end{equation}
where the encoding $\calE$ is the $n$-fold tensor product of 
single qubit encodings defined in Eq.~(\ref{TIM3W}).

Let  $(\eta,\epsilon)$ be the desired simulation error and $J$ be the maximum magnitude of the coefficients in $H_\tgt$. 
We can assume without loss of generality that $h_u^x\ge \epsilon/2n$ for all $u$.
Then the  energy gap of $H_0$ separating $\calH_-$ from excited states
is $\Delta'\ge \epsilon\delta^{-1} \Delta/2n$, where $\Delta$ is the energy gap of a single chain,
see Eq.~(\ref{Hring--}).
By Lemma~\ref{lemma:1st+}, the Hamiltonian $H_\sm$ and the encoding $\calE$
simulate $H_\tgt$ with an error
$(\eta,\epsilon)$ provided that $\Delta'\ge \poly(n,J,\epsilon^{-1},\eta^{-1})$.
This is equivalent to $\delta^{-1} \ge \poly(n,J,\epsilon^{-1},\eta^{-1})$ (use the same bounds as above),
which can always be satisfied by choosing a large enough constant $c$ in Eq.~(\ref{g}).
Then  the simulator Hamiltonian has interaction strength $J'=O(\delta^{-1} J)= \poly(n,J,\epsilon^{-1},\eta^{-1})$.

To conclude, we have shown that any Hamiltonian $H_\tgt\in \TIM(n,J)$
can be simulated with an error $(\eta,\epsilon)$ by a Hamiltonian
$H_\sm\in \TIM(n^2,J')$ such that $J'=\poly(n,J,\epsilon^{-1},\eta^{-1})$ and $H_\sm$ has interaction degree $3$. 
The simulation uses  the encoding $\calE$ defined in Eq.~(\ref{TIM3W}).
Furthermore, the coefficients of $H_\sm$ can be computed in time
$\poly(n)$.

\section{Reduction from TIM to  dimers}
\label{sect:TIM2HCD}

In this section we  construct a TIM simulator for  the hard-core dimers model. 
It involves a composition of a first-order and a second-order reduction.
First let us  construct a classical Ising Hamiltonian $H_0$
composed of terms proportional to $n_u$ and $n_un_v$ 
such that ground states of $H_0$ are $m$-dimers.
Consider a graph $G=(U,E)$ with $n$ nodes. Define  operators 
\begin{equation}
\label{NE}
N_U=\sum_{u\in U} n_u \quad \mbox{and} \quad N_E=\sum_{(u,v)\in E} n_u n_v.
\end{equation}
These operators act on the full Hilbert space $\calB\cong(\CC^2)^{\otimes n}$.
Define
\begin{equation}
\label{Hdimer}
H_0=N_U-2N_E+ \Gamma \sum_{D(u,v)=2} n_u n_v, \quad \quad \Gamma >2|E|.
\end{equation}
Here $D(u,v)$ denotes the graph distance between nodes $u,v$.
Note that $H_0$ is a TIM Hamiltonian (with a zero transverse field).
\begin{lemma}
\label{lemma:dimers}
The Hamiltonian $H_0$ has zero ground state energy
and its ground subspace is spanned by $m$-dimers with $0\le m\le n/2$.
Furthermore, if $S$ is an $m$-dimer and  $T=S\setminus u$ for some $u\in S$ then 
$\langle T|H_0|T\rangle=1$.
\end{lemma}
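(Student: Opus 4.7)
The plan is to observe that $H_0$ is diagonal in the occupation basis $\{|S\rangle\}$, so the whole statement reduces to evaluating and optimizing
\[
\langle S|H_0|S\rangle = |S| - 2|E[S]| + \Gamma \cdot N_2(S),
\]
where $E[S]$ is the set of edges with both endpoints in $S$ and $N_2(S) = |\{\{u,v\}\subseteq S : D(u,v)=2\}|$. First I would show nonnegativity and that any ground state must have $N_2(S)=0$: since $|E[S]|\le |E|$, the term $|S|-2|E[S]|$ is at least $-2|E|$, and because $\Gamma>2|E|$, a single distance-$2$ pair already forces $\langle S|H_0|S\rangle > 0$. So zero-energy configurations satisfy $N_2(S)=0$ and $|S|=2|E[S]|$.

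Next I would exploit that $G$ is triangle-free. If a node $u\in S$ had two distinct neighbors $v,w\in S$, then $v,w$ would be nonadjacent (no triangle) yet at distance $2$, violating $N_2(S)=0$. Thus every vertex has at most one neighbor inside $S$, i.e.\ $G[S]$ is a matching, giving $|E[S]|\le |S|/2$. Combined with $|S|=2|E[S]|$, the inequality is saturated, so $G[S]$ is a perfect matching on $S$; write $S=S_1\sqcup\dots\sqcup S_m$ with each $S_i$ a dimer. It remains to argue $D(S_i,S_j)\ge 3$ for $i\ne j$: if some $u\in S_i$ and $v\in S_j$ had $D(u,v)=1$, then $u$ would have two neighbors in $S$ (its partner and $v$), contradicting the matching property; and $D(u,v)=2$ is forbidden by $N_2(S)=0$. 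Conversely, any $m$-dimer $S$ clearly satisfies $N_2(S)=0$ and $|E[S]|=m=|S|/2$, giving energy zero, so the ground subspace is exactly the span of all $m$-dimers with $0\le m\le n/2$ (including $m=0$, the empty configuration).

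For the last assertion, let $S=S_1\sqcup\dots\sqcup S_m$ be an $m$-dimer, $u\in S$, $u'$ its partner in the dimer containing $u$, and $T=S\setminus\{u\}$. Since $G[S]$ is the perfect matching $\{S_1,\dots,S_m\}$, removing $u$ destroys precisely the edge $\{u,u'\}$, so $|E[T]|=m-1$ and $|T|=2m-1$, giving $|T|-2|E[T]|=1$. Moreover $T\subseteq S$, and $N_2(\cdot)$ is monotone under taking subsets, so $N_2(T)\le N_2(S)=0$. Substituting into the diagonal expression yields $\langle T|H_0|T\rangle = 1$ as claimed.

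I do not foresee a hard step: the only point that needs care is using the hypothesis $\Gamma>2|E|$ to force $N_2(S)=0$ in any ground state, and then correctly combining the matching structure with $N_2(S)=0$ to recover exactly the distance-$3$ separation required in the definition of an $m$-dimer. Triangle-freeness is used in precisely one place — to upgrade "two neighbors in $S$" to "a distance-$2$ pair in $S$" — and this is what ties the penalty geometry to the combinatorial structure of dimer configurations.
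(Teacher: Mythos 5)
Your proof is correct and follows essentially the same route as the paper: triangle-freeness plus the absence of distance-$2$ pairs in $S$ forces the induced subgraph $G[S]$ to consist of isolated nodes and isolated edges, and the $\Gamma>2|E|$ threshold pins down $N_2(S)=0$ for any non-positive-energy configuration. The only cosmetic difference is that the paper finishes by decomposing $S$ into connected components and summing per-component energies, while you run the same argument via the matching inequality $|E[S]|\le |S|/2$ and its saturation; these are two phrasings of the identical idea.
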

\begin{proof}
Suppose $S\subseteq U$ is an $m$-dimer. By definition, any pair of dimers in $S$
is separated by distance at least three. Thus $\langle S|n_u n_v|S\rangle=0$
whenever $D(u,v)=2$. Therefore
\[
\langle S|H_0|S\rangle = \langle S|N_U|S\rangle - 2 \langle S|N_E|S\rangle =  2m-2m=0.
\]
Next consider any subset of nodes $S$ such that $\langle S|H_0|S\rangle\le 0$.
It suffices to show that $S$ is an $m$-dimer for some integer $m$.  Indeed, the negative term in $H_0$
cannot be smaller than $-2|E|$. Since $\Gamma>2|E|$, 
the energy of $S$ can be non-positive only if 
\begin{equation}
\label{Senergy}
\langle S|n_u n_v|S\rangle=0 \quad \mbox{whenever $D(u,v)=2$}.
\end{equation}
Let $S=C_1\cup \ldots \cup C_m$ be the decomposition of $S$ into connected components.
Since the graph has no triangles, Eq.~(\ref{Senergy})
implies that each connected component of $S$ is either a single node
or a dimer. Therefore
\begin{equation}
\label{Senergy1}
0\ge \langle S|H_0|S\rangle = \sum_{\alpha=1}^m \langle C_\alpha|N_U-2N_E|C_\alpha\rangle.
\end{equation}
Clearly, $\langle C_\alpha|N_U-2N_E|C_\alpha\rangle=1$ if $C_\alpha$
is a single node and $\langle C_\alpha| N_U-2N_E|C_\alpha\rangle=0$ if $C_\alpha$
is a dimer. Thus Eq.~(\ref{Senergy1}) is possible only if all $C_\alpha$ are dimers.
From Eq.~(\ref{Senergy}) one infers that the distance between different dimers $C_\alpha$ is at least three. 
This shows that  $S$ is an $m$-dimer for some $m$. 

Finally, removing any single node $u$ from an $m$-dimer $S$ transforms one of the connected components
$C_\alpha$ into a single node. The above  shows that $T=S\setminus u$ has energy 
$\langle T|H_0|T\rangle=1$.
\end{proof}

Fix any integer $1\le m\le n/2$.
Consider a target Hamiltonian $H_\tgt\in \HCD(n,m,J)$ describing the
$m$-dimer sector of the  hard-core dimers model on some  triangle-free graph  $G=(U,E)$ with $n$ nodes,
see Eq.~(\ref{HCD}).

Our first reduction has a simulator Hamiltonian  
\begin{equation}
\label{TIM0a}
\tilde{H}_\sm=\tilde{\Delta} \tilde{H}_0+\tilde{V}, \quad \tilde{H}_0= (N_U-2m)(N_U-2m+1), \quad \quad \tilde{V} \in \TIM(n,\tilde{J}).
\end{equation}
All above operators act on the full Hilbert space $\calB$.
The perturbation $\tilde{V}$ will be chosen at the next reduction. 
Note that $\tilde{H}_\sm$ is a TIM Hamiltonian. 
Since  eigenvalues of $N_U$ are integers, 
the ground subspace of $\tilde{H}_0$  is spanned by subsets of nodes 
of cardinality $2m$ or $2m-1$.
By Lemma~\ref{lemma:1st}, the Hamiltonian $\tilde{H}_\sm$
can simulate 
the restriction of any TIM Hamiltonian $\tilde{V}$ onto the subspace
$\calH\equiv \calB_{2m}\oplus \calB_{2m-1}$.  
In the rest of this section we assume that our full Hilbert space is $\calH$.  

Our second reduction has a simulator Hamiltonian
\begin{equation}
\label{TIM1a}
H_{\sm}=\Delta H_0+V, \quad \quad V=\Delta^{1/2} V_\main+ V_{\extra},
\end{equation}
where $H_0$ is  the Hamiltonian constructed in Lemma~\ref{lemma:dimers}, 
\begin{equation}
\label{TIM1aa}
V_\main=t^{1/2}\sum_{u\in U} X_u \quad \mbox{and} \quad V_\extra=H_{\diag}+tN_U.
\end{equation}
Here $t$ and $H_\diag$ are defined by the target HCD Hamiltonian  Eq.~(\ref{HCD})
and all operators are restricted to the subspace $\calH$ with $2m$ or $2m-1$ particles.
Note that $H_\sm$ is a TIM Hamiltonian. 
Lemma~\ref{lemma:dimers} implies that 
the ground subspace of $H_0$ is spanned by $m$-dimers, that is, $\calH_-=\calD_m$. 

Let us check that the perturbation has all the properties stated in Lemma~\ref{lemma:2nd}.
First, we note that $(V_\main)_{--}=0$ since any pair of $m$-dimers either coincide or differ on at least two nodes.
Obviously, $V_\extra$ is block-diagonal. 
It remains to check Eq.~(\ref{2nd}). Let $S$ and $S'$ be arbitrary $m$-dimers.
Then
\begin{equation}
\label{Heff1}
\langle S'|(V_\main)_{-+} H_0^{-1}(V_\main)_{+-}|S\rangle
= t\sum_{u,v\in U}  \langle S'|X_v P_+H_0^{-1}P_+X_u|S\rangle.
\end{equation}
Recall that all operators in Eq.~(\ref{TIM1a}) are restricted to the subspace with $2m$ or $2m-1$ particles.
Thus $X_u|S\rangle=0$ whenever $u\notin S$ since in this case $X_u|S\rangle$ contains $2m+1$ particles. 
In the remaining case, $u\in S$,  Lemma~\ref{lemma:dimers}
implies that $X_u|S\rangle$ is an eigenvector of $H_0$ with the 
eigenvalue $1$, so that  $H_0^{-1} X_u|S\rangle=X_u|S\rangle$.
Thus 
\begin{equation}
\label{Heff3}
\langle S'|(V_\main)_{-+} H_0^{-1}(V_\main)_{+-}|S\rangle=
 t\sum_{u\in S, \, v\in S'}  \langle S'|X_v X_u|S\rangle.
\end{equation}
The sum over $u=v$ gives a contribution $t|S|\delta_{S,S'}=\langle S|tN_U|S'\rangle$.
The sum over $u\ne v$ is non-zero only if $S$ and $S'$ can be  obtained from each other
by moving one particle from some node $u$ to another node $v$,
in which case  $\langle S'|X_v X_u|S\rangle=\langle S'|W_{u,v}|S\rangle$. Thus
\begin{equation}
\label{Heff4}
(V_\main)_{-+} H_0^{-1}(V_\main)_{+-} = tN_U + t\sum_{\{u,v\}\in U} W_{u,v}
\end{equation}
and 
\begin{equation}
\label{Heff6}
(V_\extra)_{--} - (V_\main)_{-+} H_0^{-1}(V_\main)_{+-}  =  -t\sum_{\{u,v\}\in U} W_{u,v} + H_\diag =H_\tgt. 
\end{equation}
Thus all conditions of Lemma~\ref{lemma:2nd} are satisfied. 

To compose the two reductions we extend $H_\sm$ defined in Eq.~(\ref{TIM1a})
to the full Hilbert space $\calB$ and substitute $\tilde{V}=H_\sm$ into Eq.~(\ref{TIM0a}).
Combining Lemmas~\ref{lemma:sim3},\ref{lemma:1st},\ref{lemma:2nd}
we conclude that  any Hamiltonian $H_\tgt\in \HCD(n,m,J)$
can be simulated 
with an error $(\epsilon,\eta)$ by a Hamiltonian $\tilde{H}_\sm \in \TIM(n,J')$
where $J'=\poly(n,J,\epsilon^{-1},\eta^{-1})$.  
The simulation uses the trivial encoding 
$\calE\, : \, \calD_m\to \calB$, that is, $\calE|S\rangle=|S\rangle$ for any
$m$-dimer $S$.

\section{Reduction from dimers to range-$2$ bosons}
\label{sect:range2}

In this section we construct an HCD simulator for range-$2$ hard-core bosons.
It involves a  third-order reduction. 
Consider a target Hamiltonian $H_\tgt\in  \HCB_2(n,m,J)$ 
describing the $m$-particle sector of range-$2$ hard-core bosons on some graph  $G=(U,E)$
with $n$ nodes. For the sake of clarity, let us first consider 
a special case of  homogeneous hopping amplitudes,  that is, 
\begin{equation}
\label{target-range2}
H_\tgt=-t\sum_{(u,v)\in E} W_{u,v} + H_\diag, \quad \quad t\ge 0. 
\end{equation}
Recall that $H_\tgt$ acts on the Hilbert space  $\calB_{m,2}(G)$
spanned by $2$-sparse subsets of $m$ nodes.

 The HCD simulator will be  defined on 
an  extended graph
$G'=(U',E')$ obtained from $G$ by placing an extra node at the center of every edge of $G$
and attaching an extra hanging edge to every node of $G$, see Fig.~\ref{fig:d2b} 
for an example. 
The extra node located at the center of an edge $(u,v)\in E$ 
will be denoted\footnote{Here the addition is merely a symbol;  it  has no algebraic meaning.}  $u+v$.
The extra node attached  to a node $u\in U$ by a hanging edge 
will be denoted $u^*$. 
Thus the extended graph  $G'$ has a set of nodes
\begin{equation}
\label{UUU}
U'=U \cup U^* \cup U^+, \quad U^*=\{u^*\, : \, u\in U\}, \quad U^+=\{u+v \, : \, u,v\in U \; \mbox{and} \;  (u,v)\in E\}.
\end{equation}
We shall represent a boson located at a node $u\in U$ by a dimer occupying
the subset $\{u,u^*\}\subseteq U'$. 

\begin{figure}[h]
\centerline{\includegraphics[height=4cm]{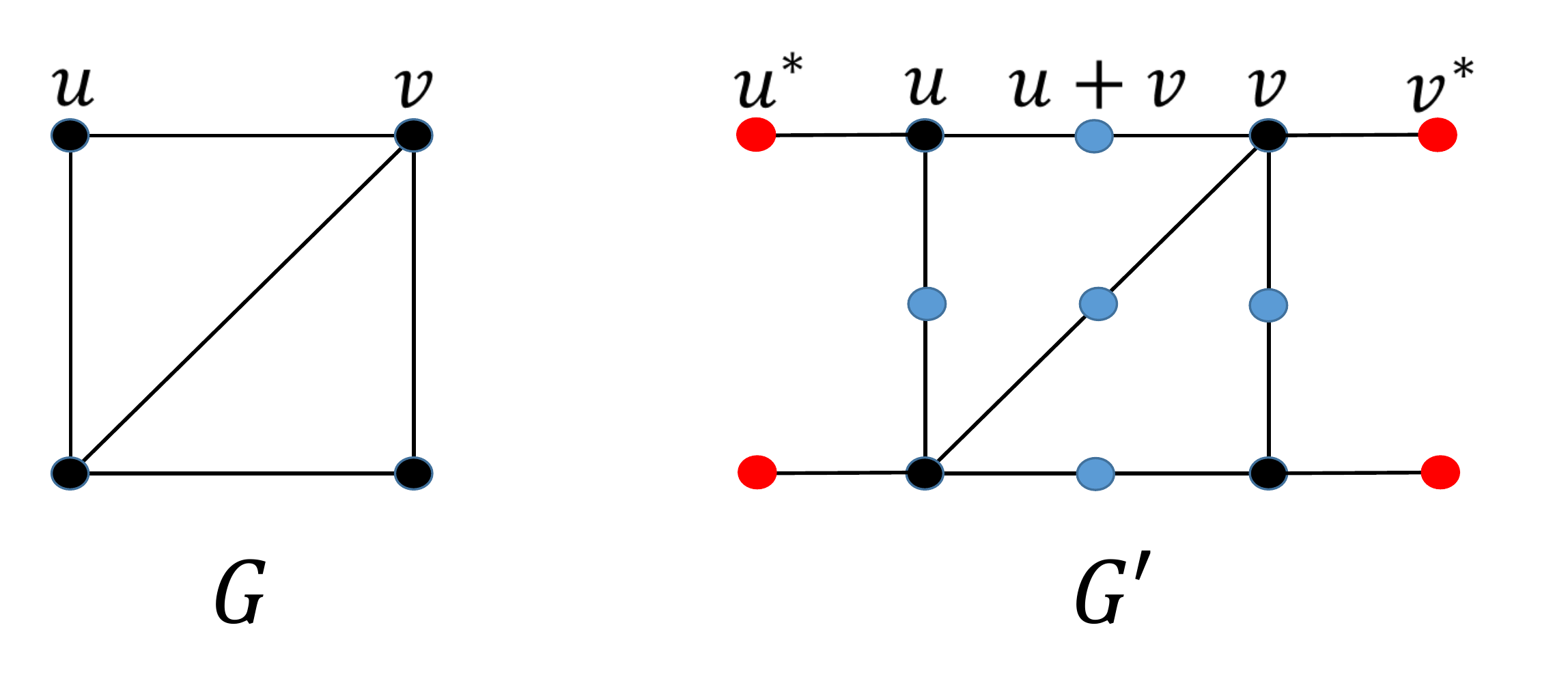}}
\caption{Construction of the extended graph $G'=(U',E')$ for hard-core dimers
starting from the graph $G=(U,E)$ of the hard-core bosons model. 
The subsets of nodes $U^*$ and $U^+$ are highlighted in red and blue respectively.
A boson located at a node $u\in U$ is represented by a dimer occupying
the subset $\{u,u^*\}\subseteq U'$. 
\label{fig:d2b}
}
\end{figure}

A simulator HCD Hamiltonian acting on the Hilbert space of $m$-dimers $\calD_m(G')$ 
is defined as $H_\sm=\Delta H_0+V$, where 
\begin{equation}
\label{HCDsim1}
H_0=\sum_{w\in U^+} \Delta_w n_w
\end{equation}
penalizes $m$-dimers occupying  the extra nodes located at the centers of edges of $G$.
For now we set $\Delta_w=1$ for all $w\in U^+$. We shall need a more general 
expression for $\Delta_w$ in the case of non-homogeneous hopping amplitudes.
Clearly, $H_0$ has zero ground state energy and its ground subspace
is spanned by $m$-dimers $S\subseteq U'$ such that $S\cap U^+=\emptyset$. 
The perturbation is defined as $V=\Delta^{2/3} V_\main + \Delta^{1/3}\tilde{V}_\extra+V_\extra$, where 
\begin{equation}
\label{HCDsim2}
V_\main=-t^{1/3} \sum_{\{u,v\}\in U'} W_{u,v}, 
\end{equation}
\begin{equation}
\label{HCDsim3}
\tilde{V}_{\extra}=t^{2/3}  \sum_{u\in U} d(u) \, n_u.
\end{equation}
\begin{equation}
\label{HCDsim4}
V_\extra=H_\diag + t\sum_{u\in U} d_2(u)  \, n_u,
\end{equation}
For now we define  $d(u)$ as the degree of a node $u$ in the original  graph $G$
and $d_2(u)\equiv d(u)(d(u)-1)$.   We shall need a more general 
expression for $d(u)$ and $d_2(u)$ in the case of non-homogeneous hopping amplitudes.

Let us check that the perturbation has all the properties stated in Lemma~\ref{lemma:3rd}.
First, we claim that $(V_\main)_{--}=0$. Indeed, suppose $|S\rangle$ is a ground state of $H_0$.
Then $S$ must be a union of  dimers $\{u,u^*\}$ with $u\in U$.
Since $(u,u^*)$ is the only edge of $G'$  attached to $u^*$, the only hopping terms that can map $S$ to some $m$-dimer 
$S'$  are those that replace some dimer $\{u,u^*\}\subseteq S$ with a dimer
$\{u,u+v\}$ for some  $(u,v)\in E$, see Fig.~\ref{fig:d2b1}. 
This requires a single hopping from $u^*$ to $u+v$.  Then
$S'$ has  a particle at some node $u+v$ and thus $|S'\rangle$  is an excited state of $H_0$. 
Thus $(V_\main)_{--}=0$. The operators $V_\extra$ and $\tilde{V}_\extra$ are block-diagonal
simply because they are diagonal. 

Let us now describe the encoding $\calE\, : \, \calB_{m,2}(G) \to \calD_m(G')$.
Recall that $\calB_{m,2}(G)$ and $\calD_m(G')$ are the Hilbert spaces of the target 
and the simulator models.  
Given a $2$-sparse subset of nodes $S\subseteq U$   in the graph $G$
let $\calE(S)\subseteq U \cup U^*$ be the subset 
of nodes in the graph $G'$ 
that includes all nodes $u\in S$ and all nodes $u^*$ such that $u\in S$. 
Define $\calE|S\rangle =|\calE(S)\rangle$. Obviously, $\calE$ is an isometry.
Let us check that $\image{(\calE)}$ coincides with the ground subspace of $H_0$. 
Indeed, suppose $S\subseteq U'$ is a ground state of $H_0$, that is, $S$ is  an $m$-dimer in $G'$ such that 
$S\subseteq U\cup U^*$. 
Then all dimers in $S$ must have a form $\{u,u^*\}$ for some $u\in U$.
Consider any distinct nodes $u,v\in S\cap U$.
By definition of an $m$-dimer, any dimers in $S$ are separated by at least three edges in
the graph $G'$. Then
the nodes $u$ and $v$ are separated by at least two edges in the graph $G$, that is,
$S\cap U$ is a $2$-sparse subset of $m$ nodes in the graph $G$.
Since $S=\calE(S\cap U)$, this shows that  $S$ belongs to the image of $\calE$. 
Conversely, if $S\subseteq U$ is any $2$-sparse
subset of  $m$ nodes in $G$ 
 then $\calE(S)$  is an $m$-dimer
in $G$ such that $\calE(S)\subseteq U\cup U^*$. 
This proves that $\image{(\calE)}$ coincides with the ground subspace of $H_0$.

Let us now check condition Eq.~(\ref{3rdA}) of Lemma~\ref{lemma:3rd}.
Consider any $m$-dimer $S\subseteq U'$ such that $S\cap U^+=\emptyset$.
We have already shown that $V_\main|S\rangle$ is a superposition of states $|S'\rangle$,
where $S'$ is obtained from $S$ by replacing a dimer $\{u,u^*\}$
with a dimer $\{u,u+v\}$ for some $u\in S\cap U$ and  $v\in U$ such that $(u,v)\in E$,
see Fig.~\ref{fig:d2b1}. 
By definition of an $m$-dimer, $\{u,u^*\}$ is separated from all other dimers of $S$ by at least 
three edges of the graph $G'$. However, since $S\cap U^+=\emptyset$, this is possible
only if $\{u,u^*\}$ is separated from all other dimers of $S$ by at least 
four edges of $G'$. Then the dimer $\{u,u+v\}$ is 
 separated from all other dimers of $S'$ by at least three
edges of $G'$, that is, $S'$ is an $m$-dimer. 
Note also that $S'$ occupies exactly one node of $U^+$, that is, $H_0|S'\rangle=|S'\rangle$
and thus $H_0^{-1} |S'\rangle=|S'\rangle$ (recall that we set $\Delta_w=1$ in the case of homogeneous hopping amplitudes).
The above arguments show that 
\begin{equation}
\label{tricky1}
(V_\main)_{+-}|S\rangle =-t^{1/3}\sum_{u\in S\cap U} \; \; \sum_{v\, : \, (u,v)\in E} \;
W_{u^*,u+v} |S\rangle.
\end{equation}
Using the above equation one can easily get
\begin{eqnarray}
\label{tricky2}
\langle S'|(V_\main)_{-+}H_0^{-1} (V_\main)_{++} H_0^{-1} (V_\main)_{+-}|S\rangle&=&
- t\sum_{(u,v)\in E} \langle S'| W_{u+v,v^*} W_{u,v} W_{u^*,u+v} |S\rangle\\
&& -t\sum_{(u,v)\ne (u,w)\in E}\; \langle S'| W_{u+w,u^*} W_{u+v,u+w} W_{u^*,u+v} |S\rangle \nonumber
\end{eqnarray}
Here $S'$ is some  $m$-dimer $S'\subseteq U'$ such that $S'\cap U^+=\emptyset$.
The terms in the first and the second line in the righthand side of Eq.~(\ref{tricky2})
 describe triple-hopping processes shown on Fig.~\ref{fig:d2b1}
and Fig.~\ref{fig:d2b2} respectively.  The former implements a logical hopping operator
$\overline{W}_{u,v}=\calE W_{u,v} \calE^\dag$, while the latter generates unwanted terms
proportional to $\overline{n}_u d(u) (d(u)-1)$, where $d(u)$ is the degree of $u$ in the graph $G$. Thus
\begin{figure}[h]
\centerline{\includegraphics[height=5cm]{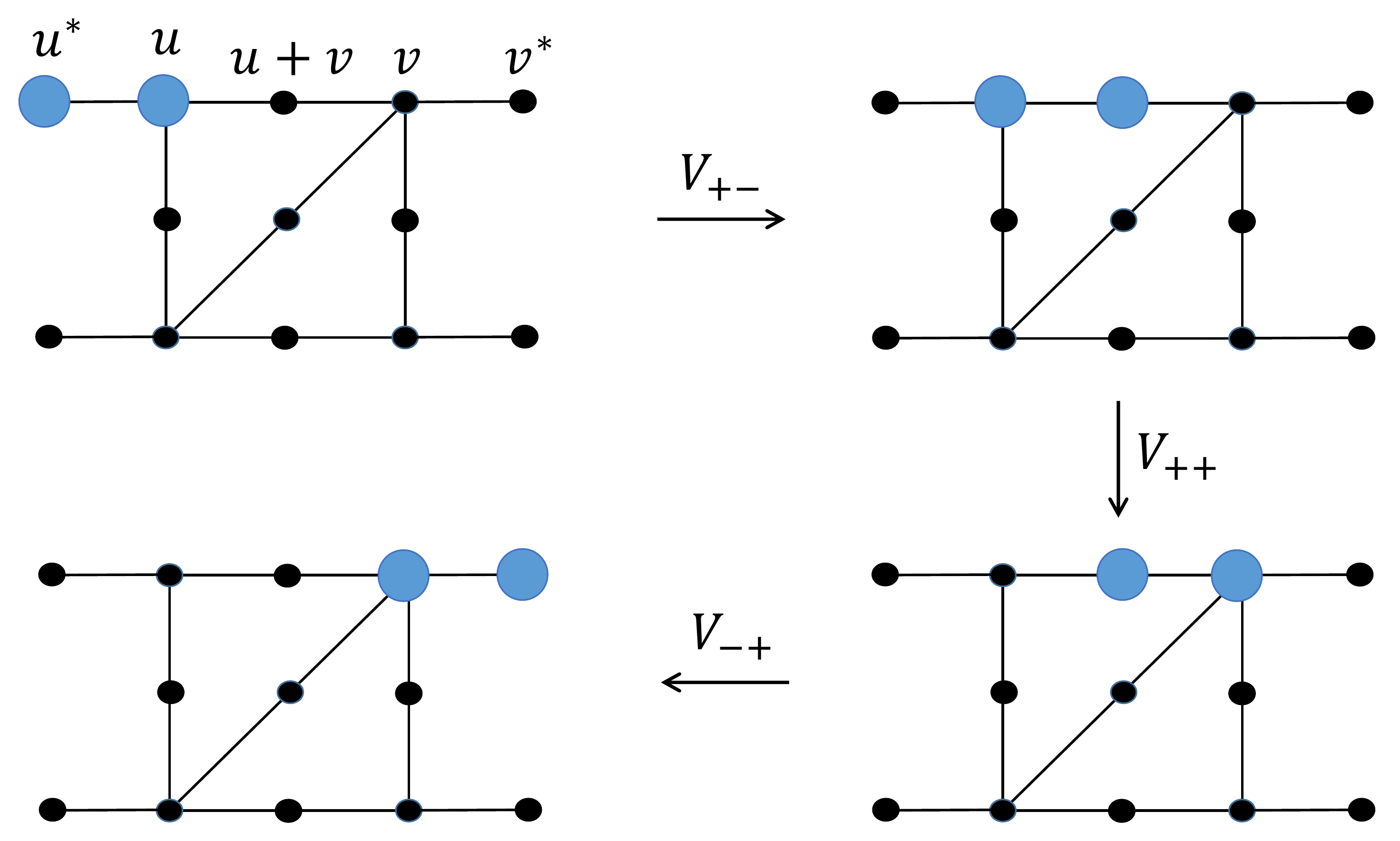}}
\caption{A third-order process transfers a dimer 
from $\{u,u^*\}$ to $\{v,v^*\}$. Here $V\equiv V_\main$.
Since each dimer in $G'$ encodes one particle in $G$,
this process simulates the logical hopping operator $\overline{W}_{u,v}$. 
\label{fig:d2b1}
}
\end{figure}

\begin{figure}[h]
\centerline{\includegraphics[height=5cm]{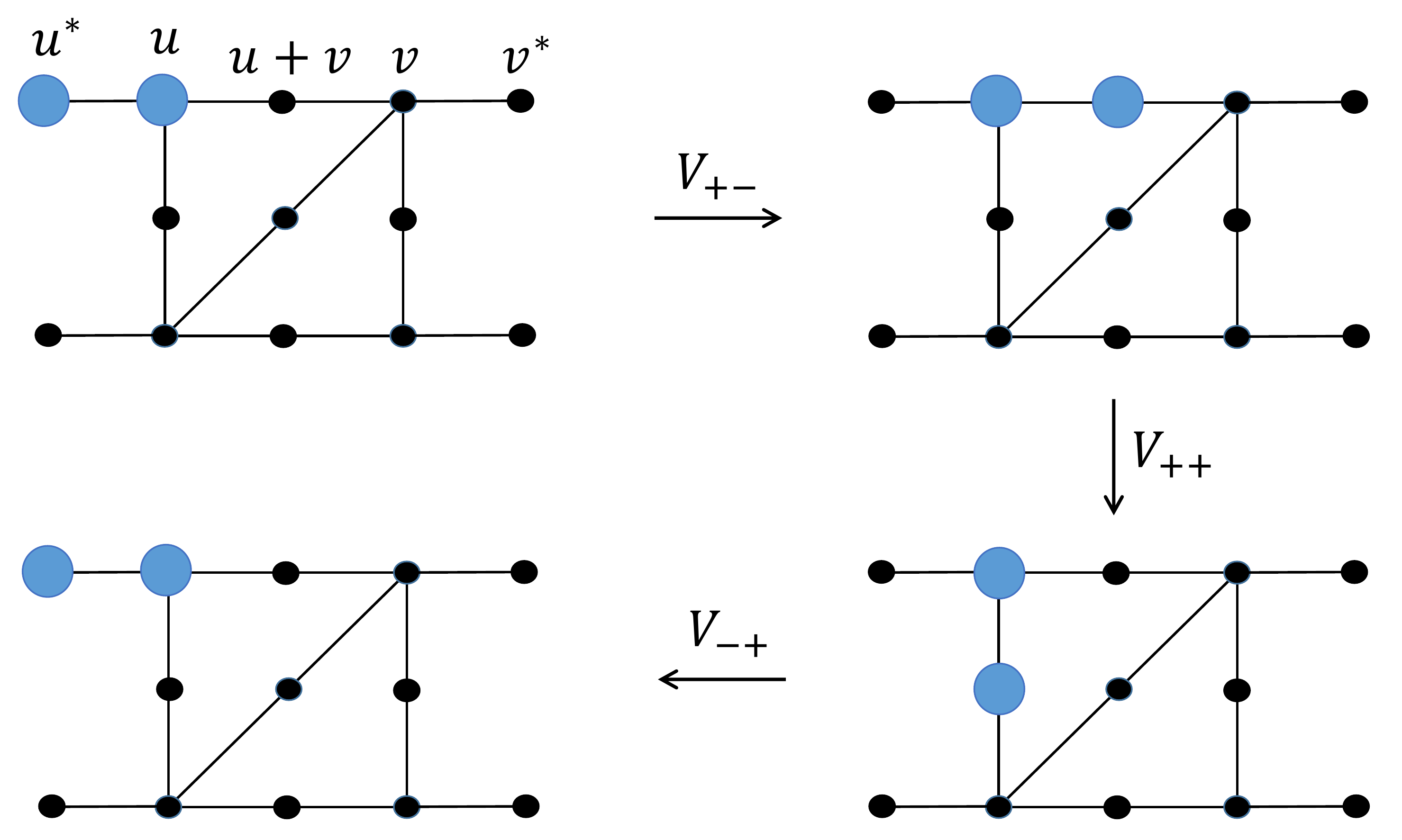}}
\caption{An unwanted  third-order process transfers a dimer 
from $\{u,u^*\}$ back to $\{u,u^*\}$. 
\label{fig:d2b2}
}
\end{figure}
\begin{equation}
\label{tricky4}
(V_\main)_{-+}H_0^{-1} (V_\main)_{++} H_0^{-1} (V_\main)_{+-}
= - t\sum_{(u,v)\in E}  \overline{W}_{u,v}
-t\sum_{u\in U} d_2(u) \overline{n}_u.
\end{equation}
Note that the last term is canceled by $(V_\extra)_{--}$, so that 
\[
(V_{\extra})_{--}+
(V_\main)_{-+}H_0^{-1} (V_\main)_{++} H_0^{-1} (V_\main)_{+-}
=\overline{H}_{\diag} - t\sum_{(u,v)\in E}  \overline{W}_{u,v} =\overline{H}_\tgt
\]
which proves  condition Eq.~(\ref{3rdA})
of Lemma~\ref{lemma:3rd}. 
It remains to check condition Eq.~(\ref{3rdB}). 
Using Eq.~(\ref{tricky1}) again one gets 
\begin{equation}
\label{tricky6}
\langle S'|(V_\main)_{-+} H_0^{-1} (V_\main)_{+-}|S\rangle=t^{2/3} \sum_{u\in S} d(u) \delta_{S,S'}=
 \langle S'|t^{2/3} \sum_{u\in U} d(u) n_u |S\rangle=\langle S'|\tilde{V}_{\extra}|S\rangle.
\end{equation}
Here $(V_\main)_{+-}$ moves a  particle from $u^*$ to $u+v$ 
and $(V_\main)_{-+}$ returns the particle back from $u+v$ to $u^*$. 
Thus all conditions of Lemma~\ref{lemma:3rd} are satisfied.

Suppose now that  $H_\tgt\in \HCB_2(n,m,J)$ has non-homogeneous hopping amplitudes, that is,
\begin{equation}
\label{target-range2gen}
H_\tgt=-\sum_{(u,v)\in E} t_{u,v} W_{u,v} + H_\diag, \quad \quad 0\le t_{u,v}\le t.
\end{equation}
By definition, $t\le J$. 
Let $(\eta,\epsilon)$ be the desired simulation error, see Definition~\ref{dfn:sim}.
Since we only need to approximate $H_\tgt$ with an error $\epsilon/2$,
see Lemma~\ref{lemma:3rd}, we can assume that 
\[
\frac{\epsilon}{2|E|} \le t_{u,v} \le t \quad \quad \mbox{for all $(u,v)\in E$}.
\]
For each node $w=u+v\in U^+$ define
\[
\Delta_w=\sqrt{\frac{t}{t_{u,v}}}.
\]
Note that $1\le \Delta_w\le \sqrt{2J \epsilon^{-1} |E| }\le \poly(n,J,\epsilon^{-1})$.
Given a node $u\in U$, let  $\calN(u)\subseteq U$ be the set of all nearest neighbors of $u$ in the graph $G$. 
Define
\[
d(u)=t^{-1/2}\sum_{v\in \calN(u)} \sqrt{t_{u,v}} \quad \quad  \mbox{and} \quad  \quad
d_2(u)=t^{-1}\sum_{v\ne v' \in \calN(u) } \;  \sqrt{t_{u,v} t_{u,v'}}.
\]
Let $H_\sm$ be the HCD simulator defined by Eqs.~(\ref{HCDsim1}-\ref{HCDsim4}). 
Exactly the same arguments as above show that $H_\sm$ satisfies conditions
of Lemma~\ref{lemma:3rd} with the target Hamiltonian Eq.~(\ref{target-range2gen}).

We conclude that any Hamiltonian $H_\tgt\in \HCB_2(n,m,J)$
can be simulated with an error $(\eta,\epsilon)$ by a Hamiltonian $H_\sm \in \HCD(n',m,J')$
where $n'=O(n^2)$ and $J'=\poly(n,J,\epsilon^{-1},\eta^{-1})$.  
The simulation uses an encoding $\calE$ that represents each particle of the target
model by a dimer in the simulator model. In particular, $\calE$ maps basis vectors to basis vectors. 
Note that the extended graph $G'$ is triangle-free
regardless of the original graph $G$, so the reduction from TIM to HCD
described in Section~\ref{sect:TIM2HCD} and the reduction
from HCD to range-$2$ HCB can be composed.

\section{Range-$2$ bosons with multi-particle interactions}
\label{sect:multi}

In this section we describe a second-order reduction 
that uses range-$2$ HCB as a simulator and 
generates the same  range-$2$ HCB Hamiltonian but with certain additional multi-particle interactions.
This  reduction  is only needed for the proof of  Theorem~\ref{thm:QA}. 

Consider a graph $G=(U,E)$. 
For any subset of nodes $S\subseteq U$ define a diagonal operator
\[
D(S)=\prod_{u\in S} (I-n_u).
\]
Let $d\le \poly(n)$ be any integer and $S_1,\ldots,S_d\subseteq U$ be arbitrary
subsets of nodes. Suppose our target Hamiltonian is
\begin{equation}
\label{enhanced1}
H_\tgt=H_{bos}-\sum_{\alpha=1}^d p_\alpha D(S_\alpha).
\end{equation}
Here $H_{bos} \in \HCB_2(n,m,J)$
describes the range-$2$ HCB on the  graph $G$ 
and $0\le p_\alpha\le J$ are arbitrary coefficients. 
The Hamiltonian $H_\tgt$ acts on the $m$-particle sector $\calB_m(G)$.
Let us show how to simulate $H_\tgt$ using the standard range-$2$ HCB model. 
The  simulator will be defined on an extended graph $G'=(U',E')$ obtained from $G$
by adding extra nodes and extra edges. For each interaction $D(S_\alpha)$
in Eq.~(\ref{enhanced1})
let us add two extra nodes denoted $a(\alpha)$ and $b(\alpha)$. 
We connect the node $b(\alpha)$ by an edge with every  node $u\in S_\alpha$. 
In addition, we connect the nodes $a(\alpha)$ and $b(\alpha)$ with each other,
see Fig.~\ref{fig:multi}.  The number of particles in the simulator model is $m'=m+d$,
where $d$ is the number of extra terms in Eq.~(\ref{enhanced1}). 
\begin{figure}[h]
\centerline{\includegraphics[height=3cm]{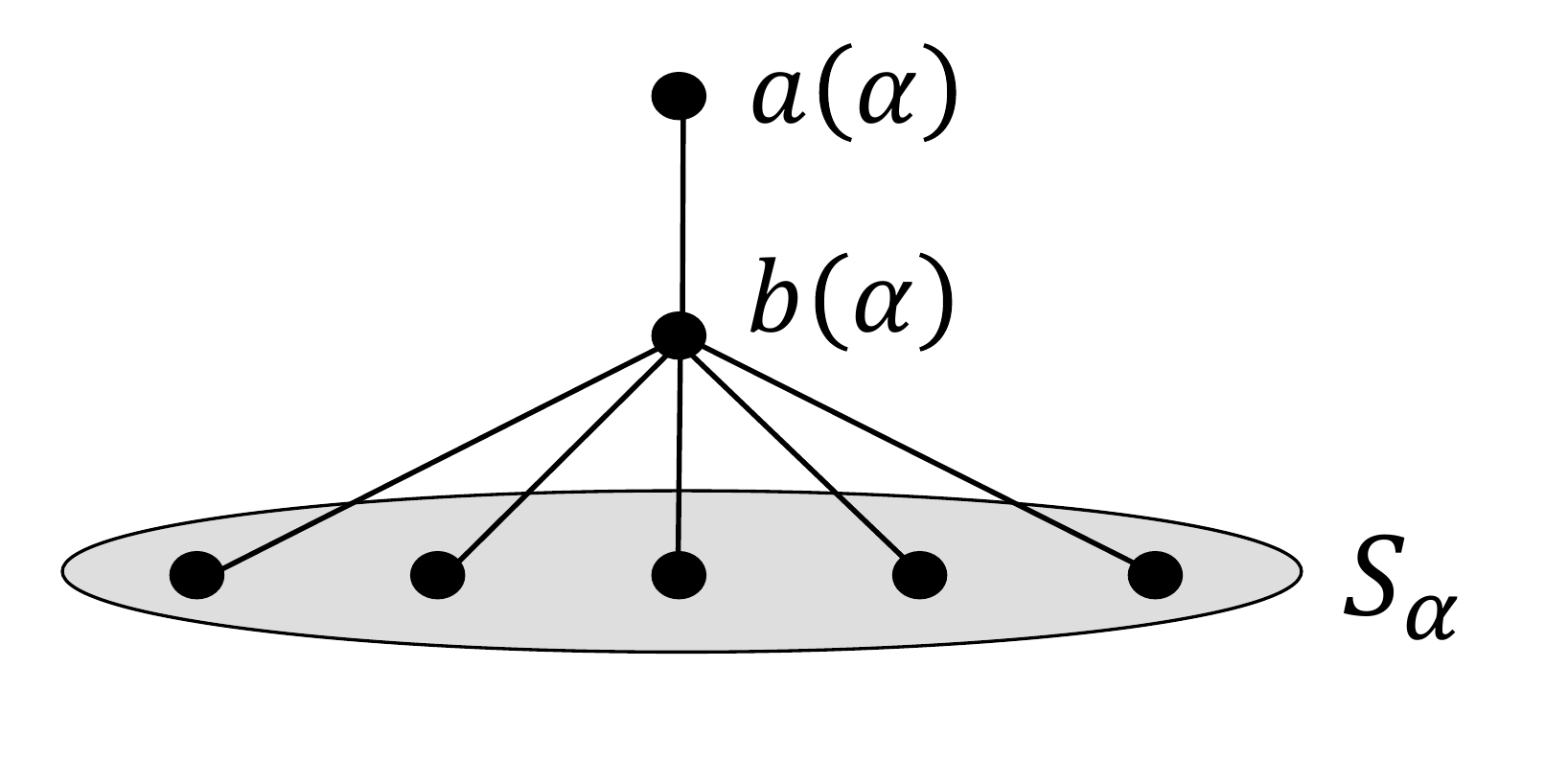}}
\caption{Simulation of  multi-particle interactions
$D(S_\alpha)$. We choose $H_0=I -n_{a(\alpha)}$ such that
the node $a(\alpha)$ is occupied for any ground state of $H_0$.
Then the node $b(\alpha)$ must be empty due to the $2$-sparsity constraint. 
The perturbation
$V$ moves the particle from $a(\alpha)$ to $b(\alpha)$ or vice verse.  \label{fig:multi}}
\end{figure}
Define a simulator
Hamiltonian as 
\begin{equation}
\label{enhanced2}
H_\sm =\Delta H_0 + V, \quad \quad H_0=\sum_{\alpha=1}^d I - n_{a(\alpha)},
\quad \quad V=\Delta^{1/2} V_{\main} + V_\extra,
\end{equation}
\begin{equation}
\label{enhanced3}
V_{\main}=-\sum_{\alpha=1}^d\,  \sqrt{p_\alpha} \, \, W_{a(\alpha),b(\alpha)}
\quad \mbox{and} \quad V_{\extra}=H_{bos}. 
\end{equation}
Clearly, $H_0$ has zero ground state energy and the ground subspace of $H_0$
is spanned by all $2$-sparse configurations of particles in $G'$ such that 
the node $a(\alpha)$ is occupied for each $\alpha$.
Note that the node $b(\alpha)$ must be empty due to the $2$-sparsity constraint. 
 Define an encoding $\calE\, : \, \calB_m(G)\to \calB_{m+d}(G')$
as follows. If $S\subseteq U$ is a $2$-sparse subset, define
$\calE(S)=S\cup \{ a(1), \cdots ,a(d)\} \subseteq U'$. 
Note that $\calE(S)$ is a $2$-sparse subset since a  node $a(\alpha)$ 
has only one neighbor $b(\alpha)$ and the latter never   belongs to $\calE(S)$. 
Define $\calE|S\rangle=|\calE(S)\rangle$. The above shows that
$\calE$ is an isometry and  the image of $\calE$ coincides with the ground subspace of $H_0$. 
Let us check that the perturbation $V$ satisfies conditions of Lemma~\ref{lemma:2nd}.
Obviously, $(V_\main)_{--}=0$ since any term in $V_\main$ moves a particle
from  $a(\alpha)$ to $b(\alpha)$ or vice verse. 
The operator $V_\extra$ is block-diagonal since it acts trivially on the extra nodes
$a(\alpha)$, $b(\alpha)$. Let us check condition Eq.~(\ref{2nd}) of Lemma~\ref{lemma:2nd}.
Note that no hopping in the original graph $G$ is
prohibited due to the presence of extra particles at $a(\alpha)$ since these
particles are separated from any node of $G$ by at least two edges. 
Thus $(V_\extra)_{--}=\overline{H}_{bos}$, where $\overline{H}_{bos}=\calE H_{bos} \calE^\dag$
is the encoded version of $H_{bos}$. 
Let us compute $(V_\main)_{-+}H_0^{-1} (V_\main)_{+-}$.
Suppose  $|S\rangle$ is a ground state of $H_0$. 
The $2$-sparsity condition  implies that a node $b(\alpha)$ cannot be occupied 
if $S_\alpha$ contains at least one particle. This shows that 
$(W_{a(\alpha),b(\alpha)})_{+-} |S\rangle=0$ if $S\cap S_\alpha\ne \emptyset$,
Otherwise, $(W_{a(\alpha),b(\alpha)})_{+-}$ moves the particle from $a(\alpha)$ to $b(\alpha)$. Thus
\[
(W_{a(\alpha),b(\alpha)})_{+-} |S\rangle=D(S_\alpha) |(S\backslash a(\alpha))\cup b(\alpha)\rangle.
\]
Note that the state in the righthand side is an eigenvector of $H_0$ with an eigenvalue $1$.
Note also that if $(V_\main)_{+-}$ moves a particle from some node $a(\alpha)$ to $b(\alpha)$
then $(V_\main)_{-+}$ must return the particle from $b(\alpha)$ to $a(\alpha)$. Thus
\[
(V_\main)_{-+}H_0^{-1} (V_\main)_{+-}=\sum_{\alpha=1}^r p_\alpha (W_{a(\alpha),b(\alpha)})_{-+}   (W_{a(\alpha),b(\alpha)})_{+-} = \sum_{\alpha=1}^r p_\alpha \overline{D(S_\alpha)}.
\]
Here we noted that $D(S_\alpha)^2=D(S_\alpha)=\overline{D(S_\alpha)}$.  
Thus
\[
(V_{\extra})_{--} -(V_\main)_{-+}H_0^{-1} (V_\main)_{+-} = \overline{H}_{bos} -  \sum_{\alpha=1}^r p_\alpha \overline{D(S_\alpha)} =\overline{H}_\tgt,
\]
that is, all conditions of Lemma~\ref{lemma:2nd} are satisfied. 
 
We have proved that any Hamiltonian $H_\tgt \in \HCB_2(n,m,J)$
with $d$ extra diagonal terms  $-p_\alpha D(S_\alpha)$ 
such that $0\le p_\alpha\le J$ 
can be simulated with an error $(\eta,\epsilon)$
 by the Hamiltonian $H_\sm \in \HCB_2(n',m',J')$,
where $n'=n+2d$, $m'=m+d$, and $J'=\poly(n,J,\epsilon^{-1},\eta^{-1})$. 
The simulation uses an encoding $\calE$ that maps basis vectors to
basis vectors. We shall absorb  the extra diagonal terms  into the Hamiltonian
 $H_\diag$  in all subsequent reductions.

\section{Reduction from range-$2$ bosons  to range-$1$ bosons}
\label{sect:range1}

In this section we construct a range-$2$ HCB simulator for 
a range-$1$ HCB model. It involves a second-order reduction. 
Consider a target Hamiltonian $H_\tgt\in  \HCB(n,m,J)$ 
describing the $m$-particle sector of range-$1$ hard-core bosons on some graph $G=(U,E)$
with $n$ nodes, 
\begin{equation}
\label{target-range1}
H_\tgt=-\sum_{(u,v)\in E} t_{u,v} W_{u,v} + H_\diag.
\end{equation}
The range-$2$ HCB  simulator will be defined on an
extended graph $G'=(U',E')$ obtained from $G$ by placing an extra node at the center of every
edge of $G$, see Fig.~\ref{fig:range2} for an example.
The extra node located at the center of an edge $(u,v)\in E$
will be denoted $u+v$. Then the extended graph $G'$ has a set of nodes
\begin{equation}
\label{range2eq1}
U'=U\cup U^+, \quad U^+=\{ u+v \, : \, (u,v)\in E\}.
\end{equation}
\begin{figure}[h]
\centerline{\includegraphics[height=4cm]{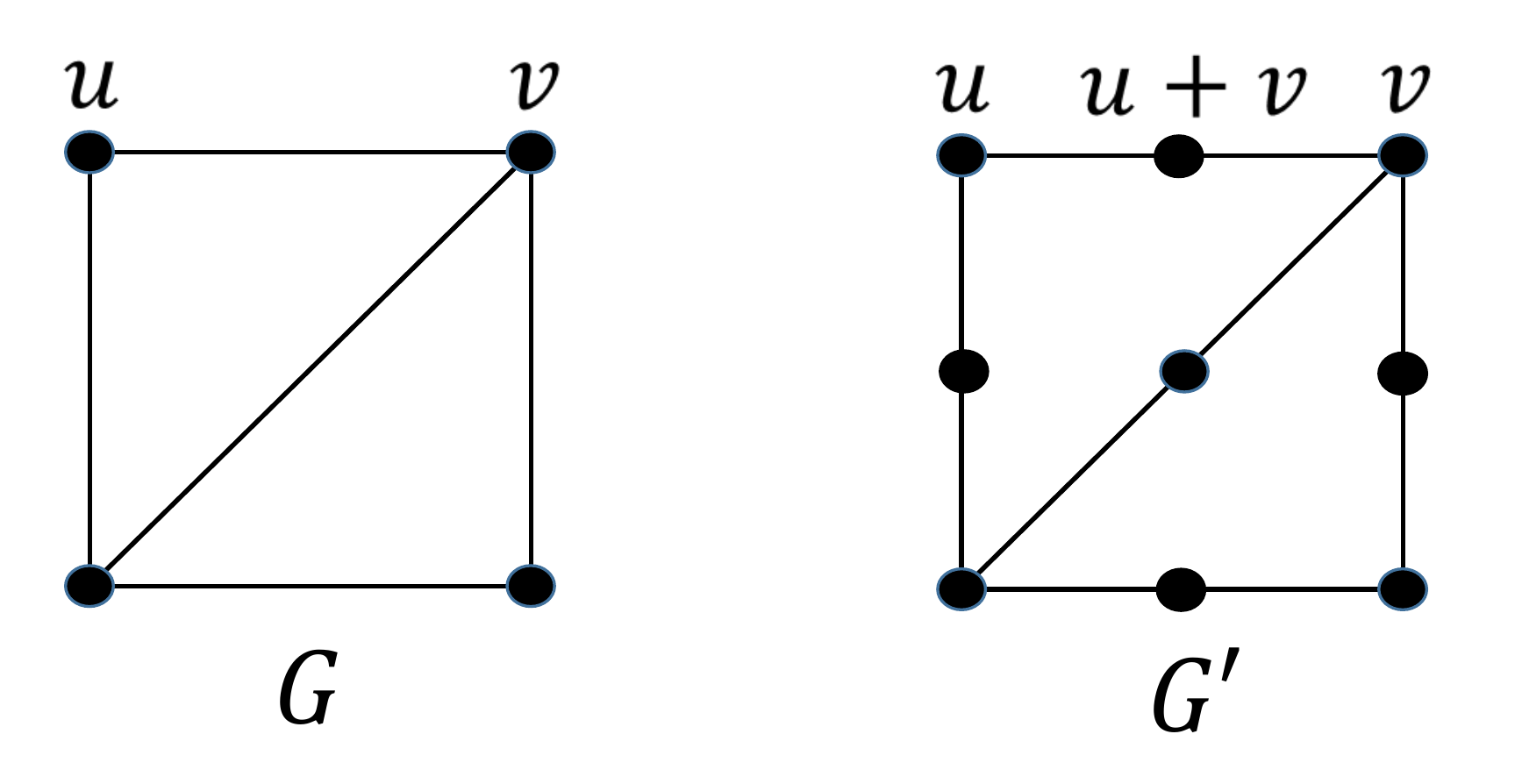}}
\caption{Construction of the extended graph $G'=(U',E')$ for the range-$2$ HCB
simulator  starting from the graph $G=(U,E)$ for the  target range-$1$ HCB.
\label{fig:range2}
}
\end{figure}

The simulator and the target models  have the same number of particles $m$. 
Thus the simulator has Hilbert space  $\calB_{m,2}(G')$ spanned
by $2$-sparse $m$-node subsets in the graph $G'$. 
 Define a simulator Hamiltonian as  $H_{\sm}=\Delta H_0+V$,
where
\begin{equation}
\label{range2eq3}
H_0=\sum_{w\in U^+}  n_w
\end{equation}
penalizes particles that occupy nodes located at the centers of edges of $G$.
We choose the perturbation as $V=\Delta^{1/2}V_{\main} + V_\extra$, where 
\begin{equation}
\label{range2eq4}
V_\main=-\sum_{(u,v)\in E'} t_{u,v}^{1/2} \, W_{u,v},
\end{equation}
\begin{equation}
\label{range2eq5}
V_{\extra}=H_\diag + \sum_{(u,v)\in E} t_{u,v} (n_u-n_v)^2.
\end{equation}
The sum in $V_{\extra}$ runs over pairs of nodes $u,v\in U$ considered as nodes
of $G'$. 
Clearly, $H_0$ has zero ground state energy and its ground subspace is
spanned by subsets of nodes  $S\subseteq U'$  such that $S\cap U^+=\emptyset$
and $|S|=m$. 
Note that any distinct nodes of $S$ are automatically separated by at least two edges of $G'$,
that is, $S$ is a $2$-sparse subset. 
Given any subset of nodes $S\subseteq U$ in the graph $G$
such that $|S|=m$, 
let $\calE(S)$ be the corresponding subset of nodes in the graph $G'$. 
We define the encoding $\calE\, : \, \calB_m(G)\to \calB_{m,2}(G')$
such that $\calE|S\rangle=|\calE(S)\rangle$.
The above shows that $\image{(\calE)}$ coincides with the ground subspace of $H_0$. 

Let us check that the perturbation $V$ satisfies the conditions of Lemma~\ref{lemma:2nd}.
First, we note that $(V_\main)_{--}=0$. Indeed, suppose $S\subseteq U'$ is a ground state of $H_0$.
Then $S\cap U^+=\emptyset$. Thus $V_\main$ can only move a particle from some node
$u\in U$ to some node $v\in U^+$ which produces an excited state of $H_0$. 
The operator $V_\extra$ is block diagonal because it is diagonal.

Let us now check condition Eq.~(\ref{2nd}) of Lemma~\ref{lemma:2nd}.
Consider any ground state of $H_0$, that is, $m$-node subset $S\subseteq U'$ such that $S\cap U^+=\emptyset$.
We claim that 
\begin{equation}
\label{range2eq6}
(V_\main)_{+-}|S\rangle = - \sum_{u\in S} \sum_{(u,v)\in E} t_{u,v}^{1/2} \, (1-n_v) W_{u,u+v} |S\rangle.
\end{equation}
Indeed,  the hopping terms in $V_\main$ can only move a particle from some node $u\in S$
to some node $u+v$ such that $(u,v)\in E$
and such that the resulting configuration of particles is $2$-sparse. 
The latter condition is satisfied iff $n_v=0$. 
Taking into account that $W_{u,u+v}|S\rangle$  is an eigenvector of $H_0^{-1}$ with an eigenvalue 
one, we get 
\begin{equation}
\label{range2eq7}
(V_\main)_{-+}H_0^{-1} (V_\main)_{+-}=
\sum_{(u,v)\in E} t_{u,v} W_{u,v}
+\sum_{u\in U} \sum_{(u,v)\in E} t_{u,v} n_u (1-n_v).
\end{equation}
Here the last term accounts for double-hopping processes where $(V_\main)_{+-}$ moves a particle
from $u$ to $u+v$ and $(V_\main)_{-+}$ returns the particle back to $u$.
 Using the identity $n_u(1-n_v)+n_v(1-n_u)=(n_u-n_v)^2$ one gets
 \begin{equation}
\label{range2eq7a}
(V_\main)_{-+}H_0^{-1} (V_\main)_{+-}=
\sum_{(u,v)\in E} t_{u,v} W_{u,v} + \sum_{(u,v)\in E} t_{u,v} (n_u-n_v)^2.
\end{equation}
The last term is exactly cancelled by $V_\extra$ which proves condition Eq.~(\ref{2nd}) of
Lemma~\ref{lemma:2nd}. Note that in this case the logical operators
$\overline{W}_{u,v}$ and $\overline{n}_u$ coincide with 
$W_{u,v}$ and $n_u$ since we encode each particle of the target model
by a single particle in the simulator model. 

To conclude, we have proved that 
any Hamiltonian
 $H_\tgt\in  \HCB(n,m,J)$ can be simulated with an error $(\eta,\epsilon)$
 by the Hamiltonian $H_\sm \in \HCB_2(n',m,J')$, where
$n'=O(n^2)$ and $J'=\poly(n,J,\epsilon^{-1},\eta^{-1})$. 
The simulation uses an encoding $\calE$ that maps basis vectors to
basis vectors.

\section{Range-$1$ bosons with a controlled hopping}
\label{sect:HCB*}

Consider a graph $G=(U,E)$ with $n$ nodes and a Hamiltonian 
$H_{bos}\in \HCB(n,m,J)$ describing the $m$-particle sector
of the range-$1$ HCB model on the graph $G$.
Suppose our target Hamiltonian is 
\begin{equation}
\label{cont1}
H_{\tgt}=H_{bos} - \sum_{(c;u,v)} \; t_{c;u,v}\,  n_cW_{u,v}, 
\end{equation}
where the sum runs over all triples of nodes $(c;u,v)$ such that $c\in U$,
$(u,v)\in E$,  and $c\notin \{u,v\}$. The term $n_cW_{u,v}$ describes a controlled
hopping process where the presence of particle at the node $c$ controls
whether the hopping between nodes $u,v$ is turned on or off.
The coefficients 
$t_{c;u,v}$ are the controlled hopping amplitudes.
We shall always assume that $t_{c;u,v}\ge 0$.
The Hamiltonian $H_\tgt$ acts on the $m$-particle sector   $\calB_m(G)$.
Let $\HCB^*(n,m,J)$ be the set of Hamiltonians $H_\tgt$ defined above
where $0\le t_{c;u,v}\le J$.
In this section we show how to simulate $H_\tgt$ by the standard range-$1$ HCB.
The simulation involves
a composition of a first-order and a second-order reduction.

The simulator model  will be defined on a graph $G'=(U',E')$
obtained from $G$ by adding certain extra nodes and extra edges.
Namely, for each triple $(c;u,v)$ that appears in Eq.~(\ref{cont1})
we add  an extra node $a(c;u,v)$ 
and a pair of extra edges connecting $a(c;u,v)$ to $u$ and $v$.
Let $n'=|U'|$ be the number of nodes in the extended graph and 
$U^+\subseteq U'$ be the set of all extra nodes $a(c;u,v)$. 
The simulator and the target models have the same number of particles $m$.

Our first reduction has a simulator Hamiltonian
\begin{equation}
\label{cont1a}
\tilde{H}_\sm=\tilde{\Delta} \tilde{H}_0+\tilde{V}, \quad \tilde{H}_0= \sum_{(c;u,v)} (I-n_c) n_{a(c;u,v)},
\end{equation}
where $\tilde{V} \in \HCB(n',m,\tilde{J})$ will  be chosen at the next reduction. 
The Hamiltonian $\tilde{H}_\sm$ acts on the Hilbert space $\calB_m(\calG')$. 
Let $\tilde{\calH}_-$ be the ground subspace of $\tilde{H}_0$. 
Obviously, $\tilde{\calH}_-$ is spanned by configurations of particles such that a node $a(c;u,v)$ 
can be occupied only if $c$ is occupied. 
This property must hold for each extra node  $a(c;u,v)$. 
Lemma~\ref{lemma:1st} shows that $\tilde{H}_\sm$ can simulate
the restriction of  any Hamiltonian from $\HCB(n',m,\tilde{J})$ onto  the subspace $\tilde{\calH}_-$.
In the rest of this section we assume that our full Hilbert space is $\calH=\tilde{\calH}_-$.
The above simulation uses the trivial encoding, that is, $\calE|S\rangle=|S\rangle$
if $S\subseteq U'$ is a ground state of $\tilde{H}_0$ and $\calE|S\rangle=0$ otherwise.

Our second reduction has  a simulator Hamiltonian 
\begin{equation}
\label{cont2}
H_\sm=\Delta H_0+V, \quad \quad H_0=\Delta \sum_{a\in U^+}  n_a, \quad \quad V=\Delta^{1/2} V_{\main} +V_\extra,
\end{equation}
where 
\begin{equation}
\label{cont3}
V_\main=-\sum_{(c;u,v)} ( t_{c;u,v})^{1/2} (W_{u,a(c;u,v)} +W_{v,a(c;u,v)}),
\end{equation}
\begin{equation}
\label{cont3a}
V_{\extra}=H_{bos}+ \sum_{(c;u,v)} t_{c;u,v} n_c(n_u+n_v).
\end{equation}
Here all operators are restricted to the subspace $\calH$ defined above. 
We choose an encoding $\calE\, : \, \calB_m(G)\to \calH$ 
that maps subsets of nodes in the graph $G$ to the corresponding subsets
of nodes in the extended graph $G'$. 
Obviously, $S\subseteq U'$ is a ground state of $H_0$ iff 
$|S|=m$ and 
$S\cap U^+=\emptyset$, that is, all the extra nodes $a(c;u,v)$ are empty. 
Thus $\image{(\calE)}$ coincides with the ground subspace of $H_0$.

Let us check that the perturbation $V$ satisfies all conditions of Lemma~\ref{lemma:2nd}.
We note that $(V_\main)_{--}=0$ since $V_\main$ can only move a particle
from (to) some ancillary node $a(c;u,v)$ which must be empty in any ground state of $H_0$.
The Hamiltonian $V_\extra$ is block-diagonal since $H_{bos}$ acts trivially on
all ancillary nodes whereas the second term in $V_\extra$ is diagonal. 
It remains to check condition Eq.~(\ref{2nd}) of Lemma~\ref{lemma:2nd}.
Let  $S\subseteq U'$ be any ground state of $H_0$. Then $S\cap U^+=\emptyset$.
We claim that  $W_{u,a(c;u,v)}|S\rangle=0$ unless $n_c=1$.
Indeed, if $n_c=0$ and $n_u=0$ then both nodes $u$ and $a(c;u,v)$ are empty.
If $n_c=0$ and $n_u=1$ then $W_{u,a(c;u,v)}$ moves a particle from $u$ to $a(c;u,v)$.
However, a state in which the node $a(c;u,v)$ is occupied and the node $c$ is empty
is orthogonal to the subspace $\calH$. Since the simulator model is restricted to $\calH$,
we have  $W_{u,a(c;u,v)}|S\rangle=0$ in both cases.
In the remaining case,  $n_c=1$,  one has
$W_{u,a(c;u,v)}|S\rangle=|S'\rangle$, where $S'=(S\setminus u) \cup a(c;u,v)$.
The above shows that 
\begin{equation}
\label{cont4}
(V_\main)_{-+}H_0^{-1} (V_\main)_{+-}=\sum_{(c;u,v)} t_{c;u,v} n_c(W_{u,v} + n_u+n_v).
\end{equation}
Here the first term describes processes where $(V_\main)_{+-}$  
moves a particle from $u$ to $a(c;u,v)$ 
and $(V_\main)_{-+}$  moves the particle from  $a(c;u,v)$  to $v$.
The last two terms describe processes where 
$(V_\main)_{+-}$ moves a particle from $u$  to $a(c;u,v)$ 
and $(V_\main)_{-+}$  returns the particle back to $u$.
The last two terms in Eq.~(\ref{cont4}) are canceled by $V_\extra$.
This proves condition Eq.~(\ref{2nd}) of Lemma~\ref{lemma:2nd}.
Note that in this case the logical operators
$\overline{W}_{u,v}$ and $\overline{n}_u$ coincide with 
$W_{u,v}$ and $n_u$ since we encode each particle of the target model
by a single particle in the simulator model. 

Combining Lemmas~\ref{lemma:sim3},\ref{lemma:1st},\ref{lemma:2nd} we 
 conclude that any  Hamiltonian $H_\tgt\in \HCB^*(n,m,J)$ can be simulated
with an error $(\eta,\epsilon)$ by the Hamiltonian $\tilde{H}_\sm \in \HCB(n',m,J')$ 
where $n'=O(n^3)$ and $J'=\poly(n,J,\epsilon^{-1},\eta^{-1})$.
The simulation uses an encoding $\calE$ that maps basis vectors to basis vectors.

\section{From range-$1$ bosons to $2$-local stoquastic Hamiltonians}
\label{sect:stoqLH}

Let us start from a simple classification of two-qubit stoquastic interactions. 
In this section we use the standard $|0\rangle$, $|1\rangle$ basis for a single qubit
and the corresponding product basis for $n$ qubits. 
\begin{lemma}
Let $H$ be a two-qubit hermitian operator such that
$H$ has real matrix elements in the standard basis and all
off-diagonal matrix elements of $H$ are non-positive. 
Then $H$ can be written as a sum of some diagonal two-qubit Hamiltonian $H_\diag$
 and a  convex linear combination
of operators
\begin{enumerate}
\item $-X\otimes |0\rangle\langle 0|$ and $-X\otimes |1\rangle\langle 1|$
\item $-|0\rangle\langle 0|\otimes X$ and $-|1\rangle\langle 1|\otimes X$
\item $-X\otimes X -Y\otimes Y$
\item $-X\otimes X + Y\otimes Y$
\end{enumerate}
\label{lemma:stoq} 
\end{lemma}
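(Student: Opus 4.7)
The plan is a direct matrix calculation in the computational basis $\{|00\rangle,|01\rangle,|10\rangle,|11\rangle\}$. First, split $H=H_\diag+H_{od}$ into its diagonal and strictly off-diagonal parts. Since $H$ is real symmetric in this basis, $H_{od}$ is determined by its six independent entries $H_{ab}=\langle a|H|b\rangle$ with $a\ne b$, and the stoquasticity hypothesis says precisely that each of these six entries is nonpositive. The strategy is to match each of the six off-diagonal positions to a distinct operator from the list whose off-diagonal support is exactly that single position, and then read off nonnegative coefficients from the entries of $H_{od}$.

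Next I would verify the matrix of each of the six listed operators. The four ``projector-controlled'' operators are immediate from their tensor-product form: $-X\otimes|0\rangle\langle 0|$ has off-diagonal support $-1$ only on the pair $(|00\rangle,|10\rangle)$, while $-X\otimes|1\rangle\langle 1|$, $-|0\rangle\langle 0|\otimes X$ and $-|1\rangle\langle 1|\otimes X$ cover the three remaining single-bit-flip pairs $(|01\rangle,|11\rangle)$, $(|00\rangle,|01\rangle)$ and $(|10\rangle,|11\rangle)$ respectively. For the remaining two positions, which correspond to simultaneous flips of both qubits, the key computation uses $Y|0\rangle=i|1\rangle$ and $Y|1\rangle=-i|0\rangle$ to yield
\begin{equation}
Y\otimes Y = |01\rangle\langle 10|+|10\rangle\langle 01|-|00\rangle\langle 11|-|11\rangle\langle 00|,
\end{equation}
whereas $X\otimes X$ has value $+1$ at all four of these off-diagonal positions. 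Hence $-X\otimes X-Y\otimes Y$ has off-diagonal support only on the weight-preserving pair $(|01\rangle,|10\rangle)$ with entry $-2$, and $-X\otimes X+Y\otimes Y$ has off-diagonal support only on the weight-flipping pair $(|00\rangle,|11\rangle)$ with entry $-2$.

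The six pairs just enumerated exhaust all six off-diagonal positions above the diagonal of a $4\times 4$ Hermitian matrix, and each listed operator has a nonpositive entry at exactly one of them. Assigning the coefficient $|H_{ab}|$ to the projector-controlled operator matching the pair $(a,b)$ and $|H_{ab}|/2$ to each of the two $XX\pm YY$ operators therefore writes $H_{od}$ as a nonnegative linear combination of the six operators, and setting $H_\diag$ equal to the diagonal part of $H$ completes the decomposition. Rescaling this nonnegative combination by its total weight (and absorbing the scalar into the overall normalization) recasts it as a convex combination, which is the form stated in the lemma. There is no substantive obstacle; the only care needed is the sign bookkeeping for $Y\otimes Y$, whose mixed signs are precisely what allow $XX+YY$ and $XX-YY$ to separate the weight-preserving and weight-flipping double-flip positions.
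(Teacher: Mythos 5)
Your proof is correct, and it is essentially the same argument as the paper's, carried out one level more directly: where the paper first expands $-H$ in the Pauli basis (using reality of $H$ to kill the odd-$Y$ terms) and then matches Pauli coefficients against off-diagonal matrix entries to extract nonnegative combinations, you bypass the Pauli step and observe outright that a real symmetric $4\times4$ matrix has exactly six independent off-diagonal entries, each in bijection with one of the six listed generators. Both proofs rely on the same computation — each operator has off-diagonal support at a single position above the diagonal, with the $XX\pm YY$ pair separating the weight-preserving and weight-flipping double flips — and both yield a nonnegative (rather than literally convex, as the lemma's phrasing loosely says) combination, so the minor normalization remark you add applies equally to the paper's version.
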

\begin{proof}
Let $G\equiv -H$. 
Since $G$ has real matrix elements, 
 the expansion of $G$ in the basis of Pauli operators contains only 
 the terms 
with even number of $Y$'s. Thus 
\begin{equation}
\label{lem0}
G=-H_\diag+ h_{XI} X\otimes I + h_{IX} I\otimes X + h_{XX} X\otimes X + h_{XZ} X\otimes Z
+ h_{ZX} Z\otimes X + h_{YY} Y\otimes Y,
\end{equation}
where $H_\diag$ is some diagonal Hamiltonian.  From 
\[
\langle 0,0|G|1,1\rangle=h_{XX}-h_{YY}\ge 0 \quad \mbox{and} \quad
\langle 0,1|G|1,0\rangle=h_{XX}+h_{YY}\ge 0
\]
one gets
\begin{equation}
\label{lem1}
h_{XX} X\otimes X+h_{YY} Y\otimes Y=p(X\otimes X+Y\otimes Y) + q(X\otimes X-Y\otimes Y),
\end{equation}
where $p=(h_{XX}+h_{YY})/2$ and $q=(h_{XX}-h_{YY})/2$ are non-negative coefficients. 
From 
\[
\langle 0,0|G|1,0\rangle=h_{XI}+h_{XZ}\ge 0 \quad \mbox{and} \quad 
\langle 0,1|G|1,1\rangle=h_{XI}-h_{XZ}\ge 0
\]
one gets
\begin{equation}
\label{lem2}
h_{XI}X\otimes I + h_{XZ} X\otimes Z = p X\otimes |0\rangle\langle0| + q X\otimes |1\rangle\langle 1|,
\end{equation}
where $p=h_{XI}+h_{XZ}$ and $q=h_{XI}-h_{XZ}$ are non-negative coefficients.
Similar calculation shows that 
\begin{equation}
\label{lem3}
h_{IX}I\otimes X + h_{ZX} Z\otimes X = p |0\rangle\langle0|\otimes X + q |1\rangle\langle 1|\otimes X,
\end{equation}
where $p=h_{IX}+h_{ZX}$ and $q=h_{IX}-h_{ZX}$ are non-negative coefficients.
The lemma now follows from Eqs.~(\ref{lem0}-\ref{lem3}).
\end{proof}

Let $H_\tgt\in \StoqLH(n,J)$ be some fixed $2$-local stoquastic Hamiltonian on $n$ qubits. 
Our goal is to simulate $H_\tgt$ by some Hamiltonian $H_\sm\in \HCB^*(n',m,J')$.
Recall that the latter 
describes the $m$-particle sector of range-$1$ hard-core bosons
with a controlled hopping 
on some graph $G=(U,E)$ with $n'$ nodes, see Section~\ref{sect:HCB*}.
We shall represent the $j$-th qubit of the target model   
by a pair of nodes $\{2j-1,2j\}\subseteq U$. 
The two basis states $|0\rangle$ and $|1\rangle$ of the $j$-th qubit
are represented by a particle located at the node $2j-1$ and $2j$ respectively
(the dual rail representation). 
Thus the number of particles in the simulator model is $m=n$.

For the sake of clarity we shall first explain how to construct an $\HCB^*$ 
 simulator  individually for each  two-qubit stoquastic interaction
listed in Lemma~\ref{lemma:stoq}. Thus we shall first consider the case $n=m=2$. 
We shall simulate interactions (1) and (2) using
a first-order reduction. Interactions (3) and (4) will require 
a composition of  a first-order and a third-order reductions. 
Then we shall explain how to combine the simulators together.
 To avoid  interference between simulators  we shall  introduce some  ancillary nodes such that 
a simulator is activated only if the corresponding ancillary node is occupied by a particle. 
 
Consider first the case $H_{\tgt}=-pX\otimes |0\rangle\langle 0|$ with $p>0$.
The $\HCB^*$ simulator is defined on a graph $G=(U,E)$,
where $U=\{1,2,3,4\}$ and $E=\{1,2\}$. 
The total number of particles is $m=2$, so that the simulator Hilbert space is $\calB_2(G)$.  
The simulator Hamiltonian is chosen as
\begin{equation}
\label{stoq(1)}
H_\sm = \Delta H_0+V, \quad \quad  H_0=n_1n_2 + n_3 n_4,
\end{equation}
\begin{equation}
\label{stoq(1)a}
V=-pn_3 W_{1,2}.
\end{equation}
Ground states of $H_0$ are subsets of nodes $\{i,j\}\subseteq U$, where $i\in \{1,2\}$
and $j\in \{3,4\}$. 
The ground subspace of $H_0$ encodes two logical qubits  as follows:
\begin{equation}
\label{stoq01}
|\overline{0,0}\rangle=|1,3\rangle, \quad |\overline{0,1}\rangle=|1,4\rangle, \quad |\overline{1,0}\rangle=|2,3\rangle, \quad
|\overline{1,1}\rangle=|2,4\rangle.
\end{equation}
This is analogous  to applying the dual-rail representation to each qubit. 
Obviously, $V$ commutes with $H_0$. 
Note that  $W_{1,2}$ implements the logical $\overline{X}$ on the first qubit
and $n_3$ implements the logical  operator $|\overline{0}\rangle\langle\overline{0}|$
on the second qubit. Lemma~\ref{lemma:1st} implies that 
$H_\sm$ can simulate the restriction of $V$ onto the logical subspace, that is,
$V_{--}= -p\overline{X}\otimes |\overline{0}\rangle\langle\overline{0}|$.
This is the desired target Hamiltonian. 
Using the same method one can simulate all elementary interactions (1) and (2)
in Lemma~\ref{lemma:stoq}.

Next consider the case 
\begin{equation}
\label{XXYY}
H_{\tgt}=-(p/2)(X\otimes X +Y\otimes Y)=-p(|1,0\rangle\la 0,1|+|0,1\rangle\langle 1,0|), \quad p> 0.
\end{equation}
The $\HCB^*$ simulator will be defined on a graph $G=(U,E)$ 
where $U=\{1,2,3,4,a\}$, see  Fig.~\ref{fig:graph1}.
We choose the total number of particles $m=2$.
 The simulation involves a composition of  a first-order and a second-order reductions. 

\begin{figure}[h]
\centerline{\includegraphics[height=4cm]{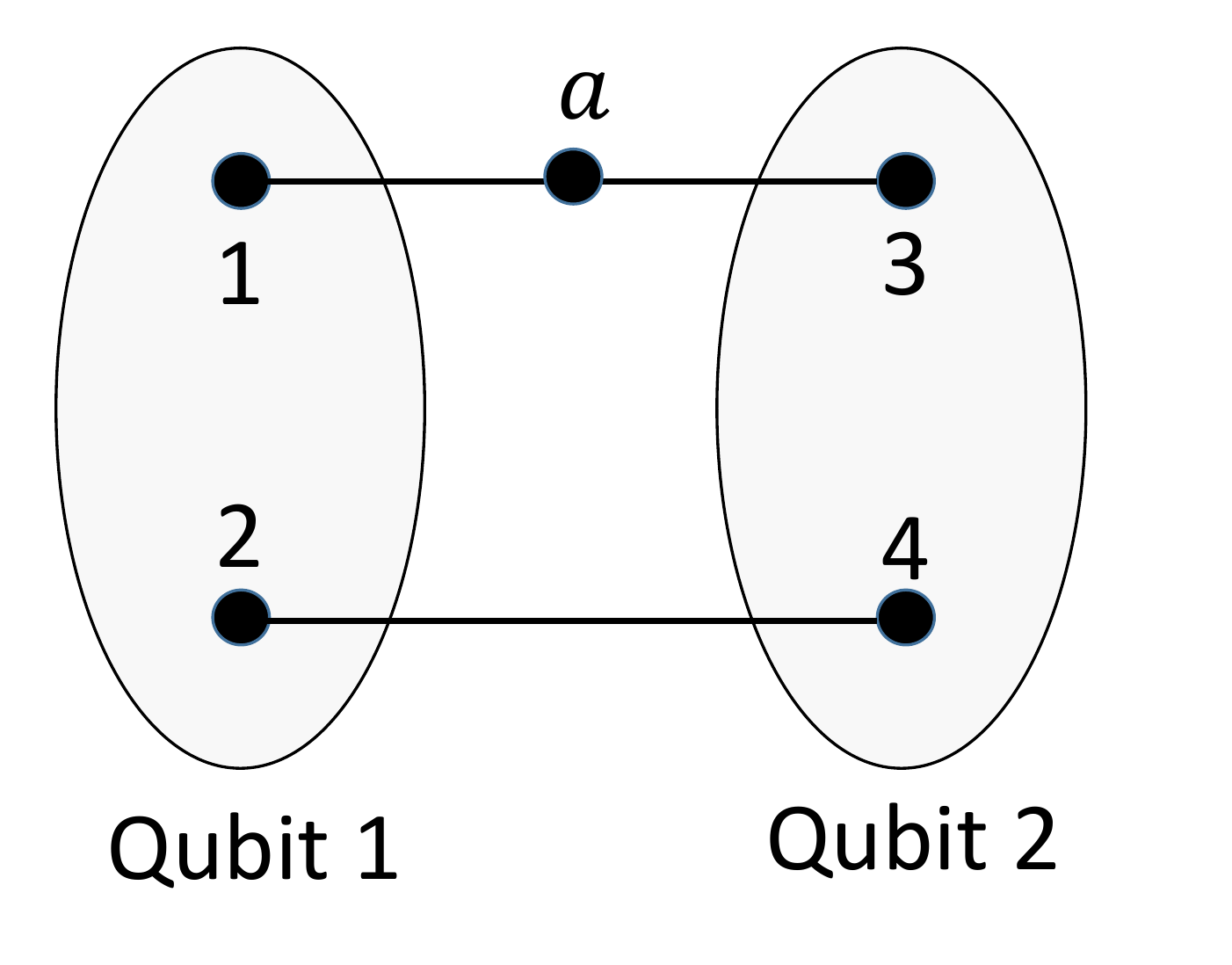}}
\caption{Graph $G=(U,E)$  of the $\HCB^*$ simulator  for  the target Hamiltonian
$-X\otimes X - Y\otimes Y$. The total number of particles is $m=2$.
   \label{fig:graph1}
}
\end{figure}

Our first reduction has a simulator Hamiltonian 
\begin{equation}
\label{stoq0}
\tilde{H}_\sm =\tilde{\Delta} \tilde{H}_0+\tilde{V}, \quad \quad  \tilde{H}_0=n_1n_2 + n_3 n_4,
\quad \quad  \tilde{V}\in \HCB^*(5,2,\tilde{J}).
\end{equation}
All above operators act on the Hilbert space $\calB_2(G)$.
The perturbation $\tilde{V}$ will be chosen at the next reduction. 
Let $\calH$ be the ground subspace of $\tilde{H}_0$. It is spanned by 
configurations of particles such that each qubit $\{1,2\}$
and $\{3,4\}$ contains at most one particle. 
Lemma~\ref{lemma:1st} implies that $\tilde{H}_\sm$ can simulate
the restriction
of any $\HCB^*$ Hamiltonian $\tilde{V}$ onto the subspace $\calH$.
Below we assume that our full Hilbert space is $\calH$.  
 
Our second reduction has a simulator Hamiltonian  
\begin{equation}
\label{stoq(3)}
H_\sm=\Delta H_0+V, \quad \quad H_0=n_a, \quad  \quad V=\Delta^{2/3} V_\main +\Delta^{1/3}\tilde{V}_\extra,
\end{equation}
where
\begin{equation}
\label{stoq(3)a}
V_{\main}= -p^{1/3} (W_{1,a}+W_{3,a} +n_a W_{2,4}),
\end{equation}
\begin{equation}
\label{stoq(3)b}
\tilde{V}_{\extra}= p^{2/3} (2n_1n_3+n_1n_4 + n_2n_3).
\end{equation}
Here all operators are restricted to the subspace $\calH$ with at most one particle per qubit.
Note that $H_\sm$ is an $\HCB^*$ Hamiltonian. 
The ground subspace of $H_0$ encodes two logical qubits according to Eq.~(\ref{stoq01}). 
Let us check that the perturbation satisfies all conditions of Lemma~\ref{lemma:3rd}.
Note that $(V_\main)_{--}=0$ since $V_\main$ can only move 
a particle from (to) the ancillary node $a$. The last term in $V_\main$ 
does not contribute to $(V_\main)_{--}$ since $n_a=0$ for any ground state of $H_0$.
Obviously, $\tilde{V}_\extra$ is block-diagonal. 

Let us check condition Eq.~(\ref{3rdA}) of Lemma~\ref{lemma:3rd}.
Informally, it says that the third-order hopping process generated by $V_\main$
must implement the logical hopping operator between the two logical qubits. 
 For example, suppose the initial
state is $|\overline{0,1}\rangle=|1,4\rangle$. Then $(V_\main)_{+-}$ moves
a particle from $1$ to $a$ by applying $W_{1,a}$, then $(V_\main)_{++}$ moves a particle from $4$ to $2$
by applying $n_aW_{2,4}$, 
and then $(V_\main)_{-+}$ moves a particle from $a$ to $3$ by applying $W_{3,a}$. This produces
the correct final state $|\overline{1,0}\rangle=|2,3\rangle$.
More formally, one can easily check that 
\begin{eqnarray}
\label{stoq5}
(V_\main)_{+-} |\overline{1,1}\rangle=(V_\main)_{+-} |2,4\rangle&=& 0,\nonumber \\
(V_\main)_{+-}  |\overline{0,0}\rangle=(V_\main)_{+-} |1,3\rangle&=& -p^{1/3} \left( |3,a\rangle + |1,a\rangle \right), \nonumber \\
(V_\main)_{+-}  |\overline{0,1}\rangle=(V_\main)_{+-}|1,4\rangle&=& -p^{1/3}  \, |4,a\rangle, \nonumber \\
(V_\main)_{+-}  |\overline{1,0}\rangle=(V_\main)_{+-}|2,3\rangle&=& -p^{1/3} \, |2,a\rangle. \nonumber 
\end{eqnarray}
From this one easily gets
\begin{equation}
\label{stoq7}
(V_\main)_{-+}H_0^{-1} (V_\main)_{++} H_0^{-1} (V_\main)_{+-}=-p|\overline{0,1}\rangle\langle \overline{1,0}| 
-p|\overline{1,0}\rangle\langle \overline{0,1}|=\overline{H}_\tgt.
\end{equation}
This proves condition Eq.~(\ref{3rdA}).
A similar calculation shows that 
\begin{equation}
\label{stoq8}
(V_\main)_{-+}H_0^{-1} (V_\main)_{+-} =p^{2/3} \left( 2|\overline{0,0}\rangle\langle \overline{0,0}|
 +|\overline{0,1}\rangle\langle \overline{0,1}| + |\overline{1,0}\rangle\langle \overline{1,0}|\right)
 =p^{2/3}(2n_1n_3+n_1n_4 + n_2n_3)
\end{equation}
which  proves condition Eq.~(\ref{3rdB}).

To compose the two reductions we extend $H_\sm$ defined in Eq.~(\ref{stoq(3)})
to the full Hilbert space $\calB_2(G)$ and substitute $\tilde{V}=H_\sm$ into Eq.~(\ref{stoq0}).
Combining Lemmas~\ref{lemma:sim3},\ref{lemma:1st},\ref{lemma:3rd} 
we conclude that $H_\tgt$ can be simulated with an error $(\eta,\epsilon)$ by 
$\tilde{H}_{\sm}\in \HCB^*(5,2,J')$ where $J'=\poly(p,\eta^{-1},\epsilon^{-1})$. 
The simulation uses the dual rail encoding $\calE$ defined in Eq.~(\ref{stoq01}).

Finally, consider the case $H_{\tgt}=-pX\otimes X+pY\otimes Y$, where $p>0$.
This Hamiltonian can be obtained from $H_{\tgt}=-p(X\otimes X+Y\otimes Y)$
by conjugating the second qubit with $X$. This is equivalent to exchanging
nodes $3$ and $4$ in the reduction described above. 
Hence we have constructed an $\HCB^*$ simulator for all elementary stoquastic interactions.

Suppose now that $H_{\tgt}\in \StoqLH(n,J)$ is a general $2$-local stoquastic Hamiltonian 
on $n$ qubits.
By Lemma~\ref{lemma:stoq},  
\begin{equation}
\label{stoq15}
H_{\tgt}=H_{\diag}+\sum_{\alpha=1}^m p_\alpha H_\alpha,
\end{equation}
where $H_{\diag}$ is a diagonal $2$-local Hamltonian
with terms proportional to $n_u$ and $n_u n_v$, where  $p_\alpha> 0$ are some coefficients,
and each term $H_\alpha$ is one of the four elementary stoquastic interactions 
applied to some pair of qubits. Obviously, the number of terms is $m\le \poly{(n)}$.
The corresponding $\HCB^*$ simulator $H_{\sm}$ will be defined on a graph $G=(U,E)$
with $2n+m'$ nodes, where $m'\le m$ is the number of interactions of type (3) or (4)
in $H_{\tgt}$, see Lemma~\ref{lemma:stoq}.
Without loss of generality, the first $m'$ interactions $H_\alpha$ are of type (3) or (4). 
The total number of particles in the $\HCB^*$ simulator will be $m=n$, so that
the Hilbert space of the simulator is $\calB_n(G)$. 
We shall need a composition of a first-order and a third-order reductions. 

Let us first define the set of nodes of $G$. 
Each qubit $i$ gives rise to a pair of nodes   $t(i)$ 
and $b(i)$  which
form the dual-rail representation of the qubit. 
We encode the basis states $|0\rangle$ and $|1\rangle$ by
a single particle located at the node $t(i)$ and $b(i)$ respectively. 
Each interaction $H_\alpha$ of type (3) or (4) gives rise to an extra node $a(\alpha)$
in $G$. 
This is the ancillary node used in the  graph shown on Fig.~\ref{fig:graph1}.

The set of edges of $G$ is defined as follows. For each interaction $H_\alpha$ of type (3)
coupling qubits $i,j$ 
we add an edge connecting nodes $t(i)$ and $a(\alpha)$, an edge
connecting nodes $a(\alpha)$ and $t(j)$, and an edge connecting nodes $b(i)$ and $b(j)$.
The last edge represents a controlled hopping with a control node $a(\alpha)$. 
For each interaction $H_\beta$ of type (4) coupling qubits $i,j$ we add 
an edge connecting nodes $t(i)$ and $a(\beta)$, an edge connecting
nodes $a(\beta)$ and $b(j)$, and an edge connecting nodes $b(i)$ and $t(j)$.
The last edge represents a controlled hopping with a control node $a(\beta)$.
The resulting subgraph of $G$ is 
shown on Fig.~\ref{fig:graph2}. This completes definition of the graph $G$ for the HCB simulator.

\begin{figure}[h]
\centerline{\includegraphics[height=4cm]{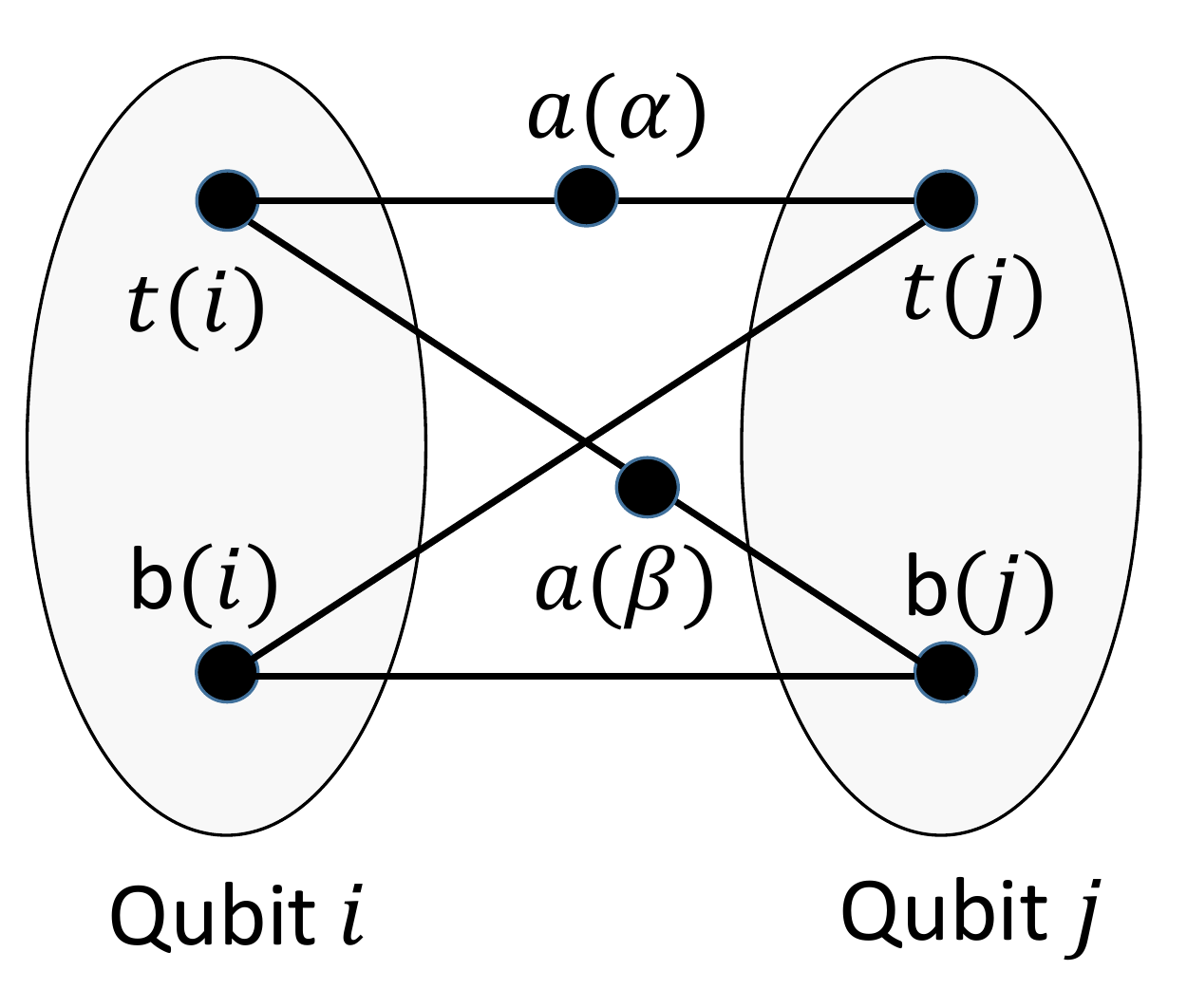}}
\caption{Construction of the graph $G$ for the HCB simulator. Each qubit $i$
is represented by a pair of nodes $t(i),b(i)$ using the dual-rail representation. 
The ancillary node $a(\alpha)$  controls hopping between $b(i)$ and $b(j)$.
The ancillary node $a(\beta)$ controls hopping between $b(i)$ and $t(j)$. 
\label{fig:graph2}
}
\end{figure}

Our first reduction has a simulator Hamiltonian 
\begin{equation}
\label{stoq16a}
\tilde{H}_\sm = \tilde{\Delta} \tilde{H}_0+\tilde{V}, \quad \tilde{H}_0=\sum_{i=1}^n n_{t(i)} n_{b(i)},
\quad \tilde{V}\in \HCB^*(2n+m',n,\tilde{J}).
\end{equation}
All above operators act on the Hilbert space $\calB_n(G)$.
The perturbation $\tilde{V}$ will be chosen at the next reduction. 
Let $\calH$ be the ground subspace of $\tilde{H}_0$. It is is spanned by
configurations of particles such that each qubit $\{t(i),b(i)\}$ contains at most one particle. 
Lemma~\ref{lemma:1st} implies that $\tilde{H}_\sm$ can  simulate the restriction
of any $\HCB^*$ Hamiltonian $\tilde{V}$ onto the subspace $\calH$.
Below we assume that our full Hilbert space is $\calH$. 

Our second reduction has a simulator Hamiltonian 
\begin{equation}
\label{stoq16}
H_\sm=\Delta H_0+V,\quad  \quad
H_0= \sum_{\alpha=1}^{m'}  n_{a(\alpha)},
\quad  \quad 
V=\Delta^{2/3} V_\main +\Delta^{1/3} \tilde{V}_\extra + V_\extra.
\end{equation}
Since the total number of particles is $n$ and each qubit
may contain at most one particle, the ground subspace of $H_0$ is spanned
by states with exactly one particle per qubit. It  encodes $n$ logical qubits
under the dual rail representation.  The perturbation operators are defined as the sums
of respective perturbation operators over all individual simulators. More formally,
\begin{equation}
\label{stoq17}
V_\main=\sum_{\alpha=1}^{m'} V_{\main}^\alpha, \quad \quad 
\tilde{V}_\extra=\sum_{\alpha=1}^{m'} \tilde{V}_{\extra}^\alpha,
\quad \quad V_\extra=H_\diag +\sum_{\alpha=m'+1}^m V^\alpha,
\end{equation}
where the perturbation operators carrying an index $\alpha$ 
are defined by Eqs.~(\ref{stoq(1)a},\ref{stoq(3)a},\ref{stoq(3)b}), depending
on the type of the interaction $H_\alpha$, 
with $p$ replaced by $p_\alpha$, with the nodes $\{1,2\}$ replaced by 
the nodes of the first logical qubit acted upon by $H_\alpha$, and with
the nodes $\{3,4\}$ replaced by the nodes of the second logical qubit acted upon by $H_\alpha$.

Let us check that the perturbation $V$ satisfies all conditions of Lemma~\ref{lemma:3rd}.
By definition,  $V_\extra$ and $\tilde{V}_\extra$ 
act trivially on the ancillary nodes $a(\alpha)$ and thus they are block diagonal.
 Any term in $V_\main$ moves a particle to (from) some
ancillary node $a(\alpha)$. Since all these nodes must be empty in the ground subspace of $H_0$,
one has $(V_\main)_{--}=0$. 

Let us now check condition Eq.~(\ref{3rdB}). We claim that 
\begin{equation}
\label{stoq18}
(V_\main)_{-+}H_0^{-1} (V_\main)_{+-} =\sum_{\alpha=1}^{m'} (V_\main^{\alpha})_{-+} H_0^{-1} (V_\main^\alpha)_{+-}.
\end{equation}
Indeed, any term in $(V_\main)_{+-}$ must move a particle
from some qubit node $u\in \{t(i),b(i)\}$ to some ancillary node $a(\alpha)$. 
In order to return the system to the ground subspace of $H_0$, the factor
$(V_\main)_{-+}$ must move the particle from $a(\alpha)$ to $u$. 
Thus $(V_\main^\alpha)_{-+} H_0^{-1} (V_\main^\beta)_{+-}=0$ for $\alpha\ne \beta$. This implies Eq.~(\ref{stoq18}).
Since $V_\main^\alpha$ and $\tilde{V}_\extra^\alpha$ satisfy condition Eq.~(\ref{3rdB})
for each individual simulator, Eq.~(\ref{stoq18}) implies that
$V_\main$ and $\tilde{V}_\extra$ also satisfy condition Eq.~(\ref{3rdB}).

Next let us check condition Eq.~(\ref{3rdA}) of Lemma~\ref{lemma:3rd}.
We claim that 
\begin{equation}
\label{stoq20}
(V_\main)_{-+}H_0^{-1} (V_\main)_{++} H_0^{-1} (V_\main)_{+-}
=\sum_{\alpha=1}^m (V_\main^\alpha)_{-+} H_0^{-1} (V_\main^\alpha)_{++} H_0^{-1} (V_\main^\alpha)_{+-}
\end{equation}
Indeed, suppose $(V_\main)_{+-}$ moves a particle from a qubit node $u\in \{t(i),b(i)\}$ to some 
ancillary node $a(\alpha)$. For concreteness, assume that $H_\alpha$ is 
an interaction of type~(3) coupling qubits $i,j$ and $u=t(i)$.
The factor $(V_\main)_{++}$ can either apply the controlled hopping term
proportional to $n_{a(\alpha)} W_{b(i),b(j)}$ 
or  move a particle from some other qubit node
$v\in \{t(i'),b(i')\}$ to some ancillary node $a(\beta)$ with $\alpha\ne \beta$.
In the latter case, however,
we create two excited ancillary nodes so that  $(V_\main)_{-+}$ will not be able to return the system  back to
the ground subspace of $H_0$. In the former case $(V_\main)_{-+}$ can return the system
to the ground subspace of $H_0$ only by moving the particle from $a(\alpha)$
to some node of qubit $i$ or $j$. This proves Eq.~(\ref{stoq20}).
Since $V_\main^\alpha$ satisfies condition Eq.~(\ref{3rdA})
for each individual simulator with $V_\extra^\alpha=0$, Eq.~(\ref{stoq20}) implies that
\begin{equation}
\label{stoq21}
(V_\main)_{-+}H_0^{-1} (V_\main)_{++} H_0^{-1} (V_\main)_{+-}=\sum_{\alpha=1}^{m'} p_\alpha \overline{H}_\alpha.
\end{equation}
Furthermore, one has $(V^\alpha)_{--}=p_\alpha \overline{H}_\alpha$ for interactions of type
(1) and (2), that is, for $m'<\alpha\le m$. Therefore
\begin{equation}
\label{stoq22}
(V_\extra)_{--}=\overline{H}_\diag + \sum_{\alpha=m'+1}^m p_\alpha \overline{H}_\alpha. 
\end{equation}
Combining this and Eq.~(\ref{stoq21}) one arrives at
\begin{equation}
\label{stoq23}
(V_\extra)_{--} + (V_\main)_{-+}H_0^{-1} (V_\main)_{++} H_0^{-1} (V_\main)_{+-} = \overline{H}_\tgt.
\end{equation}
Thus all conditions of Lemma~\ref{lemma:3rd} are satisfied. 

To compose the two reductions we extend $H_\sm$ defined in Eq.~(\ref{stoq16})
to the full Hilbert space $\calB_n(G)$ and substitute $\tilde{V}=H_\sm$ into Eq.~(\ref{stoq16a}).
Combining Lemmas~\ref{lemma:sim3},\ref{lemma:1st},\ref{lemma:3rd}
we conclude that any Hamiltonian
$H_\tgt\in \StoqLH(n,J)$ can be simulated with an error $(\eta,\epsilon)$
by the Hamiltonian $\tilde{H}_\sm\in \HCB^*(n',n,J')$ where
$n'=2n+m'=O(n^2)$ and $J'=\poly(n,J,\epsilon^{-1},\eta^{-1})$. 
The simulation uses the dual rail encoding $\calE$
that maps basis vectors to basis vectors.

Finally,  let us extend the above reduction  to $(2,k)$-local stoquastic Hamiltonians.
We have to modify the simulator model by adding multi-particle interactions
as described in Section~\ref{sect:multi}. More precisely,  
consider the target Hamiltonian defined in Eq.~(\ref{stoq15}) and 
suppose the term $H_\diag$ contains  $k$-qubit diagonal interactions
with strength at most $J$. 
Such Hamiltonian can be written as
\begin{equation}
\label{diag2k}
H_\diag=-\sum_{x\in \{0,1\}^k} \sum_{M\subseteq [n]}  p(x,M) |x\rangle\langle x|_M,
\end{equation}
where $p(x,M)$ are some real coefficients such that $|p(x,M)|\le \poly(J,n)$, 
and $|x\rangle\langle x|_M$ is the $k$-qubit projector  $|x\rangle\langle x|$ 
acting on a subset of qubits $M$. Performing an overall energy shift one can achieve
$p(x,M)\ge 0$ for all $x$ and $M$.  Consider any fixed  projector $|x\rangle\langle x|_M$. 
Recall that  the basis states  $|0\rangle$ and $|1\rangle$ of the $i$-th qubit
of the target model
are encoded by a particle located at the node
$t(i)$ and $b(i)$ respectively. It follows that 
the encoded version of a projector $|0\rangle\langle 0|_i$
can be written as $I-n_{b(i)}$. Likewise, the encoded version of a projector $|1\rangle\langle 1|_i$
can be written as $I-n_{t(i)}$. Thus the encoded version of
the projector  $|x\rangle\langle x|_M$ is
\[
\overline{|x\rangle\langle x|}_M=\prod_{i\in M\, : \, x(i)=0} (I-n_{b(i)}) \prod_{i\in M\, : \, x(i)=1} (I-n_{t(i)}).
\] 
Here $x(i)$ denotes the bit of the string $x$ associated with the $i$-th qubit. 
Let $S(x,M)\subseteq U$ be the union of all nodes $b(i)$ with $i\in M$ and $x(i)=0$
and all nodes $t(i)$ with $i\in M$ and $x(i)=1$. 
We conclude that the encoded version of $H_\diag$ is
\begin{equation}
\label{diag2k'}
\overline{H}_\diag = -\sum_{x\in \{0,1\}^k} \sum_{M\subseteq [n]}  p(x,M) \prod_{u\in S(x,M)} (I-n_u).
\end{equation}
As we have shown in Section~\ref{sect:multi} any such Hamiltonian can be
included into the range-$2$ HCB model by adding one extra second-order reduction.
Thus all our results obtained for $2$-local stoquastic Hamiltonians hold for
$(2,k)$-local stoquastic Hamiltonians.

\section{Proof of the main theorems}
\label{sect:proof}

Now we have all  ingredients needed for the proof of Theorems~\ref{thm:main},\ref{thm:QA}.
Consider a target Hamiltonian $H\in \StoqLH(n,J)$ or $H\in \StoqLH^*(n,J)$.
Let us prove that $H$ can be simulated with an arbitrarily small error $(\eta,\epsilon)$,
by a TIM Hamiltonian $H'\in \TIM(n',J')$ such that  $n'\le \poly(n)$ and 
$J'\le \poly(n,J,\epsilon^{-1},\eta^{-1})$. 
Here we use the definition of simulation given in Section~\ref{sect:sim}. 
The parameters $\eta,\epsilon$ will be specified later. 
Indeed, consider  the sequence of perturbative reductions constructed in 
Sections~\ref{sect:TIM3}-\ref{sect:stoqLH}.
It can be described by a sequence of Hamiltonians
$H_1,H_2,\ldots,H_R$ and encodings $\calE_1,\calE_2,\ldots,\calE_{R-1}$ 
such that  the Hamiltonian $H_t$ and the encoding $\calE_t$ simulate
$H_{t+1}$ with a small error $(\eta_t,\epsilon_t)$ for each $t=1,\ldots,R-1$.
Here $H_R=H$ is the desired target Hamiltonian and  $H_1=H'$ is a TIM
Hamiltonian. 
Choose  a simulation error $\eta_t=\eta/2R$ and
and $\epsilon_t=\epsilon/2R$  for each  each individual reduction.
Since $R=O(1)$, this implies $\eta_t=\Omega(\eta)$ and $\epsilon_t=\Omega(\epsilon)$. 
By construction, each Hamiltonian $H_t$  belongs to one of the classes defined in Table~\ref{tab:classes}
with some number of nodes (qubits)  $n_t$ and some interaction strength $J_t$.
In addition, each Hamiltonian $H_t$ (except for $H_R$) 
is a sum of a strong unperturbed part  $(H_t)_0$ with a spectral gap $\Delta_t\gg \|H_{t+1}\|$
and a weak perturbation.
We have shown that each reduction satisfies
\begin{equation}
\label{reduction3}
n_t\le \poly(n_{t+1}) \quad \mbox{and} \quad   J_t\le \poly(n_{t+1},J_{t+1},\epsilon^{-1},\eta^{-1})
\end{equation}
where $t=1,\ldots,R-1$. Using the initial conditions $n_R=n$, $J_R=J$, and
taking into account that $R=O(1)$, 
we conclude that   $n'\equiv n_1\le \poly(n)$ and $J'\equiv J_1\le \poly(n,J,\epsilon^{-1},\eta^{-1})$.

Let $\calE=\calE_1 \calE_2 \cdots \calE_{R-1}$ be the 
composition of all individual  encodings. 
Applying Lemma~\ref{lemma:sim3} one infers that the Hamiltonian
$H_1$ and the encoding $\calE$  simulate $H_R$ with
an error $(\tilde{\eta},\tilde{\epsilon})$, where 
\begin{equation}
\label{eq}
\tilde{\eta} \le \frac{\eta}2 + O(\epsilon) \max_t \Delta_t^{-1}
\quad \mbox{and} \quad
\tilde{\epsilon} \le \frac{\epsilon}2 + O(\epsilon) \max_t \frac{\|H_{t+1}\|}{\Delta_t}
\end{equation}
Increasing, if necessary, the spectral gaps $\Delta_t$ by a factor $\poly(\eta^{-1})$
one can achieve $\tilde{\eta}\le \eta$ and $\tilde{\epsilon}\le \epsilon$.

To prove Theorem~\ref{thm:main} we choose $\epsilon$
as the precision specified in the statement of the theorem. The parameter $\eta$
does not play any role here. We have to use all reductions described in Sections~\ref{sect:TIM3}-\ref{sect:stoqLH}
except for the one of 
Sections~\ref{sect:multi} (the latter generates multi-particle interactions that are only needed
for the proof of Theorem~\ref{thm:QA}).  Then $H_1$ is a TIM Hamiltonian with interactions
of degree-$3$. 
Lemma~\ref{lemma:sim1} implies that the $i$-th smallest eigenvalues of $H_1$ and $H_R$
differ at most by $\epsilon$ for all $i=1,\ldots,2^n$. This  proves Theorem~\ref{thm:main}
with $H'=H_1$.

To prove Theorem~\ref{thm:QA} we shall choose $\epsilon\le \delta/3$, where $\delta$ is the spectral gap 
of $H$.  We have to use all reductions described in Sections~\ref{sect:TIM2HCD}-\ref{sect:stoqLH}.
By construction, the encodings $\calE_t$  used in all these reductions map basis vectors to basis vectors.
In addition, one can efficiently compute the action of $\calE_t$ and $\calE_t^\dag$ on any basis vector. 
Thus the same properties hold for the full encoding $\calE=\calE_1 \calE_2 \cdots \calE_{R-1}$.
Lemma~\ref{lemma:sim1} implies that the Hamiltonian $H_1$
has a non-degenerate ground state  and a spectral gap at least $\delta-2\epsilon\ge \delta/3$.
 Let $|g\rangle$ and $|g'\rangle$ be the ground states
of $H$ and $H_1$.  By Lemma~\ref{lemma:sim2}, 
\[
\| \, |g'\rangle -\calE|g\rangle\,  \| \le \eta + C\delta^{-1} \epsilon
\]
for some constant coefficient $C$. Choosing $\epsilon=\min{ (\delta/3, \eta \delta C^{-1})}$ 
one can achieve   $\| |g'\rangle -\calE|g\rangle \|\le 2\eta$. 
Since $\eta$ can be arbitrarily small, this proves Theorem~\ref{thm:QA}
with $H'=H_1$. 

Let us remark that  the map $H\to H'$ in Theorem~\ref{thm:QA} can be made sufficiently smooth. 
More precisely, suppose $H$ smoothly depends on some parameter $\tau$ such that 
the $j$-th derivative of $H$ with respect to $\tau$ has norm at most $\poly(n)$ for any constant $j$. 
Then we claim that the $j$-th derivative of $H'$ with respect to $\tau$ has norm at most $\poly(n,\delta^{-1})$.
Indeed, suppose $H_\tgt=H_{t+1}$ and $H_\sm=H_t$ are the target and the simulator Hamiltonians 
used in some individual reduction and $(\eta_t,\epsilon_t)$ is the desired simulation error. 
For concreteness, consider the reduction of
Section~\ref{sect:stoqLH}. Note that the derivative of $H_\sm$ 
becomes  infinite if some of the coefficients $p_\alpha$ in Eq.~(\ref{stoq15})
becomes zero since $H_\sm$ contains terms  proportional to $p_\alpha^{1/3}$
and $p_\alpha^{2/3}$. 
 To avoid such singularities, let us choose a sufficiently small
cutoff value $p_{min}$ and replace  $p_\alpha$ by $\tilde{p}_\alpha=\sqrt{p_\alpha^2 +p_{min}^2}$
in Eq.~(\ref{stoq15})
(recall that the coefficients $p_\alpha$ must be non-negative).
This gives a new target Hamiltonian $\tilde{H}_\tgt$ such that 
$\|H_\tgt-\tilde{H}_\tgt\|\le p_{min} \poly(n)$.
We choose $p_{min}$ small enough so that $\|H_\tgt-\tilde{H}_\tgt\|\ll \epsilon_t$.
Let $\tilde{H}_\sm$ be the simulator Hamiltonian constructed for $\tilde{H}_\tgt$.
Then $\tilde{H}_\sm$ simulates $H_\tgt$ with the error approximately $(\eta_t,\epsilon_t)$ and 
the $j$-th derivative of $\tilde{H}_\sm$ has norm at most 
$\poly(n,t_{min}^{-1})=\poly(n,\epsilon^{-1}_t)=\poly(n,\delta^{-1})$. 
By introducing a similar cutoff in all remaining reductions one can easily check that 
the $j$-th derivative of $H'$ with respect to $\tau$ has norm at most 
$\poly(n,\delta^{-1})$.
It is known that an adiabatic path with the minimum spectral gap
$\delta$ such that the $j$-th derivative has norm at most $C_j$ can be
traversed in time $T=O(C_1\delta^{-2}+C_2\delta^{-2} + C_1^2\delta^{-3})$,
see~\cite{JRS07}. This implies Corollary~\ref{cor:4}.

Finally, let us remark that Theorem~\ref{thm:QA} can be extended to TIM
Hamiltonians with interactions of degree-$3$, although the corresponding encoding 
$\calE$ would no longer map basis vectors to basis vectors. 
Indeed, let us modify the above proof of Theorem~\ref{thm:QA} by 
including the reduction of Section~\ref{sect:TIM3}.
Then the final TIM Hamiltonian $H'$ has  interactions of degree-$3$. 
Let $\calE_1$ be the encoding used in the  reduction of  Section~\ref{sect:TIM3}.
Recall that $\calE_1$ encodes each qubit $u$ of the target model into 
a one-dimensional chain $L_u$  with a Hamiltonian
$H_\ring=-\sum_{j\in \ZZ_m } g Z_j Z_{j+1} + X_j$,
where $m\le \poly(n)$ and $g\approx 1$, see Section~\ref{sect:TIM3}.
Basis states of the logical qubit 
are $|\overline{0}\rangle\sim |\psi_0\rangle + |\psi_1\rangle$ and $|\overline{1}\rangle\sim |\psi_0\rangle - |\psi_1\rangle$,
where $\psi_0$ and $\psi_1$ are the ground states of $H_\ring$
satisfying $X^{\otimes m} \psi_{0,1}=\pm \psi_{0,1}$. 
Accordingly, the full encoding $\calE=\calE_1 \calE_2 \cdots \calE_{R-1}$
maps any basis vector   to a tensor product of the states $|\overline{0}\rangle$ and $|\overline{1}\rangle$.
Let us argue that the logical qubits can be efficiently initialized and measured.
Choose any physical qubit $i\in L_u$. 
Using Eqs.~(\ref{Z--},\ref{xi},\ref{xi_bound}) one gets $\langle \overline{0}|Z_i|\overline{0}\rangle =\xi$ and
$\langle \overline{1}|Z_i|\overline{1}\rangle =-\xi$, where $\xi\ge \poly(n^{-1})$. 
Thus one can measure the logical qubit $u$ in the $Z$-basis by measuring 
any physical qubit of $L_u$ in the $Z$-basis.
However, the measurement has to be repeated $\poly(n)$ times to get a reliable statistics. 
One can measure the logical qubit in the $X$-basis in a single shot by measuring
every qubit of $L_u$ in the $X$-basis. Computing the product of the measured outcomes
gives the eigenvalue of $X^{\otimes m}$  which differentiates between $\psi_0$ and $\psi_1$. 
Finally, the state $|\overline{+}\rangle=|\psi_0\rangle$ can be prepared by the adiabatic evolution 
starting from the product state $|+^{\otimes m}\rangle$ and adiabatically turning on the parameter
$g$ in the Hamiltonian $H_\ring$. It is well-known that the minimum spectral gap of $H_\ring$ is
$\Omega(m^{-1})$, so that the initialization can be done in time $\poly(n)$. 
The logical state $|\overline{0}\rangle$ can be obtained from $|\overline{+}\rangle$
by adiabatically changing the logical Hamiltonian from $-\overline{X}$ to $-\overline{Z}$.

\appendix
\section{Bounds on the energy splitting and matrix elements for the Ising chain} 
\label{sect:app}

In this section we prove Eqs.~(\ref{delta_bound},\ref{xi_bound}).

Let us first prove Eq.~(\ref{delta_bound}). Choose any $g^{-1}<R<1$ and consider contours
\[
C=\{z\in \CC^2 : \, |z|=R\} \quad \mbox{and} \quad C^{-1}=\{z\in \CC^2 : \, |z|=R^{-1}\}
\] 
We orient $C$ and $C^{-1}$ clockwise and counter-clockwise respectively. 
Denote $E_+\equiv E_0$ and $E_-\equiv E_1$, see Fact~\ref{fact:TIM1}.
Using Eq.~(\ref{E012}) one can easily check that
\begin{equation}
\label{int1}
E_\pm=\frac{m}{2\pi i }\oint_{C\cup C^{-1}} \frac{dz \sqrt{g^2 +1 -g(z+z^{-1})}}{z(z^m\mp 1)}.
\end{equation}
Since the contours $C$ and $C^{-1}$ can be mapped to each other via a change
of variable $z\to z^{-1}$, one gets 
\begin{equation}
\label{int2}
E_\pm =-\frac{m}{2\pi i} \oint_C \frac{dz (1\pm z^m)\sqrt{g^2 +1 -g(z+z^{-1})} }{z(z^m\mp 1)}.
\end{equation}
Therefore
\begin{equation}
\label{int3}
\delta=E_--E_+=-\frac{2m}{\pi i} \oint_C \frac{dz z^{m-1} \sqrt{g^2 +1 -g(z+z^{-1})} }{1-z^{2m}}.
\end{equation}
The function $\sqrt{g^2 +1 -g(z+z^{-1})}$ is analytic in the complex plane
with cuts along the intervals $[0,g^{-1}]$ and $[g,\infty]$. Deforming the contour $C$ such that
it goes from $0$ to $g^{-1}$ in the upper half-plane and then returns
to $0$ in the lower half-plane one gets 
\begin{equation}
\label{Eint}
\delta=\frac{4m\sqrt{g}}{\pi}\int_0^{g^{-1}} \frac{x^{m-1} \sqrt{x+x^{-1}-g-g^{-1}}dx}{1-x^{2m}}.
\end{equation}
Using a bound
\begin{equation}
\label{eq1}
x+x^{-1}-g-g^{-1} =x^{-1}(g-x)(g^{-1}-x)\ge (g-g^{-1})x^{-1} (g^{-1}-x)
\end{equation}
we obtain
\begin{equation}
\label{eq2}
\delta\ge \frac{4m\sqrt{g^2-1}}{\pi} \int_0^{g^{-1}} x^{m-3/2}\sqrt{g^{-1}-x}dx.
\end{equation}
Making a change of variables $x=g^{-1}y$ one gets
\begin{equation}
\label{eq3}
\delta\ge \frac{4mg^{-m} \sqrt{g^2-1}}{\pi}\int_0^1 dy y^{m-3/2}\sqrt{1-y}
\ge  \Omega(1) g^{-m} m^{-1} \sqrt{g^2-1}.
\end{equation}
Here we noted that the integral over $y$ is equal to the beta function  $B(m-1/2,3/2)=\Omega(m^{-2})$.
Finally, since $\sqrt{g^2-1}=\Omega(m^{-1/2})$, one gets $\delta\ge \Omega(m^{-c-3/2})$.
To get an upper bound in Eq.~(\ref{delta_bound}) we note that 
\[
x^{-1}(g-x)(g^{-1}-x)\le g x^{-1} (g^{-1}-x)\quad  \mbox{and} \quad (1-x^{2m})^{-1} \le (1-g^{-2m})^{-1} \le O(1).
\]
Performing the same change of variable as above one gets
\begin{equation}
\label{eq4}
\delta \le O(mg^{-m+1})\int_0^1 dy y^{m-3/2} \sqrt{1-y} = O(mg^{-m+1})B(m-1/2,3/2) =O(m^{-c-1/2}).
\end{equation}

Let us now prove Eq.~(\ref{xi_bound}).  We shall use the notations of Fact~\ref{fact:TIM2}.
In the limit $g\to \infty$ the Hamiltonian Eq.~(\ref{Hring}) has ground states
$|0^{\otimes m}\rangle$ and $|1^{\otimes m}\rangle$, that is,
\[
|\psi_0\rangle=\frac1{\sqrt{2}}(|0^{\otimes m}\rangle +  |1^{\otimes m}\rangle)
\quad \mbox{and} \quad
|\psi_1\rangle=\frac1{\sqrt{2}}(|0^{\otimes m}\rangle -  |1^{\otimes m}\rangle).
\]
Thus $\xi=\langle \psi_1|Z_j|\psi_0\rangle = 1$.
Since $\xi$ is a real continuous function of $g$, it suffices to show that $|\xi|\ge (1-g^{-2})^{1/8}$
for all $g>1$. 
Let us write $|\xi|=(1-g^{-2})^{1/8} \eta$. Then we have to prove that $\eta\ge 1$.
Since we already know that $\eta=1$ in the limit $g\to \infty$,
it suffices to show that $\eta$ is a monotone decreasing function of $g$ for all $g>1$.
Below we prove that
\begin{equation}
\label{bound2}
\eta^{-1} \frac{\partial \eta}{\partial g} \le 0 \quad \mbox{for all $g>1$}.
\end{equation}
Computing the derivative over $g$ one gets
\[
\dot{\epsilon_p}=\frac12\left( \frac{g}{\epsilon_p} + \frac{\epsilon_p}{g} - \frac1{g\epsilon_p}\right)
\]
and thus
\begin{equation}
\label{der1}
\frac{\partial }{\partial g} \sum_p \sum_q \log{(\epsilon_p+\epsilon_q)}=
\frac{m^2}{2g} + \frac12 (g-g^{-1}) \left( \sum_p \epsilon_p^{-1}\right)
 \left( \sum_q \epsilon_q^{-1}\right).
\end{equation}
Here the sums over $p$ and $q$ can range over either $\ZZ_m$ or $\ZZ_m+1/2$.
Using Eq.~(\ref{der1}) one gets
\begin{equation}
\label{der2}
\eta^{-1} \frac{\partial \eta}{\partial g}=\frac1{16} (g^{-1}-g)(Z_+-Z_-)^2,
\end{equation}
where 
\begin{equation}
\label{Z0Z1}
Z_+\equiv \sum_{p\in \ZZ_m} \epsilon_p^{-1} \quad \mbox{and} \quad
Z_-\equiv \sum_{q\in \ZZ_m+1/2} \epsilon_q^{-1}
\end{equation}
This implies Eq.~(\ref{bound2}) and proves that $\eta\ge 1$ for all $g>1$.

\vspace{1cm}

{\bf Acknowledgments --}
The authors would like to thank Barbara Terhal for helpful discussions 
on perturbative reductions. S.B.  acknowledges NSF Grant CCF-1110941. 

%\bibliographystyle{unsrt}
%\bibliography{mybib}

\end{document}